\DeclareMathAlphabet{\mathbbold}{U}{bbold}{m}{n}
\renewcommand{\backref}[1]{}
\renewcommand{\backrefalt}[4]{%
\ifcase #1 %
\or
[p.\ #2]%
\else
[pp.\ #2]%
\fi}
\newcommand*\rel@kern[1]{\kern#1\dimexpr\macc@kerna}
\newcommand*\widebar[1]{%
  \begingroup
  \def\mathaccent##1##2{%
    \rel@kern{0.8}%
    \overline{\rel@kern{-0.8}\macc@nucleus\rel@kern{0.2}}%
    \rel@kern{-0.2}%
  }%
  \macc@depth\@ne
  \let\math@bgroup\@empty \let\math@egroup\macc@set@skewchar
  \mathsurround\z@ \frozen@everymath{\mathgroup\macc@group\relax}%
  \macc@set@skewchar\relax
  \let\mathaccentV\macc@nested@a
  \macc@nested@a\relax111{#1}%
  \endgroup
}
\renewcommand{\bar}{\widebar}
\newtheorem{theorem}{Theorem}
\newtheorem{lemma}[theorem]{Lemma}
\newtheorem{corollary}[theorem]{Corollary}
\newtheorem{definition}[theorem]{Definition}
\newenvironment{namedtheorem}[2]
	{\begin{trivlist}\item {\bf #1} (#2){\bf .}\em}{\end{trivlist}}
\theoremstyle{definition}
\newtheorem{open}{Open Problem}
\newcommand{\eq}[1]{\hyperref[eq:#1]{(\ref*{eq:#1})}}
\renewcommand{\sec}[1]{\hyperref[sec:#1]{Section~\ref*{sec:#1}}}
\newcommand{\thm}[1]{\hyperref[thm:#1]{Theorem~\ref*{thm:#1}}}
\newcommand{\lem}[1]{\hyperref[lem:#1]{Lemma~\ref*{lem:#1}}}
\newcommand{\defn}[1]{\hyperref[def:#1]{Definition~\ref*{def:#1}}}
\newcommand{\prop}[1]{\hyperref[prop:#1]{Proposition~\ref*{prop:#1}}}
\newcommand{\cor}[1]{\hyperref[cor:#1]{Corollary~\ref*{cor:#1}}}
\newcommand{\fig}[1]{\hyperref[fig:#1]{Figure~\ref*{fig:#1}}}
\newcommand{\tab}[1]{\hyperref[tab:#1]{Table~\ref*{tab:#1}}}
\newcommand{\alg}[1]{\hyperref[alg:#1]{Algorithm~\ref*{alg:#1}}}
\newcommand{\app}[1]{\hyperref[app:#1]{Appendix~\ref*{app:#1}}}
\newcommand{\be}{\begin{equation}}
\newcommand{\ee}{\end{equation}}
\newcommand{\B}{\{0,1\}}
\newcommand{\Ba}{\{0,1,*\}}
\newcommand{\Bao}{\{0,1,*,\dagger\}}
\newcommand{\AND}{\textsc{And}}
\newcommand{\Ind}{\textsc{Ind}}
\newcommand{\OR}{\textsc{Or}}
\DeclareMathOperator{\adeg}{\widetilde{\deg}}
\DeclareMathOperator{\bs}{bs}
\DeclareMathOperator{\RC}{RC}
\DeclareMathOperator{\RS}{RS}
\DeclareMathOperator{\DS}{DS}
\DeclareMathOperator{\QS}{QS}
\DeclareMathOperator{\R}{R}
\DeclareMathOperator{\D}{D}
\DeclareMathOperator{\Q}{Q}
\DeclareMathOperator{\C}{C}
\DeclareMathOperator{\polylog}{polylog}
\DeclareMathOperator{\Dom}{Dom}
\newcommand{\sab}{\mathrm{sab}}
\newcommand{\usab}{\mathrm{usab}}
\newcommand{\ind}{\mathrm{ind}}
\newcommand{\bin}{\mathrm{bin}}
\DeclareMathOperator{\prt}{prt}
\newcommand{\tOmega}{\widetilde{\Omega}}
\newcommand{\CS}{\mathrm{CS}}
\newcommand{\X}{\mathcal{X}}
\newcommand{\Y}{\mathcal{Y}}
\newcommand{\Z}{\mathcal{Z}}
\begin{document}
\title{\vspace{-1em}Randomized query complexity of sabotaged and composed functions}

\author{
Shalev Ben-David\\
\small Massachusetts Institute of Technology\\
\small \texttt{shalev@mit.edu}
\and
Robin Kothari \\
\small Massachusetts Institute of Technology\\
\small \texttt{rkothari@mit.edu}
}

\date{}
\maketitle

\begin{abstract}
We study the composition question for bounded-error randomized query complexity: Is $\R(f \circ g) = \Omega(\R(f)\R(g))$ for all Boolean functions $f$ and $g$? 
We show that inserting a simple Boolean function $h$, whose query complexity is only $\Theta(\log \R(g))$, in between $f$ and $g$ allows us to prove $\R(f\circ h\circ g) = \Omega(\R(f)\R(h)\R(g))$. 

We prove this using a new lower bound measure for randomized query complexity we call randomized sabotage complexity, $\RS(f)$. Randomized sabotage complexity has several desirable properties, such as a perfect composition theorem, $\RS(f \circ g) \geq \RS(f) \RS(g)$, and  a composition theorem with randomized query complexity, $\R(f \circ g) = \Omega(\R(f)\RS(g))$. It is also a quadratically tight lower bound for total functions and can be quadratically superior to the partition bound, the best known general lower bound for randomized query complexity.

Using this technique we also show implications for lifting theorems in communication complexity. We show that a general lifting theorem for zero-error randomized protocols implies a general lifting theorem for bounded-error protocols.
\end{abstract}

 \setlength{\cftbeforesecskip}{1ex}
 \setlength{\cftaftertoctitleskip}{2ex}
 \setlength{\cftbeforetoctitleskip}{2.5ex}
{\small \tableofcontents}

\clearpage

\section{Introduction}
\label{sec:intro}

\subsection{Composition theorems}

A basic structural question that can be asked in any model of computation is whether there can be resource savings when computing the same function on several independent inputs. 
We say a direct sum theorem holds in a model of computation if solving a problem on $n$ independent inputs requires roughly $n$ times the resources needed to solve one instance of the problem. 
Direct sum theorems hold for deterministic and randomized query complexity \cite{JKS10}, fail for circuit size \cite{Pan12}, and remain open for communication complexity \cite{KRW95,BBCR13,FKNN95}.

More generally, instead of merely outputting the $n$ answers, we could compute another function of these $n$ answers. 
If $f$ is an $n$-bit Boolean function and $g$ is an $m$-bit Boolean function, we define the composed function $f\circ g$ to be an $nm$-bit Boolean function such that $f \circ g (x_1, \ldots, x_n) = f(g(x_1),\ldots, g(x_n))$, where each $x_i$ is an $m$-bit string.
The composition question now asks if there can be significant savings in computing $f \circ g$ compared to simply running the best algorithm for $f$ and using the best algorithm for $g$ to evaluate the input bits needed to compute $f$. 
If we let $f$ be the identity function on $n$ bits that just outputs all its inputs, we recover the direct sum problem.

Composition theorems are harder to prove and are known for only a handful of models, such as deterministic~{\cite{Tal13,Mon14}} and quantum query complexity~{\cite{Rei11,LMR+11,Kim12}}. 
More precisely, let $\D(f)$, $\R(f)$, and $\Q(f)$ denote the deterministic, randomized, and quantum query complexities of $f$. Then for all (possibly partial) Boolean%
\footnote{Composition theorems usually fail for trivial reasons for non-Boolean functions. Hence we restrict our attention to Boolean functions, which have domain $\B^n$ (or a subset of $\B^n$) and range $\B$.}
functions $f$ and $g$, we have
\begin{equation}
\D(f\circ g) = \D(f)\D(g) \qquad \mathrm{and} 
\qquad \Q(f \circ g) = \Theta(\Q(f)\Q(g)).
\end{equation}
In contrast, in the randomized setting we only have the upper bound direction, $\R(f \circ g) = O(\R(f)\R(g)\log \R(f))$. Proving a composition theorem for randomized query complexity remains a major open problem. 
\begin{open}
Does it hold that $\R(f \circ g) = \Omega(\R(f)\R(g))$ for all Boolean functions $f$ and $g$?
\end{open}

In this paper we prove something close to a composition theorem for randomized query complexity. 
While we cannot rule out the possibility of synergistic savings in computing $f \circ g$, we show that a composition theorem does hold if we insert a small gadget in between $f$ and $g$ to obfuscate the output of $g$.
Our gadget is ``small'' in the sense that its randomized (and even deterministic) query complexity is $\Theta(\log \R(g))$.
Specifically we choose the index function, which on an input of size  $k+2^k$ interprets the first $k$ bits as an address into the next $2^k$ bits and outputs the bit stored at that address. 
The index function's query complexity is $k+1$ and we choose $k=\Theta(\log \R(g))$.

\begin{theorem}
\label{thm:comp}
Let $f$ and $g$ be (partial) Boolean functions and let $\Ind$ be the index function with $\R(\Ind) = \Theta(\log \R(g))$. Then $\R(f\circ \Ind \circ g) = \Omega(\R(f)\R(\Ind)\R(g)) = \Omega(\R(f)\R(g)\log \R(g))$.
\end{theorem}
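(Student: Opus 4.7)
The plan is to invoke the paper's randomized composition theorem $\R(F \circ H) = \Omega(\R(F)\cdot\RS(H))$ with $F = f$ and $H = \Ind \circ g$, which immediately yields
\[
\R(f\circ\Ind\circ g) \;=\; \Omega\bigl(\R(f)\cdot\RS(\Ind \circ g)\bigr).
\]
Since $\R(\Ind) = \Theta(\log\R(g))$, the theorem reduces to proving the key inequality $\RS(\Ind\circ g) = \Omega(\R(\Ind)\cdot\R(g))$.

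The naive route via the perfect composition theorem $\RS(\Ind \circ g) \ge \RS(\Ind)\cdot\RS(g)$ is not enough, because $\RS(g)$ can be as small as $\sqrt{\R(g)}$ (by quadratic tightness), forfeiting the entire $\R(g)$ factor. Instead, I would embed a direct-sum instance for $g$ into the sabotage task for $\Ind\circ g$. Concretely, given $k = \R(\Ind)$ unsabotaged $g$-inputs $y_1,\ldots,y_k$, place them in the $k$ address blocks of $\Ind\circ g$. In the $2^k$ data blocks, place constant values in all but the block at index $(g(y_1),\ldots,g(y_k))$, which holds a sabotaged $g$-input. This is a valid sabotage instance of $\Ind\circ g$, and all of its ambiguity is concentrated in the single indexed data block.

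The point is that any randomized sabotage-solver on this family of instances must, with constant probability, determine the correct address $(g(y_1),\ldots,g(y_k))$, since the only way to produce a consistent sabotage witness is to locate it inside the indexed data block, and all other data blocks are unambiguous. Computing this address is exactly the $k$-fold direct sum problem for $g$, which by the Jain--Klauck--Santha direct sum theorem for randomized query complexity~\cite{JKS10} requires $\Omega(k\cdot\R(g))$ queries. Combining with the first step gives $\R(f\circ\Ind\circ g) = \Omega(\R(f)\R(g)\log\R(g))$.

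The main obstacle is the last reduction: formally ruling out shortcuts in which the solver finds a witness without determining the address (for example, by scanning each of the $2^k$ data blocks for a star, which could be cheap if stars are abundant). The construction must be robust — likely by randomizing both the address inputs $y_i$ and the placement of the stars in the indexed block, and perhaps planting carefully designed decoys elsewhere — so that any low-query strategy that avoids computing $g$ on each address block fails with constant probability. Once the reduction is set up in a suitably distributional form, the direct sum theorem of~\cite{JKS10} closes the argument.
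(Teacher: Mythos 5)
Your first step coincides with the paper's: apply \cor{R_compose} with $H=\Ind\circ g$, reducing everything to showing $\RS(\Ind\circ g)=\Omega(\R(g)\log\R(g))$, which is exactly the paper's \lem{index}. The gap is in your reduction for that key inequality, and it is not the robustness issue you flag at the end but a circularity at its core. In your hard instances the sabotaged $g$-input must sit in the data block whose index is $(g(y_1),\ldots,g(y_k))$ --- i.e.\ the very answer the extracted algorithm for $g^{\oplus k}$ is supposed to compute. So either the reduction has to build its own input using the answer it does not know, or, if you instead regard the whole sabotage instance as given and argue ``any solver that finds a star thereby learns the address,'' the address has been planted into the instance itself (as the location of the stars), and the solver can learn it through queries to the data array without ever querying $y_1,\ldots,y_k$. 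The direct sum theorem of \cite{JKS10} lower-bounds algorithms whose only access to $(g(y_1),\ldots,g(y_k))$ is via the $y_i$'s, so it simply does not apply to such a solver; and the proposed fixes do not help, since stars placed in a non-indexed data block do not produce a valid sabotaged input of $\Ind\circ g$ (the output is then determined, so the solver has no obligation there), and decoy-free randomization cannot hide the fact that the star location encodes the answer.

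The paper circumvents this by running the (Markov-truncated) sabotage solver on a \emph{star-free} input that carries $y_1,\ldots,y_k$ only in the address copies: a coupling argument shows that with probability $1/2$ some query lands in the data cell indexed by the true answer, but since there are no stars the reduction cannot tell which query that was, and must output the cell of a uniformly random query. This yields an algorithm for $g^{\oplus c}$ with success probability only about $1/(4\RS(\Ind\circ g))$, a regime where a bounded-error direct sum theorem is powerless (blind guessing already succeeds with probability $2^{-c}=1/\poly(\R(g))$ at zero query cost). Closing the argument therefore requires two ingredients absent from your plan: Drucker's strong direct product theorem (\thm{dir_prod}), which trades query count against exponentially small success probability, and the gadget-size condition $m=\Omega(\R(g)^{1.1})$ (equivalently $c\geq 1.1\log\R(g)$), which makes the ``guessing / blind scanning'' branch of the resulting dichotomy cost $\Omega(m^{0.97})=\Omega(\R(g)\log m)$. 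Without these, the step from ``the solver finds a star'' to ``$\Omega(k\,\R(g))$ queries are needed'' does not go through.
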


\thm{comp} can be used instead of a true composition theorem in many applications.
For example, recently a composition theorem for randomized query complexity was needed in the special case when $g$ is the $\AND$ function \cite{ABK15}. 
Our composition theorem would suffice for this application, since the separation shown there
only changes by a logarithmic factor
if an index gadget is inserted between $f$ and $g$.

We prove \thm{comp} by introducing a new lower bound technique for randomized query complexity. This is not surprising since the composition theorems for deterministic and quantum query complexities are also proved using powerful lower bound techniques for these models, namely the adversary argument and the negative-weights adversary bound \cite{HLS07} respectively. 

\subsection{Sabotage complexity}
\label{sec:intro_sab}

To describe the new lower bound technique, consider the problem of computing a Boolean function $f$ on an input $x\in\B^n$ in the query model.
In this model we have access to an oracle, which when queried with an index $i\in [n]$ responds with $x_i\in \B$.

Imagine that a hypothetical saboteur damages the oracle and makes some of the input bits unreadable. For these input bits the oracle simply responds with a $*$.
We can now view the oracle as storing a string $p\in\Ba^n$ as opposed to a string $x\in\B^n$.
Although it is not possible to determine the true input $x$ from the oracle string $p$, it may still be possible to compute $f(x)$ if all input strings consistent with $p$ evaluate to the same $f$ value.
On the other hand, it is not possible to compute $f(x)$ if $p$ is consistent with a $0$-input and a $1$-input to $f$. We call such a string $p\in\Ba^n$ a \emph{sabotaged input}.
For example, let $f$ be the $\OR$ function that computes the logical $\OR$ of its bits. 
Then $p=00\!*\!0$ is a sabotaged input since it is consistent with the $0$-input $0000$ and the $1$-input $0010$. 
However, $p=01\!*\!0$ is not a sabotaged input since it is only consistent with $1$-inputs to $f$.

Now consider a new problem in which the input is promised to be sabotaged (with respect to a function $f$) and our job is to find the location of a $*$.
Intuitively, any algorithm that solves the original problem $f$ when run on a sabotaged input must discover at least one $*$, since otherwise it would answer the same on $0$- and $1$-inputs consistent with the sabotaged input.
Thus the problem of finding a $*$ in a sabotaged input is no harder than the problem of computing $f$, and hence naturally yields a lower bound on the complexity of computing $f$. As we show later, this intuition can be formalized in several models of computation.

As it stands the problem of finding a $*$ in a sabotaged input has multiple valid outputs, as the location of any star in the input is a valid output. For convenience we define a decision version of the problem as follows: Imagine there are two saboteurs and one of them has sabotaged our input. 
The first saboteur, Asterix, replaces input bits with an asterisk ($*$) and the second, Obelix, uses an obelisk ($\dagger$).
Promised that the input has been sabotaged exclusively by one of Asterix or Obelix, our job is to identify the saboteur.
This is now a decision problem since there are only two valid outputs.
We call this decision problem $f_\sab$, the \emph{sabotage problem} associated with $f$.

We now define lower bound measures for various models using $f_\sab$. For example, we can define the \emph{deterministic sabotage complexity of $f$} as $\DS(f)\coloneqq \D(f_\sab)$ and in fact, it turns out that for all $f$, $\DS(f)$ equals $\D(f)$ (\thm{DS}).

We could define the \emph{randomized sabotage complexity of $f$} as $\R(f_\sab)$, but instead we define it as $\RS(f)\coloneqq \R_0(f_\sab)$, where $\R_0$ denotes zero-error randomized query complexity, since $\R(f_\sab)$ and $\R_0(f_\sab)$ are equal up to constant factors (\thm{RSR0S}).  
$\RS(f)$ has the following desirable properties.
\begin{enumerate}[itemsep=0.3ex]
    \item \makebox[10em][l]{(Lower bound for $\R$)} For all $f$, $\R(f) = \Omega(\RS(f))$ \hfill (\thm{RgeqRS})
    \item \makebox[10em][l]{(Perfect composition)} For all $f$ and $g$, $\RS(f\circ g) \geq \RS(f) \RS(g)$ \hfill (\thm{RS_compose})
    \item \makebox[10em][l]{(Composition with $\R$)} For all $f$ and $g$, ${\R}(f\circ g) = \Omega(\R(f) \RS(g))$ \hfill (\thm{R_compose})
    \item \makebox[10em][l]{(Superior to $\prt(f)$)} There exists a total $f$ with $\RS(f) \geq \prt(f)^{2-o(1)}$ \hfill (\thm{comparison})
    \item \makebox[10em][l]{(Superior to $\Q(f)$)} There exists a total $f$ with $\RS(f) = \tOmega(\Q(f)^{2.5})$ \hfill (\thm{comparison})
    \item \makebox[10em][l]{(Quadratically tight)} For all total $f$, $\R(f) = O(\RS(f)^2 \log \RS(f))$ \hfill (\thm{rootR0})
\end{enumerate}
Here $\prt(f)$ denotes the partition bound \cite{JK10,JLV14}, which subsumes most other lower bound techniques such as approximate polynomial degree, randomized certificate complexity, block sensitivity, etc. The only general lower bound technique not subsumed by $\prt(f)$ is quantum query complexity, $\Q(f)$, which can also be considerably smaller than $\RS(f)$ for some functions.
In fact, we are unaware of any total function $f$ for which $\RS(f) = o(\R(f))$, leaving open the intriguing possibility that this lower bound technique captures randomized query complexity for total functions.

\subsection{Lifting theorems}

Using randomized sabotage complexity we are also able to show a relationship between lifting theorems in communication complexity.
A lifting theorem relates the query complexity of a function $f$ with the communication complexity of a related function created from $f$.
Recently, G\"o\"os, Pitassi, and Watson \cite{GPW15} showed that there is a communication problem $G_\Ind$, also known as the two-party index gadget, with communication complexity $\Theta(\log n)$ such that for any function $f$ on $n$ bits, $\D^{\mathrm{cc}}(f \circ G_\Ind) = \Omega(\D(f)\log n)$, where $\D^\mathrm{cc}(F)$ denotes the deterministic communication complexity of a communication problem $F$.

Analogous lifting theorems are known for some complexity measures, but no such theorem is known for either zero-error randomized or bounded-error randomized query complexity.
Our second result shows that a lifting theorem for zero-error randomized query complexity implies one for bounded-error randomized query.
We use $\R_0^\mathrm{cc}(F)$ and $\R^\mathrm{cc}(F)$ to denote the zero-error and bounded-error communication complexities of $F$ respectively.

\begin{theorem}
\label{thm:lifting}
Let $G:\X \times \Y \to \B$ be a communication problem with $\min\{|\X|,|\Y|\}= O(\log n)$. 
If it holds that for all $n$-bit partial functions $f$, 
\begin{equation}
\R_0^{\mathrm{cc}}(f \circ G) = \Omega(\R_0(f)/\polylog n),
\end{equation} 
then for all $n$-bit partial functions $f$, 
\begin{equation}
\R^{\mathrm{cc}}(f \circ G_\Ind) = \Omega(\R(f)/\polylog n),
\end{equation}
where $G_\Ind:\B^b \times \B^{2^b} \to \B$ is the index gadget (\defn{commindex}) with $b = \Theta(\log n)$.
\end{theorem}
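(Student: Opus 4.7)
The plan is to convert a bounded-error communication protocol for $f \circ G_\Ind$ into a zero-error protocol for a sabotaged lifted problem, then apply the hypothesized zero-error lifting theorem. Randomized sabotage complexity serves as the bridge: because $\RS(f) = \R_0(f_\sab)$ and $\R(f) = \Omega(\RS(f))$ by \thm{RgeqRS}, a bounded-error algorithm naturally encodes a zero-error algorithm for the sabotage problem of comparable cost, and I would transport this query-model fact into the communication world.

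Concretely, let $\Pi$ be a bounded-error protocol for $f \circ G_\Ind$ with cost $c$. First amplify its error to $n^{-\omega(1)}$ at cost $O(c \log n)$. Using the boosted $\Pi'$, construct a zero-error protocol for $h \circ G$, where $h$ is a sabotage-type function built from $f$. The idea is to exploit the amplified correctness of $\Pi'$: on a suitably sabotaged input, running $\Pi'$ on two independently random completions of the ambiguous gadgets yields a consistent $f$-value unless a completed coordinate actually matters for $f$, so a disagreement certifies the presence of a sabotaged coordinate in zero error. Iterating over a few rounds identifies the saboteur at only $\polylog n$ overhead. The construction should be arranged so that the resulting protocol operates on $G$-gadgets rather than $G_\Ind$-gadgets, which is where the hypothesis $\min\{|\X|,|\Y|\} = O(\log n)$ is used to simulate $G$ cheaply inside the protocol.

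Applying the zero-error lifting hypothesis to the constructed protocol yields $c \cdot \polylog n = \Omega(\R_0(h)/\polylog n)$. Taking $h = (f \circ \Ind)_\sab$, so that $\R_0(h) = \RS(f \circ \Ind)$, and combining the perfect composition \thm{RS_compose} with the randomized composition \thm{R_compose} (the same ingredients behind \thm{comp}) to argue $\RS(f \circ \Ind) = \tOmega(\R(f))$, gives the desired bound $c = \Omega(\R(f)/\polylog n)$.

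I expect the main obstacle to be the gadget-matching step: the reduction must produce a zero-error protocol whose gadget is the specific $G$ from the hypothesis, not the $G_\Ind$ inherited from $\Pi$. Because the two gadgets differ in structure, routing the sabotage through $G$ while keeping the overhead polylogarithmic is delicate, and this is presumably where the $\min\{|\X|,|\Y|\} = O(\log n)$ constraint gets used. A secondary subtlety lies in the final $\RS \to \R$ amplification, which does not follow from a direct bound and instead must borrow from the sabotage-based composition argument behind \thm{comp} to squeeze an extra $\log n$ factor out of the composition with $\Ind$.
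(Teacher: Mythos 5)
Your high-level skeleton (bounded-error protocol for the lifted function $\Rightarrow$ zero-error protocol for a lifted sabotage problem $\Rightarrow$ apply the $\R_0$ lifting hypothesis $\Rightarrow$ translate sabotage complexity back to $\R(f)$) matches the paper, but the two steps that carry the real content have genuine gaps. First, the final bridge from sabotage complexity to $\R(f)$ cannot come from \thm{RS_compose} together with \thm{R_compose}, as you propose: \thm{RS_compose} lower bounds $\RS(f\circ\Ind)$ only by $\RS(f)\RS(\Ind)$, and \thm{R_compose} lower bounds $\R(f\circ g)$, not $\RS(f\circ g)$. Since the theorem is claimed for \emph{partial} $f$, where $\RS(f)$ can be $O(1)$ while $\R(f)=\Omega(\sqrt n)$ (e.g.\ the collision problem), no combination of those two theorems can yield $\RS(f\circ\Ind)=\tOmega(\R(f))$. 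The paper's actual mechanism is \lem{index}, proved via Drucker's strong direct product theorem (\thm{dir_prod}): a cheap star-finding algorithm for the index-composed function would guess the answer to $f^{\oplus c}$ with probability about $1/\RS$, contradicting the SDPT once $c\gtrsim 1.1\log\R(f)$. Moreover, the lifting proof needs this in the single-star form $\RS_{\mathrm{u}}(f^{\oplus c}_{\ind})=\Omega(c\R(f))$, where the index's array cells are \emph{not} composed with $f$; your choice $h=(f\circ\Ind)_\sab$ is a different function and lacks the unique-sabotage promise the communication reduction relies on.

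Second, your zero-error conversion is not sound as described: running an amplified bounded-error protocol on two random completions and treating a disagreement as a zero-error certificate fails, because the protocol can err or agree spuriously, and zero error for the sabotage problem requires actually exhibiting a position containing $*$ or $\dagger$ so that \lem{repeat} applies. The paper's \lem{communication_sab} supplies the concrete mechanism, and it crucially uses the unique-star promise of $f_\usab$: set half of the starred positions in Bob's input to $0$ and half to $1$, run the bounded-error protocol (amplified only by $O(\log\log n)$), binary search over $\log n$ rounds to locate the special gadget, and then have Alice send the $b$-bit pointer so Bob can verify it points at a star or obelisk; the verified certificate plus \lem{repeat} yields a zero-error protocol with $O(\log n\log\log n)$ overhead. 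Finally, the ``gadget-matching'' obstacle you flag is dispatched by the completeness observation following \defn{commindex}: any $G$ with $\min\{|\X|,|\Y|\}=O(\log n)$ reduces to the index gadget $G_b$, so the $\R_0$ hypothesis may be assumed without loss of generality to hold for index gadgets, and the remaining alphabet mismatch (symbols in $\{0,1,*,\dagger\}$ versus bits) is handled by a two-bit binary encoding, comparing $G'_{2b}$ with two copies of $G_b$.
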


Proving a lifting theorem for bounded-error randomized query complexity remains an important open problem in communication complexity.
Such a theorem would allow the recent separations in communication complexity shown by Anshu et al.~\cite{ABB+16} to be proved
simply by establishing their query complexity analogues, which
was done in \cite{ABK15} and \cite{AKK15}.
Our result shows that it is sufficient to prove a lifting theorem for zero-error randomized protocols instead.

\subsection{Open problems}

The main open problem is to determine whether $\R(f) = \widetilde{\Theta}(\RS(f))$ for all total functions $f$. 
This is known to be false for partial functions, however. 
Any partial function where all inputs in $\Dom(f)$ are far apart in Hamming distance necessarily has low sabotage complexity. 
For example, any sabotaged input to the collision problem\footnote{In the collision problem, we are given an input $x\in[m]^n$, and we have to decide if $x$ viewed as a function from $[n]$ to $[m]$ is 1-to-1 or 2-to-1 promised that one of these holds.} has at least half the bits sabotaged making $\RS(f) = O(1)$, but $\R(f) = \Omega(\sqrt{n})$.

It would also be interesting to extend the sabotage idea to other models of computation and see if it yields useful lower bound measures. 
For example, we can define quantum sabotage complexity as $\QS(f) \coloneqq  \Q(f_\sab)$, but we were unable to show that it lower bounds $\Q(f)$.

\subsection{Paper organization}

In \sec{prelim}, we present some preliminaries and useful properties of randomized algorithms (whose proofs appear in \app{properties} for completeness). We then formally define sabotage complexity in \sec{sabotage} and prove some basic properties of sabotage complexity. 
In \sec{sum_and_composition} we establish the composition properties of randomized sabotage complexity described above (\thm{RS_compose} and \thm{R_compose}). 
Using these results, we establish the main result (\thm{comp}) in \sec{composition}.
We then prove the connection between lifting theorems (\thm{lifting}) in \sec{lifting}.
In \sec{comparison} we compare randomized sabotage complexity with other lower bound measures.
We end with a discussion of deterministic sabotage complexity in \sec{other}.

\section{Preliminaries}
\label{sec:prelim}

In this section we define some basic notions in query complexity that will be used throughout the paper. 
Note that all the functions in this paper have Boolean input and output, except sabotaged functions whose input alphabet is $\Bao$.
For any positive integer $n$, we define $[n]\coloneqq\{1,2,\ldots,n\}$.

In the model of query complexity, we wish to compute an $n$-bit Boolean function $f$
on an input $x$ given query access to the bits of $x$.
The function $f$ may be total, i.e., $f:\B^n\to\B$, or partial,
which means it is defined only on a subset of $\B^n$, which we denote by $\Dom(f)$.
The goal is to output $f(x)$ using as few queries to the bits of $x$ as possible.
The number of queries used by the best possible deterministic
algorithm (over worst-case choice of $x$) is denoted $\D(f)$.

A randomized algorithm is a probability distribution over deterministic algorithms. 
The worst-case cost of a randomized algorithm is the worst-case (over all the deterministic algorithms in its support) number of queries made by the algorithm on any input $x$.
The expected cost of the algorithm is the expected number of queries
made by the algorithm (over the probability distribution) on an input $x$ maximized over all inputs $x$.
A randomized algorithm has error at most $\epsilon$ if it outputs 
$f(x)$ on every $x$ with probability at least $1-\epsilon$.

We use $\R_\epsilon(f)$ to denote the worst-case cost of the best randomized algorithm that computes $f$ with error $\epsilon$. Similarly, we use $\bar{\R}_\epsilon(f)$ to denote the expected cost of the best randomized algorithm that computes $f$ with error $\epsilon$.
When $\epsilon$ is unspecified it is taken to be $\epsilon=1/3$.
Thus $\R(f)$ denotes the bounded-error randomized query complexity of $f$.
Finally, we also define zero-error expected randomized query complexity, $\bar{\R}_0(f)$,
which we also denote by $\R_0(f)$ to be consistent with the literature.
For precise definitions of these measures as well as the definition of quantum query complexity
$\Q(f)$, see the survey by Buhrman and de Wolf~\cite{BdW02}.

\subsection{Properties of randomized algorithms}

We will assume familiarity with the following basic properties of randomized algorithms. 
For completeness, we prove these properties in \app{properties}.

First, we have Markov's inequality, which allows us to convert an algorithm with a guarantee on the expected number of queries into
an algorithm with a guarantee on the maximum number of queries with a constant factor loss in the query bound and a constant factor increase in the error. This can be used, for example, to convert zero-error randomized algorithms into bounded-error randomized algorithms.

\begin{restatable}[Markov's Inequality]{lemma}{markov}
\label{lem:markov}
Let $A$ be a randomized algorithm that makes $T$ queries in expectation (over its internal randomness). Then for any $\delta\in (0,1)$, the
algorithm $A$ terminates within $\lfloor T/\delta\rfloor$ queries with probability at least $1-\delta$.
\end{restatable}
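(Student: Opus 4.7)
The plan is to invoke the classical Markov inequality applied to the integer-valued random variable that counts the queries made by $A$. Fix any input $x$ (in particular, a worst-case one) and let $N$ denote the number of queries $A$ makes on $x$ as a function of its internal randomness; by hypothesis $\mathbb{E}[N] \le T$, and $N$ takes values in the nonnegative integers. The standard Markov inequality then yields $\Pr[N \ge a] \le \mathbb{E}[N]/a \le T/a$ for every $a > 0$, which is the only probabilistic input needed.

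The only step requiring attention is the floor function in the statement. Since $N$ is integer-valued, the event $\{N > \lfloor T/\delta \rfloor\}$ equals the event $\{N \ge \lfloor T/\delta \rfloor + 1\}$. Applying the Markov bound with $a = \lfloor T/\delta \rfloor + 1$, together with the elementary fact $\lfloor T/\delta \rfloor + 1 > T/\delta$, gives
\[
\Pr[N > \lfloor T/\delta \rfloor] \;\le\; \frac{T}{\lfloor T/\delta \rfloor + 1} \;<\; \frac{T}{T/\delta} \;=\; \delta,
\]
so that $A$ terminates within $\lfloor T/\delta \rfloor$ queries with probability strictly greater than $1-\delta$. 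Taking complements of the event yields the claim. There is no real obstacle here: the lemma is essentially a textbook application of Markov's inequality, packaged to emphasize that an expected-cost randomized algorithm can be converted into a worst-case-cost randomized algorithm with a controlled trade-off between the query budget and the failure probability, and the only nonroutine check is that the discreteness of $N$ makes the tightness of the floor harmless.
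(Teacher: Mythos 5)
Your proof is correct and follows essentially the same route as the paper: both arguments note that failing to terminate within $\lfloor T/\delta\rfloor$ queries forces at least $\lfloor T/\delta\rfloor+1$ queries and then use $\lfloor T/\delta\rfloor+1 > T/\delta$ to bound the failure probability by $\delta$. The only cosmetic difference is that you invoke the textbook Markov inequality, while the paper rederives it inline from nonnegativity of the query count.
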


The next property allows us to amplify the success probability of an $\epsilon$-error randomized algorithm.

\begin{restatable}[Amplification]{lemma}{amplification}
\label{lem:amplification}
If $f$ is a function with Boolean output and $A$ is a randomized
algorithm for $f$ with error $\epsilon<1/2$, repeating $A$ several times and taking the majority vote of the outcomes decreases the error. To reach error $\epsilon^\prime>0$, it suffices to repeat the algorithm
$\frac{2\ln(1/\epsilon^\prime)}{(1-2\epsilon)^2}$ times.
\end{restatable}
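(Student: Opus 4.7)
The plan is to reduce the lemma to a standard concentration inequality. Let $k = \bigl\lceil\tfrac{2\ln(1/\epsilon')}{(1-2\epsilon)^2}\bigr\rceil$ be the number of repetitions (taken odd, say, so that the majority is well-defined), and let $A_1,\dots,A_k$ be independent copies of $A$ run on the same input $x$. For each $i$, let $X_i \in \{0,1\}$ be the indicator that $A_i$ outputs the wrong answer; by hypothesis $\mathbb{E}[X_i] \le \epsilon$ and the $X_i$ are independent.

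Next I would observe that the majority vote on $\{A_1(x),\dots,A_k(x)\}$ fails exactly when $\sum_i X_i > k/2$, i.e.\ when the empirical mean $\tfrac{1}{k}\sum_i X_i$ exceeds $1/2$. Since the true mean is at most $\epsilon$, this is a deviation of at least $\tfrac{1}{2}-\epsilon > 0$ above the mean. Hoeffding's inequality (for bounded $[0,1]$-valued independent variables) then bounds this probability by
\[
\Pr\!\left[\sum_i X_i > k/2\right] \;\le\; \exp\!\bigl(-2k(\tfrac{1}{2}-\epsilon)^2\bigr) \;=\; \exp\!\bigl(-\tfrac{k(1-2\epsilon)^2}{2}\bigr).
\]
Substituting the chosen value of $k$ yields error at most $\epsilon'$, which is what the lemma claims.

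I do not anticipate any real obstacle here: the result is a textbook application of Hoeffding's bound, and the stated constant $\tfrac{2}{(1-2\epsilon)^2}$ matches exactly the exponent one obtains from the $2(1/2-\epsilon)^2 = (1-2\epsilon)^2/2$ calculation. The only minor subtlety is the parity of $k$ (so that ``majority'' is defined without a tie-breaking convention), but ceiling $k$ up to the next odd integer only helps. If one wanted to avoid citing Hoeffding, a direct Chernoff/Markov argument on the moment generating function of $\sum_i X_i$ would work as well, and would in fact be more self-contained given that the surrounding material is at the level of Markov's inequality; I would likely present the Hoeffding version since it delivers the stated constant in a single line.
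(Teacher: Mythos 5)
Your proposal is correct: the Hoeffding computation $\Pr[\sum_i X_i \ge k/2] \le \exp(-2k(\tfrac12-\epsilon)^2) = \exp(-k(1-2\epsilon)^2/2)$ does yield error at most $\epsilon'$ for $k = 2\ln(1/\epsilon')/(1-2\epsilon)^2$, matching the stated constant. However, your route differs from the paper's. The paper gives a self-contained elementary argument: it writes the majority-vote error as the binomial tail $\sum_{i=0}^{k}\binom{2k+1}{i}\epsilon^{2k+1-i}(1-\epsilon)^i$, bounds it by $\epsilon\bigl(4\epsilon(1-\epsilon)\bigr)^k = \epsilon\bigl(1-(1-2\epsilon)^2\bigr)^k$, and then uses $\ln(1-x)\le -x$ to solve for $k$; crucially, it also verifies (via the inequality $3(1-2\epsilon)^2 \le 2\ln(1/\epsilon)$) that the \emph{odd} repetition count $2k+1$ can be taken to be at most $\frac{2\ln(1/\epsilon')}{(1-2\epsilon)^2}$, so the stated bound holds exactly as written. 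Your version is shorter and cites a standard concentration inequality, which is a perfectly reasonable trade-off, but note two small points: (i) it imports Hoeffding rather than staying at the Markov-inequality level of the surrounding appendix, and (ii) rounding $k$ up to the next odd integer can exceed the stated repetition count by an additive constant (up to $2$), a slack the paper's choice of $2k+1$ deliberately avoids; this is harmless for the asymptotic applications but slightly weakens the literal statement (and the explicit constant $14$ in the paper's Lemma~\ref{lem:Rexp} is computed from the exact bound). Neither point is a genuine gap, only a loss in the precise constants.
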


Recall that we defined $\bar{\R}_\epsilon(f)$ to be the minimum expected number of queries made by a randomized algorithm
that computes $f$ with error probability at most $\epsilon$.
Clearly, we have $\bar{\R}_\epsilon(f)\leq \R_\epsilon(f)$, since the expected number of queries made by an algorithm is at most the maximum number of queries made by the algorithm.
Using \lem{markov}, we can now relate them in the other direction.

\begin{restatable}{lemma}{Rexpprime}\label{lem:Rexp_prime}
Let $f$ be a partial function, $\delta>0$, and $\epsilon \in [0,1/2)$.
Then we have
$\R_{\epsilon+\delta}(f)\leq \frac{1-2\epsilon}{2\delta}\bar{\R}_{\epsilon}(f) \leq \frac{1}{2\delta}\bar{\R}_{\epsilon}(f)$.
\end{restatable}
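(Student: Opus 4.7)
The plan is to convert a randomized algorithm with an \emph{expected} query bound into one with a \emph{worst-case} query bound via truncation, and to handle the possibility of non-termination by outputting a uniformly random bit. The key tool is Markov's inequality (Lemma~\ref{lem:markov}), which controls the probability that the truncation actually fires.

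Specifically, let $A$ be a randomized algorithm for $f$ with error at most $\epsilon$ and expected query cost $\bar{\R}_{\epsilon}(f)=:\bar T$. For a threshold $T_0$ to be chosen, define $B$ to be the algorithm that simulates $A$ for up to $T_0$ queries, outputs $A$'s answer if $A$ terminates in time, and otherwise outputs a uniformly random bit. Then $B$ makes at most $T_0$ queries in the worst case. Letting $N$ denote the random number of queries $A$ uses on a fixed input $x$, Markov's inequality gives $\Pr[N>T_0]\leq \bar T/T_0$. Writing out $B$'s error,
\[
\Pr[B\neq f(x)] \;=\; \Pr[A\neq f(x),\,N\leq T_0] \,+\, \tfrac{1}{2}\Pr[N>T_0] \;\leq\; \epsilon \,+\, \frac{\bar T}{2T_0},
\]
and setting $T_0=\lceil \bar T/(2\delta)\rceil$ makes the right-hand side at most $\epsilon+\delta$. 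This yields the second inequality, $\R_{\epsilon+\delta}(f)\leq \bar{\R}_{\epsilon}(f)/(2\delta)$.

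For the first, sharper inequality with the factor $(1-2\epsilon)$, I expect a refinement of the same construction to work. The idea is that the penalty paid by the random bit on an aborted run is really only the gap from $A$'s baseline error to $\tfrac{1}{2}$, namely $(\tfrac{1}{2}-\epsilon)$, rather than the full $\tfrac{1}{2}$; rearranging the error identity with this in mind and reoptimizing $T_0$ should produce the tighter factor. An alternative route is to mix $B$ with the trivial zero-query algorithm that outputs a uniformly random bit in a carefully chosen ratio, so that the amount of work done by $B$ effectively scales with the gap $(1-2\epsilon)$ rather than $1$. The main obstacle will be to carry out this accounting cleanly without double-counting the random-bit contribution; I expect this to be technically routine but not entirely automatic, and the weaker bound already suffices wherever the lemma is invoked.
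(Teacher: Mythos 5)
Your proof of the second inequality is correct and is essentially the paper's own argument: truncate the optimal expected-cost algorithm at a fixed number of queries and output a uniform random bit if it has not halted, bounding the error by $\epsilon+\tfrac12\Pr[N>T_0]$ via \lem{markov}. (A minor point: truncating at $T_0=\lfloor\bar{\R}_\epsilon(f)/(2\delta)\rfloor$ rather than at the ceiling, and using $\Pr[N>T_0]=\Pr[N\geq T_0+1]\leq \bar{\R}_\epsilon(f)/(T_0+1)<2\delta$, gives the stated bound $\R_{\epsilon+\delta}(f)\leq\frac{1}{2\delta}\bar{\R}_\epsilon(f)$ without the fractional overshoot that $\lceil\cdot\rceil$ introduces.)

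The first inequality, which you deferred, is where the real issue lies, and your instinct that the accounting is ``not entirely automatic'' is exactly right --- but the situation is worse than you expect: the factor $\frac{1-2\epsilon}{2\delta}$ cannot be obtained by any refinement, because that inequality is false in general. Take $f$ to be the $n$-bit parity function, $\epsilon=1/4$, $\delta=1/5$. The algorithm that with probability $1/2$ reads all $n$ bits and otherwise outputs a uniform random bit has error exactly $1/4$ and expected cost $n/2$ on every input, so $\bar{\R}_{1/4}(f)\leq n/2$; on the other hand, any algorithm of worst-case cost at most $n-1$ errs with probability at least $1/2$ on some input (under the uniform distribution, every decision tree reading at most $n-1$ bits is correct with probability exactly $1/2$), so $\R_{9/20}(f)=n$, whereas the claimed bound would give $\R_{9/20}(f)\leq\frac{1-2\epsilon}{2\delta}\cdot\frac{n}{2}=\frac{5n}{8}$. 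The paper's own proof of this part commits precisely the double-counting you were worried about: it asserts that the truncated algorithm errs with probability at most $\alpha/2+(1-\alpha)\epsilon$ with $\alpha=2\delta/(1-2\epsilon)$, which implicitly assumes that the error of $A$ conditioned on halting within the truncation bound is still at most $\epsilon$; the only bound valid in the worst case is $\epsilon+\alpha/2$, which returns the $\frac{1}{2\delta}$ factor. In the parity example the truncated runs are exactly the error-free ones, so the $(\tfrac12-\epsilon)$ rebate of your first suggestion is unavailable, and mixing with a zero-query guesser never lowers worst-case cost, so your second suggestion cannot help either. The good news is that only the weak inequality is used where it matters (the corollary of \thm{direct_sum} uses $\bar{\R}_\epsilon(f)\geq 2\delta\,\R_{\epsilon+\delta}(f)$); the $(1-2\epsilon)$ form only feeds the specific constants in \lem{Rexp} and \cor{R_compose}, which degrade to somewhat larger constants if one substitutes the weak bound.
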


The next lemma shows how to relate these measures with the same error $\epsilon$ on both sides of the inequality. 
This also shows that $\bar{\R}_\epsilon(f)$ is only a constant factor away from $\R_\epsilon(f)$ for constant $\epsilon$.

\begin{restatable}{lemma}{Rexp}
\label{lem:Rexp}
If $f$ is a partial function,
then for all $\epsilon\in(0,\frac{1}{2})$, we have
$\R_\epsilon(f)\leq
14\frac{\ln(1/\epsilon)}
{(1-2\epsilon)^2}
\bar{\R}_\epsilon(f)$.
When $\epsilon=\frac{1}{3}$, we can improve this to
$\R(f)\leq 10\bar{\R}(f)$.
\end{restatable}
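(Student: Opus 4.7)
The plan is to convert a randomized algorithm $A$ with expected cost $\bar{\R}_\epsilon(f) = T$ and error $\epsilon$ into a worst-case algorithm with error $\epsilon$ by chaining the two tools already at our disposal: \lem{Rexp_prime} trades a small increase in error for a worst-case query bound, and \lem{amplification} drives the error back down via majority vote.

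First, I would apply \lem{Rexp_prime} with a free parameter $\delta > 0$ to produce a worst-case algorithm $A_\delta$ that computes $f$ with error at most $\epsilon + \delta$ and makes at most $\frac{1-2\epsilon}{2\delta}\,T$ queries. Then, assuming $\epsilon + \delta < 1/2$, \lem{amplification} shows that taking the majority vote of $k = \frac{2\ln(1/\epsilon)}{(1-2\epsilon-2\delta)^2}$ independent copies of $A_\delta$ yields an algorithm with error at most $\epsilon$ and worst-case cost
\[
\R_\epsilon(f) \le \frac{(1-2\epsilon)\,\ln(1/\epsilon)}{\delta\,(1-2\epsilon-2\delta)^2}\,\bar{\R}_\epsilon(f).
\]

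The next step is to optimize $\delta$. Differentiating the denominator $\delta(1-2\epsilon-2\delta)^2$ with respect to $\delta$ and setting it to zero gives the critical point $\delta^\ast = (1-2\epsilon)/6$. Substituting this value back simplifies the bound to $\frac{27}{2}\cdot\frac{\ln(1/\epsilon)}{(1-2\epsilon)^2}\bar{\R}_\epsilon(f)$, which is strictly less than $14\,\frac{\ln(1/\epsilon)}{(1-2\epsilon)^2}\bar{\R}_\epsilon(f)$ and settles the general claim.

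The sharpened constant $10$ for $\epsilon = 1/3$ needs a tailored argument, because the generic bound above evaluates to roughly $\tfrac{27}{2}\cdot 9\ln 3 \approx 133$ for this choice of $\epsilon$. Here I would abandon the generic two-step reduction and work with $A$ directly: run a small, concrete number of truncated copies of $A$, each truncated via \lem{markov} at a carefully chosen multiple of $T$, and bound the joint probability of ``budget exceeded'' and ``majority wrong'' by an explicit finite calculation rather than the Chernoff-type bound embedded in \lem{amplification}. I expect the main obstacle to be exactly this last step: shaving the constant all the way down to $10$ requires a hand-tuned choice of the truncation threshold and number of repetitions, and a case analysis tight enough to exploit that $\epsilon = 1/3$ is bounded away from $1/2$ by a fixed amount rather than by an infinitesimal $\delta$.
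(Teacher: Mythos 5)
Your general-$\epsilon$ argument is correct and is essentially the paper's own proof: chain \lem{Rexp_prime} (expected cost at error $\epsilon$ $\to$ worst case at error $\epsilon+\delta$) with \lem{amplification} (back down to error $\epsilon$), then optimize the split. Your optimal choice $\delta^\ast=(1-2\epsilon)/6$ is exactly the paper's choice (there written as taking the intermediate error to be $(1+4\epsilon)/6$), and both yield the same constant $\tfrac{27}{2}\leq 14$ in front of $\tfrac{\ln(1/\epsilon)}{(1-2\epsilon)^2}\bar{\R}_\epsilon(f)$. So the first claim is fully established.

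The gap is the constant $10$ for $\epsilon=1/3$, which is part of the statement and which you explicitly leave unresolved. Moreover, the route you sketch---truncate each of a few copies of $A$ via \lem{markov} and then take a majority---does not appear to reach $10$: truncating a copy at $\alpha\bar{\R}(f)$ queries and outputting a coin flip on non-termination gives per-copy error about $\tfrac13+\tfrac{1}{2\alpha}$, and for the majority of three such copies to err with probability at most $\tfrac13$ one needs per-copy error below roughly $0.387$, forcing $\alpha\gtrsim 9$ and a worst-case cost near $28\bar{\R}(f)$; using more copies or a shared truncation budget gives similar or worse constants. The paper's proof instead reverses the order of the two operations: first take the majority of three \emph{untruncated} runs of $A$, which in the expected-cost model costs only $3\bar{\R}(f)$ and reduces the error to $7/27$, so $\bar{\R}_{7/27}(f)\leq 3\bar{\R}(f)$; then apply \lem{Rexp_prime} a single time with $\epsilon=7/27$ and $\delta=2/27$, giving $\R(f)\leq\tfrac{13}{4}\bar{\R}_{7/27}(f)\leq\tfrac{39}{4}\bar{\R}(f)\leq 10\bar{\R}(f)$. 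The missing idea is thus to amplify \emph{before} truncating, so that the Markov loss is paid only once, with all of the error slack $1/3-7/27=2/27$ spent on that single truncation step.
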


Although these measures are closely related for constant error, $\bar{\R}_\epsilon(f)$ is more convenient than $\R_\epsilon(f)$ for discussing composition and direct sum theorems.

We can also convert randomized algorithms that
find certificates with bounded error into zero-error
randomized algorithms.

\begin{restatable}{lemma}{repeat}
\label{lem:repeat}
Let $A$ be a randomized algorithm that uses $T$ queries in expectation and finds a certificate with probability $1-\epsilon$. Then repeating $A$ when it fails to find a certificate turns it into an algorithm that always finds a certificate (i.e., a zero-error algorithm) that uses at most $T/(1-\epsilon)$ queries in expectation.
\end{restatable}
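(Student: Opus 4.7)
The plan is to let $B$ be the algorithm that runs $A$ with fresh independent randomness, checks whether $A$'s output is actually a certificate (this verification is free because a certificate includes the identities and claimed values of the queried bits, and $A$ has already paid for those queries), and if so outputs it; otherwise $B$ repeats. Since we only stop when $A$ has genuinely produced a certificate, $B$ is zero-error in the sense required. First I would note that $B$ terminates with probability $1$: the probability that $k$ consecutive independent runs all fail is $\epsilon^k\to 0$.

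Next I would compute the expected query cost of $B$ by a one-step recursion. Let $T_B=\mathbb{E}[\text{queries of }B]$. Condition on the outcome of the first invocation of $A$: regardless of whether this invocation succeeds, its expected query cost is exactly $T$ (here I rely on the unconditional expectation $\mathbb{E}[\text{queries of }A]=T$, rather than any conditional expectation, which sidesteps any subtlety about whether $A$ might be cheaper or more expensive on failure runs). With probability $1-\epsilon$ we stop after this first run; with probability $\epsilon$ we restart with completely fresh randomness, and by the memoryless structure of independent repetitions the expected remaining cost is again $T_B$. Hence
\begin{equation*}
T_B \;=\; T + \epsilon\, T_B,
\end{equation*}
which rearranges to $T_B = T/(1-\epsilon)$, as claimed.

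There is essentially no deep obstacle; the only point that needs a moment of care is the justification that the expected cost of the first run is $T$ even though we later condition on success versus failure. This is handled by writing $\mathbb{E}[\text{queries of first run}]$ as a single unconditioned expectation and pulling out the conditional expected \emph{remaining} cost, which is $0$ on success and $T_B$ on failure by independence of subsequent runs. The implicit verification step (checking that $A$'s output is indeed a valid certificate) adds no queries because certificates are self-authenticating once the underlying bits have been revealed.
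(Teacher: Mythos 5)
Your proposal is correct and takes essentially the same route as the paper: repeat $A$ on failure with fresh randomness and bound the expected cost via the one-run decomposition $T_B = T + \epsilon\,\mathbb{E}[\text{remaining cost}]$, using independence of the repetitions and the unconditional expectation $T$ for the first run. The one nuance the paper handles explicitly and you gloss over is that rearranging $T_B = T + \epsilon T_B$ into $T_B = T/(1-\epsilon)$ requires $T_B < \infty$, which termination with probability $1$ alone does not guarantee; the paper notes that the combined algorithm never needs more queries than the input length (so the expectation is finite), and alternatively one can sum the geometric series $\sum_{k\ge 0}\epsilon^k T$ directly to get both finiteness and the bound.
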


Finally, the following lemma is useful for proving lower bounds on randomized algorithms.

\begin{restatable}{lemma}{block}
\label{lem:block}
Let $f$ be a partial function. Let $A$ be a randomized algorithm
that solves $f$ using at most $T$ expected queries and
with error at most $\epsilon$. 
For $x,y\in\Dom(f)$ if $f(x)\neq f(y)$ then
when $A$ is run on $x$,
it must query an entry on which $x$ differs from $y$
with probability at least $1-2\epsilon$.
\end{restatable}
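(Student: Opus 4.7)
The plan is to use a standard coupling argument that compares the executions of $A$ on $x$ and on $y$ under identical internal randomness. Fix a random string $r$ (i.e., fix one deterministic algorithm in the support of $A$) and let $D$ be the set of coordinates where $x$ and $y$ disagree. Define the event $E(r)$ to be: the execution of $A$ on input $x$ using randomness $r$ never queries a coordinate in $D$. The key observation is that $E(r)$ is symmetric in $x$ and $y$: if on input $x$ the algorithm's query sequence avoids $D$, then every queried bit agrees with $y$, so the execution on $y$ with the same randomness $r$ produces the same sequence of queries and the same final output. Hence under $E(r)$, we have $A(x;r)=A(y;r)$.

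Next I would use the error bounds. Since $f(x)\neq f(y)$, the common output $A(x;r)=A(y;r)$ is wrong on at least one of the two inputs whenever $E(r)$ holds. Letting $p=\Pr_r[E(r)]$, this gives
\[
\Pr_r[A(x;r)\neq f(x)] + \Pr_r[A(y;r)\neq f(y)] \;\geq\; p,
\]
because the two error events together cover $E(r)$. Each term on the left is at most $\epsilon$ by the assumed error bound, so $p\leq 2\epsilon$. The probability that $A$ run on $x$ queries some entry of $D$ is exactly $1-p\geq 1-2\epsilon$, which is the desired conclusion.

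The only subtlety is well-definedness of the outputs: a priori a randomized algorithm might not halt. But $A$ uses at most $T$ queries in expectation, so by \lem{markov} it halts almost surely on every input, and $A(x;r)$ and $A(y;r)$ are defined with probability $1$; this is the only place the expected query bound is used, and it is straightforward. I do not foresee a significant obstacle; the main step requiring care is verifying that the coupling event $E(r)$ is symmetric, i.e., that avoiding $D$ on $x$ forces the identical transcript on $y$, which follows immediately by induction on the number of queries made along the path.
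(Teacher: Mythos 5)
Your proof is correct and uses essentially the same idea as the paper's: if no coordinate where $x$ and $y$ differ is queried, the algorithm behaves identically on both inputs, so its output errs on at least one of them, forcing the combined error probability (at most $2\epsilon$) to be at least the probability of missing every differing entry. The paper phrases this with conditional error probabilities $q$ and $r$ rather than an explicit coupling of the internal randomness, but the argument is the same.
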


\section{Sabotage complexity}
\label{sec:sabotage}

We now formally define sabotage complexity. Given a (partial or total) $n$-bit Boolean function $f$,
let $P_f\subseteq\{0,1,*\}^n$
be the set of all partial assignments of $f$ that
are consistent with both a $0$-input and a $1$-input. That is, 
for each $p\in P_f$, there exist $x,y\in\Dom(f)$
such that $f(x)\neq f(y)$ and $x_i=y_i=p_i$ whenever
$p_i\neq*$. Let $\smash{P^\dagger_f}\subseteq\{0,1,\dagger\}^n$
be the same as $P_f$,
except using the symbol $\dagger$ instead of $*$.
Observe that $P_f$ and $\smash{P_f^\dagger}$ are disjoint.
Let $Q_f=P_f \cup \smash{P^\dagger_f}\subseteq\{0,1,*,\dagger\}^n$.
We then define $f_\sab$ as follows.

\begin{definition}\label{defn:sabotage}
Let $f$ be an $n$-bit partial function. We define
$f_\sab:Q_f\to\B$ as $f_\sab(q)=0$ if $q\in P_f$
and $f_\sab(q)=1$ if $q\in \smash{P^\dagger_f}$.
\end{definition}

Note that even when $f$ is a total function, $f_\sab$ is always a partial function.
See \sec{intro_sab} for more discussion and motivation
for this definition. Now that we have defined $f_\sab$,
we can define deterministic and randomized sabotage complexity.

\begin{definition}\label{def:RS}
Let $f$ be a partial function.
Then $\DS(f)\coloneqq \D(f_{\sab})$ and $\RS(f)\coloneqq\R_0(f_{\sab})$.
\end{definition}

We will primarily focus on $\RS(f)$ in this work and only discuss $\DS(f)$ in \sec{other}.
To justify defining $\RS(f)$ as $\R_0(f_{\sab})$
instead of $\R(f_{\sab})$ (or $\bar{\R}(f_\sab)$), we now show these definitions
are equivalent up to constant factors.

\begin{theorem}\label{thm:RSR0S}
Let $f$ be a partial function.
Then $\R_0(f_\sab) \geq \bar{\R}_\epsilon(f_{\sab})\geq
(1-2\epsilon)\R_0(f_{\sab})$.
\end{theorem}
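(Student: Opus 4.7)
The plan is to handle the two inequalities separately. The left-hand inequality $\R_0(f_\sab) \geq \bar{\R}_\epsilon(f_\sab)$ is essentially free: a zero-error expected-cost algorithm is in particular an $\epsilon$-error expected-cost algorithm, so $\bar{\R}_\epsilon(f_\sab) \leq \bar{\R}_0(f_\sab)$, and $\bar{\R}_0(f_\sab)$ is exactly $\R_0(f_\sab)$ by definition. So essentially all of the content lies in the other direction.

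For the right-hand inequality, the key structural observation about $f_\sab$ is that discovering a single $*$ (resp.\ $\dagger$) during execution constitutes a complete certificate for the answer $0$ (resp.\ $1$), since inputs in $P_f$ have only $*$'s and inputs in $P_f^\dagger$ have only $\dagger$'s. Thus any algorithm that queries at least one sabotaged coordinate can output the correct value with certainty.

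The plan is then to take an optimal expected-$\epsilon$-error algorithm $A$ for $f_\sab$, with expected cost $T = \bar{\R}_\epsilon(f_\sab)$, and show that on every input it queries a sabotaged coordinate with probability at least $1-2\epsilon$. To see this, fix $p \in P_f$ and let $q$ be the corresponding element of $P_f^\dagger$ obtained by replacing every $*$ in $p$ with a $\dagger$; then $f_\sab(p) \neq f_\sab(q)$ and $p, q$ differ exactly on the sabotaged coordinates. By \lem{block}, on input $p$ the algorithm $A$ must query one of these coordinates with probability at least $1 - 2\epsilon$, and the symmetric argument handles $p \in P_f^\dagger$.

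Finally, I would apply \lem{repeat}: modify $A$ to halt and output the certified answer the instant it sees a $*$ or $\dagger$, and restart from scratch (with fresh randomness) if it terminates without seeing one. This yields a zero-error algorithm whose expected cost is at most $T/(1-2\epsilon)$, giving $\R_0(f_\sab) \leq \bar{\R}_\epsilon(f_\sab)/(1-2\epsilon)$, which is exactly the stated bound after rearrangement. The only ``obstacle'' is really just the conceptual step of noticing that a sabotaged coordinate doubles as a certificate; after that the two provided lemmas plug in directly.
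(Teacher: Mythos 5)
Your proposal is correct and follows essentially the same route as the paper: the trivial first inequality, then pairing each sabotaged input with its $*\leftrightarrow\dagger$ swap, applying \lem{block} to get that a sabotaged coordinate is found with probability at least $1-2\epsilon$, and invoking \lem{repeat} using the fact that a $*$ or $\dagger$ is a certificate. No substantive differences from the paper's argument.
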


\begin{proof}
The first inequality follows trivially. For the second, 
let $x\in Q_f$ be any valid input to $f_{\sab}$.
Let $x'$ be the input $x$ with asterisks replaced with obelisks and vice versa.
Then since $f_\sab(x)\neq f_\sab(x')$, by \lem{block} any $\epsilon$-error
randomized algorithm that solves $f_\sab$ must find a position on which
$x$ and $x'$ differ with probability at least $1-2\epsilon$.
The positions at which they differ are either asterisks or obelisks.
Since $x$ was an arbitrary input, the algorithm must always
find an asterisk or obelisk with probability at least $1-2\epsilon$.
Since finding an asterisk or obelisk is a certificate for $f_\sab$,
by \lem{repeat}, we get a zero-error algorithm for $f_\sab$ that
uses $\bar{\R}_\epsilon(f_{\sab})/(1-2\epsilon)$ expected queries. Thus 
$\R_0(f_{\sab})\leq \bar{\R}_\epsilon(f_{\sab})/(1-2\epsilon)$,
as desired.
\end{proof}

Finally, we prove that $\RS(f)$ is indeed a lower bound on $\R(f)$, i.e., $\R(f) = \Omega(\RS(f))$.

\begin{theorem}\label{thm:RgeqRS}
Let $f$ be a partial function. Then
$\R_\epsilon(f)\geq\bar{\R}_\epsilon(f)\geq(1-2\epsilon)\RS(f)$.
\end{theorem}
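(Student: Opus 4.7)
The first inequality, $\R_\epsilon(f) \geq \bar{\R}_\epsilon(f)$, is immediate from the definitions, since the worst-case number of queries of any deterministic algorithm in the support of a randomized algorithm dominates the expected number of queries. So the task is to prove $\bar{\R}_\epsilon(f) \geq (1-2\epsilon)\RS(f)$, i.e., to construct a zero-error algorithm for $f_\sab$ from an $\epsilon$-error algorithm $A$ for $f$ whose expected cost is $\bar{\R}_\epsilon(f)$.

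My plan is to simulate $A$ on the sabotaged input. Given $p \in Q_f$, the algorithm for $f_\sab$ will run $A$, answering each query $i$ made by $A$ by querying $p_i$: if $p_i \in \{0,1\}$ the bit is passed to $A$; if $p_i \in \{*, \dagger\}$ the simulation halts and outputs $0$ or $1$ accordingly (which is correct by the definition of $f_\sab$). This simulation stops as soon as a special symbol is found, and never uses more queries than $A$ itself would, so its expected cost is at most $\bar{\R}_\epsilon(f)$.

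The key observation is that since $p \in Q_f$, there exist $x^0, x^1 \in \Dom(f)$ with $f(x^0)=0$ and $f(x^1)=1$ that agree with $p$ on every non-special coordinate. The positions where $x^0$ and $x^1$ disagree are a subset of the special positions of $p$. Until the simulation encounters a special symbol, its transcript is distributed identically to a run of $A$ on the honest input $x^0$. By \lem{block} applied to $x^0, x^1$, the run of $A$ on $x^0$ must query a position where $x^0$ and $x^1$ differ with probability at least $1-2\epsilon$; any such query lands on a special symbol of $p$, so the simulation succeeds in finding a certificate (a $*$ or $\dagger$) with probability at least $1-2\epsilon$.

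Finally, \lem{repeat} converts this bounded-error certificate-finding algorithm into a zero-error one for $f_\sab$ at the cost of a factor $1/(1-2\epsilon)$ in expected queries, giving $\R_0(f_\sab) \leq \bar{\R}_\epsilon(f)/(1-2\epsilon)$, which is the desired bound. The main conceptual step is the coupling argument that identifies the simulated run on $p$ with an honest run on $x^0$; after that, everything reduces to the already-established lemmas on randomized algorithms. No delicate calculation is required, since the quantitative factor $(1-2\epsilon)$ is dictated directly by \lem{block}.
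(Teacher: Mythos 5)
Your proof is correct and follows essentially the same route as the paper: run the $\epsilon$-error algorithm for $f$ on the sabotaged input, show it locates a $*$ or $\dagger$ with probability at least $1-2\epsilon$, and then apply \lem{repeat} to obtain a zero-error algorithm for $f_\sab$ with the factor $1/(1-2\epsilon)$. The only cosmetic difference is that you obtain the $1-2\epsilon$ bound by coupling the simulated run on the sabotaged input with a run on the consistent $0$-input and invoking \lem{block}, whereas the paper re-derives that bound inline via the error probabilities $q(1-p)\leq\epsilon$ and $(1-q)(1-p)\leq\epsilon$; the two arguments are interchangeable.
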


\begin{proof}
Let $A$ be a randomized algorithm for $f$
that uses $\bar{\R}_\epsilon(f)$
randomized queries and outputs the correct answer on every input
in $\Dom(f)$ with probability at least $1-\epsilon$.
Now fix a sabotaged input $x$, and let $p$ be the probability
that $A$ finds a $*$ or $\dag$ when run on $x$.
Let $q$ be the probability that $A$ outputs $0$ if it
doesn't find a $*$ or $\dag$ when run on $x$.
Let $x_0$ and $x_1$ be inputs consistent with $x$ such that
$f(x_0)=0$ and $f(x_1)=1$.
Then $A$ outputs $0$ on $x_1$ with probability at least
$q(1-p)$, and $A$ outputs $1$ on $x_0$ with probability
at least $(1-q)(1-p)$. These are both errors, so we have
$q(1-p)\leq\epsilon$ and $(1-q)(1-p)\leq\epsilon$.
Summing them gives
$1-p\leq 2\epsilon$, or $p\geq 1-2\epsilon$.

This means $A$ finds a $*$ entry within
$\bar{\R}_\epsilon(f)$ expected queries with probability
at least $1-2\epsilon$. By \lem{repeat}, we get
$\frac{1}{1-2\epsilon}\bar{\R}_\epsilon(f)\geq \RS(f)$,
or $\bar{\R}_\epsilon(f)\geq(1-2\epsilon)\RS(f)$.
\end{proof}

We also define a variant of $\RS$ where the number
of asterisks (or obelisks) is limited to one. Specifically,
let $U\subseteq\{0,1,*,\dagger\}^n$ be the set of all
partial assignments with exactly one $*$ or $\dagger$.
Formally, $U \coloneqq \{x\in\Bao^n: |\{i\in[n]:x_i \notin \B\}| = 1\}$.

\begin{definition}\label{def:RS1}
Let $f$ be an $n$-bit partial function. We define
$f_\usab$ as the restriction of $f_\sab$ to $U$,
the set of strings with only one asterisk or obelisk.
That is, $f_\usab$ has domain $Q_f\cap U$, but is equal
to $f_\sab$ on its domain. We then define
$\RS_{\mathrm{u}}(f)\coloneqq \R_0(f_\usab)$. If $Q_f\cap U$ is empty,
we define $\RS_{\mathrm{u}}(f)\coloneqq 0$.
\end{definition}

The measure $\RS_{\mathrm{u}}$ will play a key role in our lifting result in \sec{lifting}.
Since $f_\usab$ is a restriction of $f_\sab$ to a promise,
it is clear that its zero-error randomized query complexity
cannot increase, and so $\RS_{\mathrm{u}}(f)\leq\RS(f)$. 
Interestingly, when $f$ is total, $\RS_{\mathrm{u}}(f)$ equals $\RS(f)$.
In other words, when $f$ is total, we may assume without
loss of generality that its sabotaged version has only one
asterisk or obelisk.

\begin{theorem}\label{thm:RS1}
If $f$ is a total function, then $\RS_{\mathrm{u}}(f)=\RS(f)$.
\end{theorem}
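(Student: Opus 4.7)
The inequality $\RS_{\mathrm{u}}(f) \le \RS(f)$ is immediate because $f_\usab$ is a restriction of $f_\sab$, so I focus on the reverse inequality $\RS(f) \le \RS_{\mathrm{u}}(f)$. Let $A$ be an optimal zero-error algorithm for $f_\usab$, so $A$'s expected cost is at most $T \coloneqq \RS_{\mathrm{u}}(f)$ on every input in $\Dom(f_\usab)$. I will build an algorithm $B$ for $f_\sab$ by simulating $A$ on the input $q \in Q_f$: answer a query to position $j$ by returning $q_j$ whenever $q_j \in \B$, and otherwise terminate immediately, outputting $0$ if $q_j = *$ and $1$ if $q_j = \dagger$. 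Correctness is automatic because each $q \in Q_f$ contains either only stars or only daggers.

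The main task is to bound the expected number of queries of $B$ by $T$. Fix $q \in P_f$ (the dagger case is symmetric) and let $S$ denote the set of starred positions of $q$. The key subclaim---where totality of $f$ is used---is that there exists a single-star input $\tilde{q} \in \Dom(f_\usab)$ with $\tilde{q}_j = q_j$ for every $j \notin S$. To build one, choose completions $x, y \in \B^n$ of $q$ with $f(x) \ne f(y)$ (guaranteed by $q \in P_f$). Since $x$ and $y$ agree outside $S$, there is a path from $x$ to $y$ in the Boolean hypercube that flips only coordinates in $S$; along this path $f$ must change value on some edge, say on flipping coordinate $i^* \in S$ at intermediate vertex $z$. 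Setting $\tilde{q}_{i^*} \coloneqq *$ and $\tilde{q}_j \coloneqq z_j$ for $j \ne i^*$ gives a single-star input agreeing with $q$ outside $S$, and its two completions at $i^*$ yield different $f$-values, so $\tilde{q} \in \Dom(f_\usab)$.

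With such a $\tilde{q}$ in hand, I couple the simulation of $B$ on $q$ against an actual run of $A$ on $\tilde{q}$. Fix a realization $r$ of $A$'s internal randomness so that $A_r$ is deterministic, and let $k_r$ and $C_r(\tilde{q})$ denote the numbers of queries made by $B$ and by $A_r$ on $\tilde{q}$, respectively. A simple induction on the step index shows that the query sequences in the two runs coincide for every step up to $k_r$: both runs have received identical answers so far, because at each step $s < k_r$ the simulation queries some $i_s \notin S$ and by construction $\tilde{q}_{i_s} = q_{i_s}$. In particular $A_r$ also makes its $k_r$-th query on $\tilde{q}$, so $k_r \le C_r(\tilde{q})$ pointwise in $r$. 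Taking expectations, the expected cost of $B$ on $q$ is at most $E_r[C_r(\tilde{q})] \le T$, and $B$ terminates with probability one since $A$ does on $\tilde{q}$; hence $\RS(f) \le T = \RS_{\mathrm{u}}(f)$.

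The crux of the plan---and the step I expect to be the main obstacle---is the subclaim producing $\tilde{q}$. Totality of $f$ is precisely what lets a Boolean-cube path between $0$- and $1$-inputs traverse a sensitive edge; without it the claim can fail, matching the restriction of the theorem to total functions.
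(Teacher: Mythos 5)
Your construction is the same as the paper's: build a single-star completion $\tilde q$ of the sabotaged input via a hypercube path between two completions crossing a sensitive edge (this is exactly where totality enters, as in the paper), then simulate the zero-error $f_\usab$ algorithm on $q$, halting when a $*$ or $\dagger$ is queried, and couple that simulation with the run on $\tilde q$ to bound the expected cost. The cost analysis and the construction of $\tilde q$ are fine.

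There is, however, one step you have not justified, and it is precisely the step the paper makes explicit: you must rule out the event that some deterministic realization $A_r$ halts and outputs an answer without ever querying a position of $S$. In that event $B$ has found no $*$ or $\dagger$, so your ``correctness is automatic'' claim does not apply and $B$ has no guaranteed-correct output; your sentence ``$B$ terminates with probability one since $A$ does on $\tilde q$'' addresses termination, which was never in doubt (decision trees always halt), rather than the point at issue, which is that $B$ always \emph{finds} a star or dagger. The missing argument is short: let $\tilde q'$ be $\tilde q$ with its unique $*$ replaced by $\dagger$; both lie in $\Dom(f_\usab)$ with different values, and they differ only at $i^*$, so any deterministic tree in the support of a zero-error algorithm for $f_\usab$ must query $i^*$ when run on $\tilde q$ (otherwise it errs on one of the two). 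Since $i^*\in S$, the run of $A_r$ on $\tilde q$ queries some position of $S$ at some step; by your coupling the two runs coincide up to the first such query, so $B$'s simulation on $q$ reaches that same query and halts there having found a $*$ or $\dagger$. With this observation inserted, your argument is complete and coincides with the paper's proof of $\RS(f)\le\RS_{\mathrm{u}}(f)$.
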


\begin{proof}
We already argued that $\RS(f)\geq\RS_{\mathrm{u}}(f)$. To show
$\RS_{\mathrm{u}}(f)\geq\RS(f)$, we argue that any zero-error algorithm
$A$ for $f_\usab$ also solves $f_\sab$.
The main observation we need is that any input to $f_\sab$
can be completed to an input to $f_\usab$ by replacing
some asterisks or obelisks with $0$s and $1$s.
To see this, let $x$ be an input to $f_\sab$. Without
loss of generality, $x\in P_f$. Then there are two strings
$y,z\in\Dom(f)$ that are consistent with $x$,
satisfying $f(y)=0$ and $f(z)=1$.

The strings $y$ and $z$
disagree on some set of bits $B$, and $x$ has a $*$ or $\dag$
 on all of $B$. Consider starting with $y$
and flipping the bits of $B$ one by one, until we reach
the string $z$. At the beginning, we have $f(y)=0$,
and at the end, we reach $f(z)=1$. This means that
at some point in the middle, we must have flipped a bit
that flipped the string from a $0$-input to a $1$-input.
Let $w_0$ and $w_1$ be the inputs where this happens.
They differ in only one bit. If we replace that bit with $*$ or $\dag$,
we get a partial assignment $w$ consistent with both,
so $w\in P_f$. Moreover, $w$ is consistent with $x$.
This means we have completed an arbitrary input to
$f_\sab$ to an input to $f_\usab$, as claimed.

The algorithm $A$, which correctly solves $f_\usab$, when run on $w$ (a valid input to $f_\usab$) must find an asterisk or obelisk in $w$.
Now consider running $A$ on the input $x$ to $f_\sab$ and compare its execution to when it is run on $w$. If $A$ ever queries a position that is different in $x$ and $w$, then it has found an asterisk or obelisk and the algorithm can now halt. If not, then it must find the single asterisk or obelisk present in $w$, which is also present in $x$.
This shows that the slightly modified version of $A$ that halts if it queries an asterisk or obelisk solves $f_\sab$ and hence
$\RS(f)=\R_0(f_\sab)\leq\R_0(f_\usab)=\RS_{\mathrm{u}}(f)$.
\end{proof}

\section{Direct sum and composition theorems}
\label{sec:sum_and_composition}

In this section, we establish the main composition theorems
for randomized sabotage complexity.
To do so, we first need to establish direct sum theorems
for the problem $f_{\sab}$. In fact,
our direct sum theorems hold more generally for 
zero-error randomized query complexity of partial functions 
(and even relations).
To prove this, we will require Yao's minimax theorem \cite{Yao77}.

\begin{theorem}[Minimax]\label{thm:yao}
Let $f$ be an $n$-bit partial function.
There is a distribution $\mu$ over inputs in $\Dom(f)$
such that all zero-error algorithms for $f$ 
use at least $\R_0(f)$ expected queries on $\mu$.
\end{theorem}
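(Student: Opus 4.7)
The plan is to cast $\R_0(f)$ as the value of a two-player zero-sum game and apply Von Neumann's minimax theorem. Let $\mathcal{D}$ denote the set of deterministic query algorithms (decision trees) that output $f(x)$ on every $x \in \Dom(f)$. Without loss of generality each such tree has depth at most $n$, since once all $n$ input bits are known the value of $f$ is determined; hence $\mathcal{D}$ is finite. By definition, a zero-error randomized algorithm for $f$ is a probability distribution $\pi$ over $\mathcal{D}$, and its expected query cost on an input $x \in \Dom(f)$ is $\mathbb{E}_{D \sim \pi}[\mathrm{cost}(D,x)]$, where $\mathrm{cost}(D,x)$ is the number of queries $D$ makes on $x$.

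Now consider the zero-sum game where the algorithm player picks $\pi$ over $\mathcal{D}$, the adversary picks a distribution $\mu$ over $\Dom(f)$, and the adversary receives the payoff $\mathbb{E}_{D \sim \pi,\, x \sim \mu}[\mathrm{cost}(D,x)]$. Since both strategy sets are simplices over finite sets, Von Neumann's minimax theorem yields
\[
\min_{\pi} \max_{\mu} \mathbb{E}_{D,x}[\mathrm{cost}(D,x)] \;=\; \max_{\mu} \min_{\pi} \mathbb{E}_{D,x}[\mathrm{cost}(D,x)].
\]
On the left, the inner maximum over $\mu$ is attained at a pure input $x$, so the left-hand side reduces to $\min_{\pi} \max_{x \in \Dom(f)} \mathbb{E}_{D \sim \pi}[\mathrm{cost}(D,x)]$, which is exactly $\R_0(f)$ (recall $\R_0$ denotes expected query cost in this paper). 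On the right, the inner minimum over $\pi$ is attained at a pure deterministic algorithm $D$, so the right-hand side equals $\max_{\mu} \min_{D \in \mathcal{D}} \mathbb{E}_{x \sim \mu}[\mathrm{cost}(D,x)]$. Since the simplex of distributions on the finite set $\Dom(f)$ is compact and the inner function is the minimum of finitely many linear functions of $\mu$ (hence concave and continuous), this maximum is attained at some $\mu^\ast$. The distribution $\mu^\ast$ then has the property that every $D \in \mathcal{D}$ uses at least $\R_0(f)$ expected queries on $\mu^\ast$, which is exactly the conclusion.

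The main subtlety, and really the only nontrivial point, is to insist that the algorithm player's strategy set be precisely $\mathcal{D}$ (the \emph{zero-error} deterministic algorithms) rather than all decision trees: only then are mixtures over the strategy set themselves zero-error, keeping the left-hand side of the minimax equation equal to $\R_0(f)$ by definition. Once $\mathcal{D}$ is set up this way and observed to be finite, the minimax step itself is standard and requires no further estimates.
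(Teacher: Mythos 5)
Your proof is correct. The key observation you flag --- that a randomized algorithm with error exactly zero must be supported entirely on deterministic trees that are correct on all of $\Dom(f)$, so the strategy set is the simplex over the finite set $\mathcal{D}$ of zero-error trees of depth at most $n$ --- is the right one, and with it the finite-game Von Neumann minimax argument goes through: the left-hand value is $\min_{\pi}\max_{x}\mathbb{E}_{D\sim\pi}[\mathrm{cost}(D,x)]=\R_0(f)$ (since $\R_0$ here is expected cost), and the optimal $\mu^\ast$ on the right forces every $D\in\mathcal{D}$, hence every mixture of them, to spend at least $\R_0(f)$ expected queries.

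Your route differs from the paper's. The paper states this theorem with a citation to Yao and, in its appendix, proves the more general statement for $\bar{\R}_\epsilon(f)$ with arbitrary $\epsilon\geq 0$: there the algorithm player's set is $S_\epsilon$, the set of \emph{all} randomized algorithms with error at most $\epsilon$, which is a compact convex subset of the simplex over all decision trees but (for $\epsilon>0$) not itself a simplex over ``$\epsilon$-error deterministic trees''; accordingly the paper invokes Sion's minimax theorem for a bilinear payoff on compact convex sets rather than the finite-game theorem. What your approach buys is elementarity: for the zero-error case the finite Von Neumann theorem suffices and the bookkeeping is cleaner. What the paper's approach buys is generality: the same proof covers every $\epsilon$, which the paper needs elsewhere (the direct sum theorem for $\bar{\R}_\epsilon$ uses the $\epsilon$-error minimax). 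Your argument does not directly extend to $\epsilon>0$, precisely because an $\epsilon$-error randomized algorithm need not be a mixture of $\epsilon$-error deterministic algorithms; but for the statement as given (zero error) it is complete.
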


We call any distribution $\mu$ that satisfies the assertion in Yao's theorem a \emph{hard distribution} for $f$.

\subsection{Direct sum theorems}
\label{sec:sum}

We start by defining the $m$-fold direct sum of a function $f$, which is simply the function that accepts $m$ inputs to $f$ and outputs $f$ evaluated on all of them.

\begin{definition}\label{def:sum}
Let $f:\Dom(f)\to \Z$, where $\Dom(f)\subseteq \X^n$, be a partial function
with input and output alphabets $\X$ and $\Z$.
The $m$-fold direct sum of $f$ is the partial function $f^{\oplus m}:\Dom(f)^m\to\Z^m$
such that for any $(x_1,x_2,\ldots,x_m) \in \Dom(f)^m$, we have
\be 
f^{\oplus m}(x_1,x_2,\ldots,x_m)=(f(x_1),f(x_2),\ldots,f(x_m)).
\ee 
\end{definition}

We can now prove a direct sum theorem for zero-error randomized and more generally $\epsilon$-error expected randomized 
query complexity, although we only require the result about zero-error algorithms. 
We prove these results for partial functions, but they also hold for arbitrary relations. 

\begin{theorem}[Direct sum]\label{thm:direct_sum}
For any $n$-bit partial function $f$ and any positive
integer $m$, we have
$\R_0(f^{\oplus m})=m\R_0(f)$. Moreover,
if $\mu$ is a hard distribution for $f$
given by \thm{yao}, then $\mu^{\otimes m}$ is
a hard distribution for $f^{\oplus m}$.
Similarly, for $\epsilon$-error randomized algorithms we get $\bar{\R}_\epsilon(f^{\oplus m}) \geq m\bar{\R}_\epsilon(f)$.
\end{theorem}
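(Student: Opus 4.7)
The plan is to handle the upper bound $\R_0(f^{\oplus m}) \leq m\R_0(f)$ trivially, by running the best zero-error algorithm for $f$ independently on each of the $m$ instances. For the matching lower bound, I would use Yao's minimax principle (\thm{yao}) together with a standard embedding argument: assuming $\mu$ is a hard distribution for $f$, I would show that any zero-error algorithm $A$ for $f^{\oplus m}$ must use at least $m\R_0(f)$ expected queries when its input is drawn from the product distribution $\mu^{\otimes m}$. Since $\R_0(f^{\oplus m})$ is at least the minimum expected cost of any zero-error algorithm on any input distribution, this simultaneously yields the lower bound and establishes that $\mu^{\otimes m}$ is hard for $f^{\oplus m}$.

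The core step is the embedding. Given a zero-error algorithm $A$ for $f^{\oplus m}$ and a coordinate $i \in [m]$, I would construct a randomized algorithm $A_i$ for $f$ as follows: on input $x_i \in \Dom(f)$, sample $x_j \sim \mu$ independently for each $j \neq i$, then simulate $A$ on the tuple $(x_1,\ldots,x_m)$, answering queries to block $j \neq i$ from the sampled $x_j$ and routing queries to block $i$ to the true oracle for $x_i$. When $A$ halts with output $(y_1,\ldots,y_m)$, output $y_i$. Since $A$ is zero-error on every input in $\Dom(f)^m$ and each $x_j$ sampled from $\mu$ lies in $\Dom(f)$, the algorithm $A_i$ is a zero-error algorithm for $f$ on every $x_i \in \Dom(f)$.

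Now I would apply \thm{yao} to each $A_i$: because $\mu$ is hard for $f$, the expected number of queries made by $A_i$ when $x_i \sim \mu$ is at least $\R_0(f)$. By construction, this quantity equals the expected number of queries that $A$ makes to block $i$ when its input is drawn from $\mu^{\otimes m}$. Summing over $i \in [m]$ and using linearity of expectation shows that the total expected number of queries $A$ makes on $\mu^{\otimes m}$ is at least $m\R_0(f)$. As $A$ was arbitrary, we obtain $\R_0(f^{\oplus m}) \geq m\R_0(f)$, and along the way we have shown that $\mu^{\otimes m}$ is a hard distribution for $f^{\oplus m}$.

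The $\bar{\R}_\epsilon$ version follows from exactly the same construction: if $A$ errs with probability at most $\epsilon$ on every input in $\Dom(f)^m$, then for each fixed $x_i \in \Dom(f)$ the $i$th coordinate of $A$'s output is correct with probability at least $1-\epsilon$ (over $A$'s randomness and the sampling of the other blocks), so $A_i$ is an $\epsilon$-error algorithm for $f$. The main obstacle I anticipate is that the paper states Yao's minimax only for zero-error algorithms; for the bounded-error analog I would need to invoke the corresponding minimax theorem for $\bar{\R}_\epsilon$, asserting that some distribution $\nu$ on $\Dom(f)$ forces every $\epsilon$-error randomized algorithm to use at least $\bar{\R}_\epsilon(f)$ expected queries on $\nu$. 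This follows from the same LP-duality argument underlying Yao's theorem, applied to the expected-cost objective in place of the worst-case cost. Using this bounded-error minimax in lieu of \thm{yao} and summing over $i$ as before gives $\bar{\R}_\epsilon(f^{\oplus m}) \geq m\bar{\R}_\epsilon(f)$.
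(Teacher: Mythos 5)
Your proof is correct and follows essentially the same route as the paper: plant the real instance among $m-1$ fake instances drawn from the hard distribution $\mu$ given by Yao's theorem, and charge the queries of the direct-sum algorithm block by block. The only cosmetic difference is that the paper shuffles the real instance into a random position and uses symmetry to argue each block receives $T/m$ expected queries, whereas you fix each coordinate $i$ in turn and sum the per-block lower bounds of $\R_0(f)$; for the $\bar{\R}_\epsilon$ case the paper invokes exactly the minimax theorem for expected-cost, $\epsilon$-error algorithms that you anticipated (its \thm{ReYao} in Appendix~B).
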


\begin{proof}
The upper bound follows from running the $\R_0(f)$
algorithm on each of the $m$ inputs to $f$. By linearity
of expectation, this algorithm solves all $m$ inputs
after $m\R_0(f)$ expected queries.

We now prove the lower bound. Let $A$ be a zero-error
randomized algorithm for $f^{\oplus m}$ that uses $T$ expected
queries when run on inputs from $\mu^{\otimes m}$.
We convert $A$ into an algorithm $B$ for $f$
that uses $T/m$ expected queries when run on inputs from $\mu$.

Given an input $x\sim\mu$, the algorithm $B$ generates
$m-1$ additional ``fake'' inputs from $\mu$.
$B$ then shuffles these together with $x$, and runs $A$
on the result. The input to $A$ is then distributed
according to $\mu^{\otimes m}$, so $A$ uses $T$ queries
(in expectation) to solve all $m$ inputs. $B$ then reads the
solution to the true input $x$.

Note that most of the queries $A$ makes are to fake inputs,
so they don't count as real queries. The only real queries $B$
has to make happen when $A$ queries $x$. But since $x$ is
shuffled with the other (indistinguishable) inputs,
the expected number of queries $A$ makes to $x$ is the
same as the expected number of queries $A$ makes to each
fake input; this must equal $T/m$. Thus $B$ makes $T/m$
queries to $x$ (in expectation) before solving it.

Since $B$ is a zero-error randomized algorithm for $f$
that uses $T/m$ expected queries on inputs from $\mu$,
we must have $T/m\geq\R_0(f)$ by \thm{yao}.
Thus $T\geq m\R_0(f)$, as desired.

The same lower bound proof carries through for $\epsilon$-error expected query complexity, $\bar{\R}_\epsilon(f)$, as long as we use a version of Yao's theorem for this model. For completeness, we prove this version of Yao's theorem in \app{yao}.
\end{proof}

\thm{direct_sum} is essentially \cite[Theorem 2]{JKS10}, but our theorem statement looks different since we deal with expected query complexity instead of worst-case query complexity. From \thm{direct_sum}, we can also prove a direct sum theorem for worst-case randomized query complexity since for $\epsilon\in(0,1/2)$,
\begin{equation}
\R_\epsilon(f^{\oplus m}) \geq \bar{\R}_\epsilon(f^{\oplus m}) \geq m \bar{\R}_\epsilon(f) \geq 2\delta m \R_{\epsilon+\delta}(f),
\end{equation}
for any $\delta>0$, where the last inequality used \lem{Rexp_prime}.

For our applications, however,
we will need a strengthened version of this theorem,
which we call a threshold direct sum theorem.

\begin{theorem}[Threshold direct sum]\label{thm:thresh_sum}
Given an input to $f^{\oplus m}$ sampled from $\mu^{\otimes m}$,
we consider solving only some of the $m$ inputs to $f$.
We say an input $x$ to $f$ is solved if a $z$-certificate was queried that proves
$f(x)=z$. Then any randomized algorithm
that takes an expected $T$ queries and solves an expected
$k$ of the $m$ inputs when run on inputs from $\mu^{\otimes m}$
must satisfy $T\geq k\R_0(f)$.
\end{theorem}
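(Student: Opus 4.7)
The plan is to reduce the threshold statement to the already-proved direct sum theorem \thm{direct_sum} by ``completing'' the partial algorithm $A$ into a full zero-error algorithm for $f^{\oplus m}$ via per-position fallback subroutines. For each $i \in [m]$, let $\chi_i \in \{0,1\}$ indicate whether $A$ solves the $i$th input (so $\sum_i E[\chi_i] = k$), and let $\mu_i^{\mathrm{uns}}$ denote the conditional distribution of $X_i$ given the event $\chi_i = 0$, where the probability is over $X_{-i} \sim \mu^{\otimes(m-1)}$ and the internal coins of $A$.

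Next, for each $i$, I would fix $F_i$ to be a deterministic zero-error algorithm for $f$ that is optimal on $\mu_i^{\mathrm{uns}}$, so that $E_{X_i \sim \mu_i^{\mathrm{uns}}}[F_i(X_i)] = \R_0(f, \mu_i^{\mathrm{uns}})$, which is at most $\R_0(f)$ by Yao's theorem (\thm{yao}). I would then define $A'$ to run $A$ and, for every position $i$ that $A$ left unsolved, run $F_i$ on $X_i$ to query a certificate for $f(X_i)$. By construction $A'$ is a zero-error algorithm for $f^{\oplus m}$, so applying \thm{direct_sum} (which identifies $\mu^{\otimes m}$ as a hard distribution for $f^{\oplus m}$) gives that the expected cost of $A'$ on $\mu^{\otimes m}$ is at least $m\R_0(f)$.

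The third step is to upper-bound that same expected cost by $T + (m-k)\R_0(f)$. Decomposing $E[\mathrm{cost}(A')] = T + \sum_i E[(1-\chi_i) F_i(X_i)]$ and conditioning each summand on $X_i$, it works out to $(1 - E[\chi_i]) \R_0(f, \mu_i^{\mathrm{uns}}) \leq (1 - E[\chi_i]) \R_0(f)$; the key routine calculation is that $F_i$ is a deterministic function of $X_i$ (independent of the other randomness) and that the $\mu$-conditional of $X_i$ on $\chi_i = 0$ is by definition $\mu_i^{\mathrm{uns}}$. Summing over $i$ and combining with the lower bound yields $T \geq k\R_0(f)$.

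The step I expect to require the most care is the per-position tailoring of the fallback $F_i$ to the \emph{conditional} distribution $\mu_i^{\mathrm{uns}}$ rather than to $\mu$ itself. A single $\mu$-optimal fallback would not suffice: an adversarial $A$ could leave unsolved exactly those instances that are hardest for the fallback, making the cross-term $E[(1-\chi_i) F_i(X_i)]$ exceed $(1-E[\chi_i]) \R_0(f)$ and collapsing the accounting. It is Yao's minimax — giving $\R_0(f, \mu') \leq \R_0(f)$ for \emph{every} distribution $\mu'$ on $\Dom(f)$ — that makes the conditional-optimal choice pay off.
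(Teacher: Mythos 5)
Your proposal is correct and follows the paper's route: complete $A$ into a zero-error algorithm for $f^{\oplus m}$ by running fallback zero-error computations on the unsolved instances, bound its expected cost by $T+(m-k)\R_0(f)$, and invoke \thm{direct_sum} on the hard distribution $\mu^{\otimes m}$. The one place you diverge is the choice of fallback: the paper simply runs the optimal \emph{worst-case} zero-error randomized algorithm $R$ on each unsolved instance, and since $\mathbb{E}_{D\sim R}D(x_i)\leq\R_0(f)$ holds for \emph{every fixed input} $x_i$, the correlation between which instances are left unsolved and their values is irrelevant and no conditioning argument is needed. Your per-position deterministic fallbacks $F_i$ tailored to the conditional distributions $\mu_i^{\mathrm{uns}}$ also work (your conditioning computation is sound, and such deterministic algorithms exist because every decision tree in the support of a zero-error algorithm is itself certificate-finding), but they are more elaborate than necessary; moreover, the cautionary remark in your last paragraph is only accurate for a fallback that is optimal merely in the distributional sense against the marginal $\mu$ — a single worst-case optimal fallback, as in the paper, suffices without any tailoring, precisely because its guarantee is pointwise rather than on average over $\mu$.
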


\begin{proof}
We prove this by a reduction to \thm{direct_sum}.
Let $A$ be a randomized algorithm that, when run on
an input from $\mu^{\otimes m}$, solves an expected $k$
of the $m$ instances, and halts after
an expected $T$ queries. We note that these expectations
average over both the distribution $\mu^{\otimes m}$
and the internal randomness of $A$.

We now define a randomized algorithm $B$ that solves the $m$-fold direct sum $f^{\oplus m}$ 
with zero error. $B$ works as
follows: given an input to $f^{\oplus m}$, $B$ first runs
 $A$ on that input. Then $B$ checks which of the $m$ instances of $f$ 
were solved by $A$ (by seeing if a certificate proving
the value of $f$ was found for a given instance of $f$). $B$ then runs the
optimal zero-error algorithm for $f$, which makes $\R_0(f)$ expected queries,
on the instances of $f$ that were not solved by $A$.

Let us examine the expected number of queries used by $B$
on an input from $\mu^{\otimes m}$. Recall that a randomized
algorithm is a probability distribution over deterministic
algorithms; we can therefore think of $A$ as a distribution.
For a deterministic algorithm $D\sim A$ and an input
$x$ to $f^{\oplus m}$, we use $D(x)$ to denote the number
of queries used by $D$ on $x$, and $S(D,x)\subseteq [m]$
to denote the set of inputs to $f$ the algorithm $D$ solves
when run on $x$. Then by assumption
\begin{equation}
T=\mathop{\mathbb{E}}_{x\sim\mu^{\otimes m}}
        \mathop{\mathbb{E}}_{D\sim A} D(x) \qquad \textrm{and} \qquad 
k=\mathop{\mathbb{E}}_{x\sim\mu^{\otimes m}}
        \mathop{\mathbb{E}}_{D\sim A} |S(D,x)|.
\end{equation}
Next, let $R$ be the randomized algorithm that uses $\R_0(f)$
expected queries and solves $f$ on any input.
For an input $x$ to $f^{\oplus m}$, we write
$x=x_1x_2\ldots x_m$ with $x_i\in\Dom(f)$. Then the expected number
of queries used by $B$ on input from $\mu^{\otimes m}$
can be written as
\begin{align}
\phantom{=}&  \mathop{\mathbb{E}}_{x\sim\mu^{\otimes m}}
\mathop{\mathbb{E}}_{D\sim A}\left( D(x) +
\mathop{\mathbb{E}}_{D_1\sim R}
\mathop{\mathbb{E}}_{D_2\sim R}
\cdots
\mathop{\mathbb{E}}_{D_m\sim R}
\sum_{i\in [m]\setminus S(D,x)} D_i(x_i)\right)\\
=& \mathop{\mathbb{E}}_{x\sim\mu^{\otimes m}}
\mathop{\mathbb{E}}_{D\sim A}\left( D(x) +
\sum_{i\in[m]\setminus S(D,x)}
    \mathop{\mathbb{E}}_{D_i\sim R} D_i(x_i)\right)\\
\leq &  \mathop{\mathbb{E}}_{x\sim\mu^{\otimes m}}
\mathop{\mathbb{E}}_{D\sim A}\left( D(x) +
\sum_{i\in [m]\setminus S(D,x)} \R_0(f)\right)\\
=&  \mathop{\mathbb{E}}_{x\sim\mu^{\otimes m}}
\mathop{\mathbb{E}}_{D\sim A}\left( D(x) +
(m-|S(D,x)|) \R_0(f)\right)\\
=& \quad   T+(m-k)\R_0(f).
\end{align}

Since $B$ solves the direct sum problem on $\mu^{\otimes m}$, the expected number
of queries it uses is at least $m\R_0(f)$ by \thm{direct_sum}.
Hence $T+(m-k)\R_0(f)\geq m\R_0(f)$, so $T\geq k\R_0(f)$.
\end{proof}

\subsection{Composition theorems}
\label{sec:basic_composition}

Using the direct sum and threshold direct sum theorems we have established, we can now prove composition theorems for randomized sabotage complexity. 
We start with the behavior of $\RS$ itself under composition.

\begin{theorem}\label{thm:RS_compose}
Let $f$ and $g$ be partial functions. Then
$\RS(f\circ g)\geq\RS(f)\RS(g)$.
\end{theorem}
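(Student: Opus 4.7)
The plan is to establish the bound by composing hard distributions at the two levels and reducing the in-block work to the threshold direct sum theorem \thm{thresh_sum}. First I would build a hard distribution $\nu$ for $(f\circ g)_\sab$ from hard distributions $\mu_f$ for $f_\sab$ and $\mu_g$ for $g_\sab$ (which exist by Yao's minimax, \thm{yao}): sample $p\sim\mu_f$, let $s\in\{*,\dagger\}$ be the symbol appearing in $p$, and for each coordinate $i\in[n]$ set the $i$-th block as follows. If $p_i=b\in\{0,1\}$, draw the block from an arbitrary fixed distribution on $g^{-1}(b)$; if $p_i=s$, draw the block from $\mu_g$ restricted to sabotaged inputs of type $s$. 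A short check (mirroring the reasoning used in \thm{RS1}) shows every resulting sample lies in $Q_{f\circ g}$. By Yao's theorem it then suffices to lower bound the expected cost of any zero-error algorithm $A$ for $(f\circ g)_\sab$ on $\nu$.

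Given such an $A$ with expected cost $T$ on $\nu$, I would extract from it a zero-error algorithm $B$ solving $f_\sab$ on $p\sim\mu_f$ using at most $T/\RS(g)$ expected queries into $p$; applying \thm{yao} on the $f$-level then gives $\RS(f)\leq T/\RS(g)$, i.e.\ $T\geq \RS(f)\RS(g)$. The key observation is that when $A$ halts, it has located some position $(i^*,j^*)$ with the input at that bit in $\{*,\dagger\}$, and such a bit can only appear in a block with $p_{i^*}\in\{*,\dagger\}$, so $B$ may output $i^*$ as a valid answer to $f_\sab$. The algorithm $B$ simulates $A$ on a synthetic composed input, generating each block lazily: before committing to a value of $p_i$, $B$ answers $A$'s queries into block $i$ from the marginal of $\nu$ on that block (a mixture over the four possibilities for $p_i$), and only once $A$'s queries inside block $i$ form a \emph{block certificate} (either an exposed $*$/$\dagger$ in that block or a $g$-certificate for $g(x_i)$) does $B$ read $p_i$, resample the block consistently with the revealed value, and continue.

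The main obstacle is the per-block amortization: showing that $B$'s expected number of reads from $p$ is at most $T/\RS(g)$. The idea is that $A$'s activity inside a single sabotaged block is, in the sense of \lem{repeat} and \thm{RSR0S}, a zero-error procedure that must produce a block certificate on a $\mu_g$-distributed input, and this requires $\RS(g)$ expected queries by the hardness of $\mu_g$. Conditioning on $p$ makes the sabotaged blocks an i.i.d.\ family drawn from $\mu_g$, so \thm{thresh_sum} applied to the $g_\sab$ sub-instances sitting inside those blocks bounds the expected number of blocks in which $A$ completes a block certificate by $T/\RS(g)$. Since $B$ reads $p_i$ only for such blocks, the accounting closes. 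The delicate step is pinning down the notion of block resolution so that (i) $B$'s lazy simulation is distributionally faithful to $A$ on $\nu$ up to the moment of commitment, (ii) $B$ never needs $p_i$ before block $i$ is resolved, and (iii) non-sabotaged blocks on which $A$ accumulates queries do not force extra reads of $p$ beyond those already accounted for.
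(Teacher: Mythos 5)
Your high-level plan (reduce to \thm{thresh_sum} via a lazy block-by-block simulation and count reads of the outer input by symbol discoveries) is the right one, but the construction you propose does not resolve the three issues you yourself flag, and they are not mere technicalities. The paper's proof hinges on one specific device that your proposal is missing: \emph{every} block, including those corresponding to Boolean values of the outer input, is generated as a sample $y_i$ from the hard distribution $\mu_g$ for $g_\sab$, and a Boolean outer value $p_i$ is only used to \emph{complete} $y_i$ to a consistent $0$- or $1$-input of $g$ at the moment the simulated algorithm first queries a symbol position of $y_i$. With this alignment, (i) faithfulness is automatic (until a symbol of $y_i$ is hit, the answers are the Boolean bits of $y_i$ and are distributed identically no matter what $p_i$ is), (ii) $p_i$ is consulted only at a symbol hit, and (iii) the total number of reads of $p$, over \emph{all} blocks, equals the number of instances among $y_1,\dots,y_n\sim\mu_g^{\otimes n}$ in which a certificate (a $*$ or $\dagger$) was found, which \thm{thresh_sum} bounds by $T/\RS(g)$. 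Note also that no hard distribution for $f_\sab$ is needed: the resulting algorithm $B$ is zero-error for $f_\sab$ on every input $x$ and makes at most $T/\RS(g)$ expected queries to $x$, so $\RS(f)\le T/\RS(g)$ directly.

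By contrast, with your $\nu$ --- an \emph{arbitrary} fixed distribution on $g^{-1}(b)$ for Boolean $p_i$, and $\mu_g$ conditioned on the symbol type for sabotaged $p_i$ --- the scheme breaks in two places. First, the ``answer from the mixture, commit later'' step need not be realizable: when you finally read $p_i$ and must ``resample the block consistently with the revealed value,'' the already-revealed bits may have zero probability under, or even be inconsistent with, the $p_i$-component (for instance if the $g^{-1}(0)$ distribution is a point mass), so the simulation is neither faithful to $\nu$ nor guaranteed to be completable, and correctness and termination of $B$ on the true input $p$ become unclear. Second, your accounting does not close: you commit (and read $p_i$) also when $A$ completes a Boolean $g$-certificate, but such events in blocks with Boolean $p_i$ do not solve any $g_\sab$ instance, so \thm{thresh_sum} applied only to the sabotaged blocks says nothing about them; $A$ could certify many easy $0$/$1$-blocks at a cost far below $\RS(g)$ each, forcing far more than $T/\RS(g)$ reads of $p$. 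Finally, even for the sabotaged blocks, \thm{thresh_sum} requires the blocks to be drawn from a \emph{hard} distribution for $g_\sab$, and you have not shown that $\mu_g$ conditioned on one symbol type is hard (this can be patched by symmetrizing $\mu_g$ and observing that a zero-error algorithm may be assumed to stop at the first symbol it sees, but it is an extra argument you would need to supply). All of these problems disappear once you adopt the aligned construction described above, which is exactly what the paper does.
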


\begin{proof}
Let $A$ be any zero-error algorithm for $(f\circ g)_\sab$, and let $T$
be the expected query complexity of $A$ (maximized over all inputs). 
We turn $A$ into a zero-error algorithm
$B$ for $f_\sab$.

$B$ takes a sabotaged input $x$ for $f$.
It then runs $A$ on a sabotaged input to $f\circ g$
constructed as follows.
Each $0$ bit of $x$ is replaced
with a $0$-input to $g$, each $1$ bit of $x$ is replaced
with a $1$-input to $g$, and each $*$ or $\dag$ of $x$ is replaced
with a sabotaged input to $g$. The sabotaged inputs are
generated from $\mu$, the hard distribution
for $g_\sab$ obtained from \thm{yao}.
The $0$-inputs are generated by
first generating a sabotaged input, and then
selecting a $0$-input consistent with that sabotaged input.
The $1$-inputs are generated analogously.

This is implemented in the following way.
On input $x$, the algorithm $B$ generates $n$ sabotaged inputs
from $\mu$ (the hard distribution for $g_{\sab}$),
where $n$ is the length of the string $x$.
Call these inputs $y_1,y_2,\dots,y_n$.
$B$ then runs the algorithm $A$ on this collection
of $n$ strings, pretending that it is an input to
$f\circ g$, with the following caveat: whenever
$A$ tries to query a $*$ or $\dag$ in an input $y_i$, $B$ instead
queries $x_i$. If $x_i$ is $0$, $B$ selects an input from
$f^{-1}(0)$ consistent with $y_i$, and replaces $y_i$ with
this input. It then returns to $A$ an answer consistent
with the new $y_i$.
If $x_i$ is $1$, $B$ selects a consistent input
from $f^{-1}(1)$ instead. If $x_i$ is a $*$ or $\dag$, $B$ returns a $*$ or $\dag$ respectively.

Now $B$ only makes queries to $x$ when it finds a $*$ or $\dagger$ in an input to $g_\sab$. 
But this solves that instance of $g_\sab$, which was drawn from the hard distribution for $g_\sab$. 
Thus the query complexity of $B$ is upper bounded by the number of instances of $g_\sab$ that can be solved by a $T$-query algorithm with access to $n$ instances of $g_\sab$. 
We know from \thm{thresh_sum} that if $A$ makes $T$ expected queries,
the expected number of $*$ or $\dag$ entries it finds among $y_1,y_2,\dots,y_n$ is at most $T/\RS(g)$.
Hence the expected number of queries $B$ makes to $x$ is at most $T/\RS(g)$.
Thus we have $\RS(f)\leq T/\RS(g)$, which gives
$T\geq \RS(f)\RS(g)$.
\end{proof}

Using this we can lower bound the randomized query complexity of composed functions.
In the following, $f^n$ denotes the function $f$ composed with itself $n$ times, i.e., $f^1 = f$ and $f^{i+1} = f \circ f^{i}$.

\begin{corollary}
Let $f$ be a partial function. Then
$\R(f^n)\geq\RS(f)^n/3$.
\end{corollary}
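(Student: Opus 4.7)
The plan is to combine the perfect composition theorem for randomized sabotage complexity (\thm{RS_compose}) with the fact that $\RS$ lower bounds $\R$ up to a constant factor (\thm{RgeqRS}). This corollary is essentially a bookkeeping exercise on top of these two results.

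First, I would iterate \thm{RS_compose} to show that $\RS(f^n) \geq \RS(f)^n$ by induction on $n$. The base case $n=1$ is trivial. For the inductive step, we write $f^{n+1} = f \circ f^n$ and apply \thm{RS_compose} to get
\begin{equation*}
\RS(f^{n+1}) = \RS(f \circ f^n) \geq \RS(f)\,\RS(f^n) \geq \RS(f) \cdot \RS(f)^n = \RS(f)^{n+1},
\end{equation*}
using the inductive hypothesis in the last step.

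Second, I would invoke \thm{RgeqRS} with $\epsilon = 1/3$ applied to the function $f^n$. This yields
\begin{equation*}
\R(f^n) = \R_{1/3}(f^n) \geq (1 - 2/3)\,\RS(f^n) = \tfrac{1}{3}\RS(f^n).
\end{equation*}
Chaining these two inequalities gives $\R(f^n) \geq \tfrac{1}{3}\RS(f^n) \geq \tfrac{1}{3}\RS(f)^n$, which is precisely the claim.

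There is no real obstacle here: the entire content of the corollary is already packaged in \thm{RS_compose} and \thm{RgeqRS}, so the proof is essentially a one-line deduction once those are in hand. The only thing worth being careful about is that \thm{RgeqRS} is applied once, to $f^n$ as a whole, rather than being chained through each level of composition — chaining it would cost a factor of $(1/3)^n$ and ruin the bound, whereas the perfect composition of $\RS$ lets us pay the $1/3$ only once at the very end.
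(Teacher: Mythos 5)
Your proof is correct and matches the paper's argument: the paper likewise deduces the corollary from $\R(f^n)=\R_{1/3}(f^n)\geq(1-2/3)\RS(f^n)$ via \thm{RgeqRS} together with $\RS(f^n)\geq\RS(f)^n$ obtained by iterating \thm{RS_compose}. Your added remark about applying \thm{RgeqRS} only once at the end, rather than at each level of composition, is exactly the right point of care.
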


This follows straightforwardly from observing that $\R(f^n)=\R_{1/3}(f^n)\geq(1-2/3)\RS(f^n)$ (using \thm{RgeqRS}) and $\RS(f^n) \geq \RS(f)^n$ (using \thm{RS_compose}).

We can also prove a composition theorem for zero-error and bounded-error randomized query complexity in terms of randomized sabotage complexity. In particular this yields a composition theorem for $\R(f\circ g)$ when $\R(g)=\Theta(\RS(g))$.

\begin{theorem}\label{thm:R_compose}
Let $f$ and $g$ be partial functions. Then
$\bar{\R}_\epsilon(f\circ g)\geq\bar{\R}_\epsilon(f)\RS(g)$.
\end{theorem}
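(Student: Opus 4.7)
The plan is to directly adapt the proof of Theorem \ref{thm:RS_compose}, replacing the sabotaged outer instance with an honest input to $f$. Let $A$ be any randomized algorithm that computes $f\circ g$ with error at most $\epsilon$ using $T=\bar{\R}_\epsilon(f\circ g)$ expected queries in the worst case. I would build a randomized algorithm $B$ for $f$ with error at most $\epsilon$ and expected cost at most $T/\RS(g)$, which immediately yields $\bar{\R}_\epsilon(f) \leq T/\RS(g)$, i.e., $\bar{\R}_\epsilon(f\circ g) \geq \bar{\R}_\epsilon(f)\RS(g)$.

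The construction of $B$ mirrors the one in Theorem \ref{thm:RS_compose}. Let $\mu$ be a hard distribution for $g_\sab$ given by Theorem \ref{thm:yao}. On input $x\in\Dom(f)$, the algorithm $B$ samples $n$ independent strings $y_1,\ldots,y_n\sim\mu$ and runs $A$ on the concatenation, treating the $i$-th block as the input to the $i$-th copy of $g$. Whenever $A$ queries a position $j$ inside $y_i$: if $y_i$ has a $0$ or $1$ there, $B$ returns that bit; if it has a $*$ or $\dagger$, $B$ makes a real query to $x_i$, samples a $g^{-1}(x_i)$-input consistent with $y_i$, replaces $y_i$ by this completion, and returns the bit at position $j$. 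When $A$ halts, $B$ outputs $A$'s output. Because each resolved block is an honest input to $g$ that evaluates to $x_i$, the string $A$ sees is always a valid input to $f\circ g$ whose value equals $f(x)$, so $B$'s error is bounded by $A$'s, namely $\epsilon$.

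To bound $B$'s expected number of real queries—those of the form ``query $x_i$''—observe that it equals the number of times $A$ queries a sabotaged position of some $y_i$, and each such event is equivalent to finding a certificate for $g_\sab$ on block $i$. View the combined process as a randomized procedure that takes $y_1,\ldots,y_n\sim\mu^{\otimes n}$, makes at most $T$ queries in expectation (the simulated input to $A$ is always a valid $f\circ g$ input, so $A$'s expected cost is bounded by its worst case), and ``solves'' the $i$-th block whenever it queries a sabotaged position inside it. Applying Theorem \ref{thm:thresh_sum} to $g_\sab$ then gives that the expected number of solved blocks is at most $T/\RS(g)$, which is precisely the bound we need.

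The main subtlety, as in the proof of Theorem \ref{thm:RS_compose}, is justifying the application of threshold direct sum: $A$ was designed for $f\circ g$, not for $g_\sab^{\oplus n}$, and the blocks are rewritten mid-execution. The rewriting only affects responses at positions queried after a certificate has already been found for that block, so the count of sabotaged positions queried (equivalently, the count of blocks ``solved'') is unaffected, and Theorem \ref{thm:thresh_sum} applies verbatim to the induced procedure on $\mu^{\otimes n}$. Beyond this book-keeping, the proof is essentially a rerun of that of Theorem \ref{thm:RS_compose}, with the outer error $\epsilon$ coming along for free.
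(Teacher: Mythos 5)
Your proposal is correct and follows essentially the same route as the paper: it reuses the simulation from Theorem~\ref{thm:RS_compose} with the hard distribution $\mu$ for $g_\sab$, notes that on a non-sabotaged input $x$ every block found sabotaged gets completed to an honest $g$-input with value $x_i$ (so the error stays at most $\epsilon$), and bounds the expected number of real queries by $T/\RS(g)$ via Theorem~\ref{thm:thresh_sum}. Your extra remark about the mid-execution rewriting not affecting the count of solved blocks is a fair (and slightly more careful) articulation of the same book-keeping the paper leaves implicit.
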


\begin{proof}
The proof follows a similar argument to the proof of
\thm{RS_compose}. Let $A$ be a randomized
algorithm for $f\circ g$ that uses $T$ expected queries
and makes error $\epsilon$.
We turn $A$ into an algorithm $B$ for $f$
by having $B$ generate inputs from $\mu$, the hard distribution for $g_\sab$,
and feeding them to $A$, as before. The only difference is that this
time, the input $x$ to $B$ is not a sabotaged input.
This means it has no $*$ or $\dag$ entries, so all
the sabotaged inputs that $B$ generates turn into
$0$- or $1$-inputs if $A$ tries to query a $*$ or $\dag$ 
in them.

Since $A$ uses $T$ queries, by \thm{thresh_sum},
it finds at most $T/\RS(g)$ asterisks or obelisks (in expectation).
Therefore, $B$ makes at most $T/\RS(g)$ expected
queries to $x$. Since $B$ is correct whenever $A$ is correct,
its error probability is at most $\epsilon$.
Thus $\bar{\R}_\epsilon(f)\leq T/\RS(g)$, and thus $T\geq \bar{\R}_\epsilon(f)\RS(g)$.
\end{proof}

Setting $\epsilon$ to $0$ yields the following corollary.

\begin{corollary}
Let $f$ and $g$ be partial functions. Then
$\R_0(f\circ g)\geq\R_0(f)\RS(g)$.
\end{corollary}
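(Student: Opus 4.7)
The plan is that this corollary should follow immediately from \thm{R_compose} by specializing $\epsilon=0$, since the paper's convention is $\R_0(f) = \bar{\R}_0(f)$ (as defined in \sec{prelim}). So my first step is simply to instantiate the inequality $\bar{\R}_\epsilon(f\circ g)\geq\bar{\R}_\epsilon(f)\,\RS(g)$ at $\epsilon=0$ to read $\bar{\R}_0(f\circ g)\geq\bar{\R}_0(f)\,\RS(g)$, and then rewrite this as $\R_0(f\circ g)\geq\R_0(f)\,\RS(g)$ using the notational convention.

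Before accepting this as a genuine corollary, I would want to spot-check that the proof of \thm{R_compose} actually goes through at $\epsilon=0$, rather than requiring $\epsilon>0$ somewhere. Re-reading that proof: the reduction takes a randomized algorithm $A$ for $f\circ g$ with error $\epsilon$ and expected cost $T$, constructs an algorithm $B$ for $f$ by embedding the input $x$ into $n$ independent samples $y_1,\ldots,y_n$ from the hard distribution $\mu$ for $g_\sab$, and charges each query to $x$ against a $*$ or $\dagger$ discovered in some $y_i$. The key tool, \thm{thresh_sum}, bounds the expected number of solved $g_\sab$-instances in terms of $T/\RS(g)$, and its statement holds for any randomized algorithm, not just bounded-error ones. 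When $\epsilon=0$, $A$ is zero-error, and the construction of $B$ preserves this property: $B$ outputs what $A$ outputs, so whenever $A$ is correct so is $B$. There is nothing in the reduction that degrades when $\epsilon=0$.

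There is really no obstacle here; the only tiny subtlety I would flag is to make sure the reader understands that $\R_0$ in the corollary statement means zero-error \emph{expected} query complexity (not the worst-case zero-error notion one sometimes encounters), since this is what makes the $\epsilon=0$ specialization of \thm{R_compose} yield exactly the statement. Given this, the one-line proof I would write is: applying \thm{R_compose} with $\epsilon=0$ gives $\bar{\R}_0(f\circ g)\geq\bar{\R}_0(f)\,\RS(g)$, which is $\R_0(f\circ g)\geq\R_0(f)\,\RS(g)$ by definition.
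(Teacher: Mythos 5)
Your proof is correct and matches the paper exactly: the corollary is obtained by setting $\epsilon=0$ in \thm{R_compose} and using the paper's convention that $\R_0(f)$ denotes the zero-error \emph{expected} query complexity $\bar{\R}_0(f)$. Your extra check that the reduction in \thm{R_compose} survives at $\epsilon=0$ is sound but not needed beyond the one-line specialization.
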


For the more commonly used $\R(f\circ g)$, we obtain the following composition result.

\begin{corollary}\label{cor:R_compose}
Let $f$ and $g$ be partial functions. Then
$\R(f\circ g)\geq\R(f)\RS(g)/10$.
\end{corollary}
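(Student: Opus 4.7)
The plan is to chain together two results already in hand: the expected-cost composition theorem \thm{R_compose} specialized to $\epsilon=1/3$, and the conversion \lem{Rexp} between expected and worst-case randomized query complexity. The corollary is really just the worst-case repackaging of \thm{R_compose}, so there should be no genuine obstacle beyond keeping track of constants.

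Concretely, I would argue as follows. First, note the trivial direction $\R(f\circ g)=\R_{1/3}(f\circ g)\geq \bar{\R}_{1/3}(f\circ g)$, since the worst-case query count is at least the expected query count for any fixed algorithm. Next, apply \thm{R_compose} with $\epsilon=1/3$ to the right-hand side to obtain
\begin{equation}
\bar{\R}(f\circ g)\geq\bar{\R}(f)\,\RS(g).
\end{equation}
Finally, apply the ``$\epsilon=1/3$'' clause of \lem{Rexp}, which gives $\R(f)\leq 10\,\bar{\R}(f)$, i.e., $\bar{\R}(f)\geq \R(f)/10$. Substituting into the previous display yields $\R(f\circ g)\geq \R(f)\RS(g)/10$, as required.

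The main thing to be careful about is that all three invocations are with the same error parameter $\epsilon=1/3$, so the constants line up cleanly and the factor of $10$ comes entirely from \lem{Rexp}. Since both \thm{R_compose} and \lem{Rexp} are already proved in the paper, no further work is needed; the only ``step'' is the substitution. In particular, there is no genuine obstacle here — this corollary is essentially a bookkeeping statement making the composition theorem usable in its more standard worst-case form.
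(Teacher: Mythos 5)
Your proposal is correct and follows exactly the paper's own argument: bound $\R(f\circ g)$ below by $\bar{\R}_{1/3}(f\circ g)$, apply \thm{R_compose} with $\epsilon=1/3$, and finish with the $\R(f)\leq 10\bar{\R}(f)$ clause of \lem{Rexp}. No differences worth noting; the constant $10$ arises in the same place in both proofs.
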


This follows from \lem{Rexp}, which gives $\bar{\R}_{1/3}(f)\geq\R(f)/10$, and \thm{R_compose}, since 
$\R(f\circ g)\geq\bar{\R}_{1/3}(f\circ g) \geq \bar{\R}_{1/3}(f)\RS(g)
\geq \R(f)\RS(g)/10$.

Finally, we can also show an upper bound composition result for randomized sabotage complexity. 

\begin{restatable}{theorem}{uppercompose}
\label{thm:upper_compose}
Let $f$ and $g$ be partial functions. Then
$\RS(f\circ g)\leq \RS(f)\R_0(g)$. We also have
$\RS(f\circ g)=O(\RS(f)\R(g)\log\RS(f))$.
\end{restatable}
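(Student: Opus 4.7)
The plan is to prove both bounds by simulating a zero-error algorithm $A_f$ for $f_{\sab}$ on a virtual sabotaged input $x\in Q_f$ derived from the given $y$, answering each query $A_f$ makes to $x_i$ by running a subroutine on the corresponding block $y_i$. First I would specify the virtual $x$ blockwise: $x_i=g(y_i)$ when $y_i$ contains no stars or daggers, $x_i=b$ when $y_i$ has stars/daggers but every completion in $\Dom(g)$ gives $g$-value $b$, and $x_i=*$ or $\dagger$ (matching $y$'s sabotage type) when $y_i\in P_g$ or $P_g^{\dagger}$. Using the completions $z_0,z_1$ certifying $y\in Q_{f\circ g}$, applying $g$ blockwise gives completions of $x$ that certify $x\in Q_f$ of the matching type, so $f_{\sab}(x)=(f\circ g)_{\sab}(y)$.

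For the first bound $\RS(f\circ g)\leq\RS(f)\R_0(g)$, the subroutine would run the zero-error algorithm $A_g$ on $y_i$, forwarding each bit query to $y$ and halting with $*$ or $\dagger$ as soon as a starred or daggered bit is returned. Two facts suffice: (i) a coupling with any completion $\hat y_i\in\Dom(g)$ shows that the $y_i$-run of $A_g$ is a prefix of the $\hat y_i$-run, bounding its expected cost by $\R_0(g)$; (ii) any bit value $b$ this subroutine returns equals $A_g$'s leaf value, which is $g$ on every path-consistent input and therefore on $z_{0,i}$ and $z_{1,i}$, so the responses fed to $A_f$ are consistent with some genuine $x'\in Q_f$ of $y$'s sabotage type. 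Linearity of expectation gives total expected cost at most $\RS(f)\R_0(g)$, and the zero-error guarantee of $A_f$ ensures it ultimately reports a star/dagger in $x'$; by construction such a report corresponds to the simulator actually observing a star/dagger in $y$, from which $(f\circ g)_{\sab}(y)$ is read off.

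For the second bound $\RS(f\circ g)=O(\RS(f)\R(g)\log\RS(f))$, I would replace $A_g$ with a bounded-error $g$-algorithm amplified to per-call error $\epsilon'=\Theta(1/\RS(f))$, at cost $O(\R(g)\log\RS(f))$ worst-case queries per call. The subtle new point is that on a sabotaged block $y_i\in P_g$, a bounded-error algorithm is not forced to query a star. To handle this I would introduce a two-completion coupling: pick $\hat y_i^0,\hat y_i^1\in\Dom(g)$ completing $y_i$ with $g(\hat y_i^b)=b$ and run the amplified algorithm on $y_i$, $\hat y_i^0$, $\hat y_i^1$ with shared random bits; with probability at least $1-2\epsilon'$ the two completion-runs output $0$ and $1$ respectively, so their query sequences must diverge at some bit where they differ, necessarily a star of $y_i$, and at that step the $y_i$-run observes the star and the subroutine halts correctly. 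An analogous single-completion coupling handles non-sabotaged blocks, so every subroutine call mimics the zero-error version with probability $\geq 1-2\epsilon'$. Truncating $A_f$ at $O(\RS(f))$ queries via Markov and union-bounding over the $O(\RS(f))$ calls, we find a star/dagger in $y$ with constant probability within $O(\RS(f)\R(g)\log\RS(f))$ queries; \lem{repeat} then upgrades this to a zero-error algorithm of the same asymptotic expected cost. The main obstacle I expect is keeping the two-completion coupling book-keeping clean, especially when the amplified algorithm interleaves multiple runs of the base bounded-error algorithm.
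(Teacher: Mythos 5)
Your proposal is correct and follows essentially the same route as the paper: run the optimal $f_{\sab}$ algorithm on a virtual sabotaged input to $f$, answering each outer query by running the (zero-error, or amplified bounded-error) algorithm for $g$ on the corresponding block, with the bounded-error case handled by driving the per-call error below $1/\RS(f)$, union bounding, and converting back to zero error via \lem{repeat}. Your explicit blockwise definition of the virtual input (including starred-but-constant blocks) and the completion-coupling arguments simply spell out details the paper's proof leaves implicit.
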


\begin{proof}
We describe a simple algorithm for finding a $*$ or $\dag$
in an input to $f\circ g$. Start by running
the optimal algorithm for the sabotage problem of $f$.
This algorithm uses $\RS(f)$ expected queries.
Then whenever this algorithm tries to query a bit,
run the optimal zero-error algorithm for $g$ in the
corresponding input to $g$. 

Now, since the input to $f\circ g$ that we are given
is a sabotaged input, it must be consistent with both
a $0$-input and a $1$-input of $f\circ g$. It follows
that some of the $g$ inputs are sabotaged, and moreover,
if we represent a sabotaged $g$-input by $*$ or $\dag$,
a $0$-input to $g$ by $0$, and a $1$-input to $g$ by $1$,
we get a sabotaged input to $f$. In other words, from the inputs to $g$ we can derive a sabotaged input for $f$.

This means that the outer algorithm runs uses an expected $\RS(f)$
calls to the inner algorithm, and ends up calling the
inner algorithm on a sabotaged input to $g$.
Meanwhile, each call to the inner algorithm
uses an expected $\R_0(g)$ queries,
and will necessarily find a $*$ or $\dag$ if the input it is
run on is sabotaged. Therefore, the described algorithm
will always find a $*$ or $\dag$, and its expected running time
is $\RS(f)\R_0(g)$ by linearity of expectation
and by the independence of the internal randomness
of the two algorithms.

Instead of using a zero-error randomized algorithm for $g$, we can
use a bounded-error randomized algorithm for $g$ as long as its error probability is small.
Since we make $O(\RS(f))$ calls to the inner algorithm, if we boost the bounded-error algorithm's
success probability to make the error much smaller than $1/\RS(f)$ 
(costing an additional $\log \RS(f)$ factor), we will get a bounded-error 
algorithm for $(f\circ g)_\sab$. Since $\R((f\circ g)_\sab)$ is the same as $\RS(f\circ g)$ up to a constant factor (\thm{RSR0S}),

\begin{equation}
\RS(f\circ g)=O(\RS(f)\R(g)\log\RS(f)),
\end{equation}
as desired.
\end{proof}

\section{Composition with the index function}
\label{sec:composition}

We now prove our main result (\thm{comp}) restated more precisely as follows.

\begin{namedtheorem}{Theorem \ref*{thm:comp}}{Precise version}
Let $f$ and $g$ be (partial) functions, and
let $m=\Omega(\R(g)^{1.1})$. Then
$\R(f\circ\Ind_m\circ g)=\Omega(\R(f)\R(g)\log m)
=\Omega(\R(f)\R(\Ind_m)\R(g))$.
\end{namedtheorem}

Before proving this, we formally define the index function.

\begin{definition}[Index function]
The index function on $m$ bits, denoted $\Ind_m:\B^m \to \B$, is defined as follows. Let $c$ be the largest integer such that $c+2^c\leq m$. For any input $x\in \B^m$, let $y$ be the first $c$ bits of $x$ and let $z=z_0z_1\cdots z_{2^c-1}$ be the next $2^c$ bits of $x$. If we interpret $y$ as the binary representation of an integer between $0$ and $2^c-1$, then the output of $\Ind_m(x)$ equals $z_y$.
\end{definition}

To prove \thm{comp}, we also require the strong direct product theorem for randomized query complexity that was established by Drucker \cite{Dru12}.

\begin{theorem}[Strong direct product]\label{thm:dir_prod}
Let $f$ be a partial Boolean function,
and let $k$ be a positive integer.
Then any randomized algorithm for $f^{\oplus k}$ that uses
at most $\gamma^3k\R(f)/11$ queries has success probability
at most $(1/2+\gamma)^k$, for any $\gamma\in(0,1/4)$.
\end{theorem}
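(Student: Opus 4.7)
The plan is to prove this strong direct product theorem by reducing to a distributional per-coordinate analysis. I would first apply Yao's minimax theorem to reduce to showing that for a sufficiently hard product distribution $\mu^{\otimes k}$ on $\Dom(f)^k$, every deterministic decision tree making at most $\gamma^3 k \R(f)/11$ queries has success probability on $\mu^{\otimes k}$ at most $(1/2+\gamma)^k$. The one-shot hardness statement I would need is: for an appropriately chosen hard distribution $\mu$, any deterministic tree making at most $q$ queries predicts $f(x)$ correctly with probability at most $1/2 + O(\sqrt{q/\R(f)})$. This is the standard quadratic tradeoff between query count and advantage, and up to constants it follows from a Markov-type amplification argument (\lem{amplification}) combined with the definition of $\R(f)$: an algorithm with bias $\gamma$ can be amplified by $O(1/\gamma^2)$ repetitions into a bounded-error algorithm, so $q \cdot O(1/\gamma^2) \geq \R(f)$.

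Second, I would partition the queries of the tree by coordinate and aim to show that the per-coordinate success probabilities combine \emph{multiplicatively}. The natural route is an information-theoretic chain rule: letting $X_1,\dots,X_k\sim\mu^{\otimes k}$, $Y_i = f(X_i)$, and $Z$ denote the algorithm's query/answer transcript, decompose $I(Y_1,\dots,Y_k;Z) = \sum_i I(Y_i;Z\mid Y_{<i})$, bound each term by a constant times the expected number of queries the algorithm makes inside coordinate $i$ (using the one-shot statement and a Pinsker-type inequality that converts advantage $\gamma$ into $\Omega(\gamma^2)$ bits), and then sum using the total query budget $\gamma^3 k\R(f)/11$. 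Converting the resulting mutual-information bound back to a success-probability bound via Fano (or directly by comparing to the uniform output distribution) yields the claimed exponential decay $(1/2+\gamma)^k$.

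The main obstacle is handling adaptivity: the algorithm chooses where to query based on past answers, so per-coordinate query counts are random and correlated. A robust approach, as in Drucker's work, is to restrict attention to ``nice'' algorithms which commit to a deterministic per-coordinate query budget in advance, and to show every adaptive algorithm can be simulated by such a nice algorithm with only a $\polylog$-factor loss; then the per-coordinate analysis decouples cleanly. Alternatively, one can run the chain-rule argument directly on adaptive transcripts, using that under $\mu^{\otimes k}$ the coordinates remain product-distributed conditional on the transcript restricted to the \emph{other} coordinates, so conditioning on $Y_{<i}$ and the portion of the transcript touching coordinates $\neq i$ preserves the marginal of $X_i$. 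Making this conditional independence precise in the presence of branching, and shepherding the constants so that the final query bound is exactly $\gamma^3 k \R(f)/11$ rather than a larger polynomial in $\gamma$, is the technical heart of the proof and where I expect the main difficulty to lie.
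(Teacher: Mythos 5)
This theorem is not proved in the paper at all: it is imported verbatim from Drucker's work \cite{Dru12}, and the authors use it as a black box inside the proof of \lem{index}. So there is no in-paper proof to compare against, and what you have written is a proof \emph{plan} for a known, genuinely difficult external result rather than a proof. Judged on its own terms, the plan has two concrete gaps. First, the one-shot hardness step is not a direct consequence of \lem{amplification} plus Yao. Amplification applies to a worst-case randomized algorithm with advantage $\gamma$ on \emph{every} input; after you fix a hard distribution $\mu$ and a deterministic tree, you only have advantage $\gamma$ on average over $\mu$, and you cannot amplify a deterministic tree by independent repetition on a fixed input. Moreover, the hard distribution produced by minimax depends on the target error, so the distribution that is hard for error $1/3$ need not be the one against which low-query trees have advantage $O(\sqrt{q/\R(f)})$. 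Handling this mismatch is a substantial part of Shaltiel's and Drucker's work, not a footnote.

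Second, the mutual-information chain-rule route you propose is known not to yield strong direct product theorems of the form $(1/2+\gamma)^k$: bounding $\sum_i I(Y_i;Z\mid Y_{<i})$ by the total query budget and applying Fano gives at best an additive or weakly multiplicative degradation of success probability, and it does not rule out an algorithm that spends its entire budget solving a small fraction of the coordinates perfectly and guessing the rest. The sharp $(1/2+\gamma)^k$ bound with budget $\gamma^3 k\R(f)/11$ (note the exponent $3$ on $\gamma$, which your sketch never accounts for) is obtained by Drucker via a different mechanism: an inductive simulation argument that conditions on success in previously processed coordinates and carefully re-normalizes the query budget, in the lineage of Shaltiel and Klauck--\v{S}palek--de Wolf. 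You correctly identify adaptivity and per-coordinate budget allocation as the technical heart, but the proposal defers exactly that heart, so as written it does not constitute a proof. For the purposes of this paper, the correct move is simply to cite \cite{Dru12}, as the authors do.
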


The first step to proving $\R(f\circ \Ind \circ g) = \Omega(\R(f)\R(\Ind)\R(g))$ is to establish that $\R(\Ind\circ g)$ is essentially the same as  $\RS(\Ind \circ g)$ if the index gadget is large enough.

\begin{lemma}\label{lem:index}
Let $f$ be a partial Boolean function and let
$m=\Omega(\R(f)^{1.1})$.
Then 
\begin{equation}
\RS(\Ind_m\circ f)=\Omega(\R(f)\log m)
=\Omega(\R(\Ind_m)\R(f)).
\end{equation}
Moreover, if $f^{\oplus c}_{\ind}$
is the defined as the index function on $c+2^c$ bits
composed with $f$ in only the first $c$ bits,
we have $\RS(f^{\oplus c}_{\ind}) \geq \RS_{\mathrm{u}}(f^{\oplus c}_{\ind})=\Omega(c\R(f))$
when $c\geq 1.1\log \R(f)$.
\end{lemma}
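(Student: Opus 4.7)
The plan is to prove the \emph{moreover} statement first, $\RS_{\mathrm{u}}(f^{\oplus c}_\ind)=\Omega(c\R(f))$ for $c\geq 1.1\log\R(f)$; the primary statement then follows by embedding $(f^{\oplus c}_\ind)_\usab$ into $(\Ind_m\circ f)_\usab$. The embedding replaces each raw storage bit of $f^{\oplus c}_\ind$ by a fixed $f$-input: $0$-bits become some $y^0\in f^{-1}(0)$, $1$-bits become $y^1\in f^{-1}(1)$ chosen so that $y^0$ and $y^1$ differ in a single coordinate, and raw $*$-bits become $y^0$ with that coordinate replaced by $*$. The image has exactly one $*$ and is sabotaged iff the original is, so $\RS(\Ind_m\circ f)\geq\RS_{\mathrm{u}}(\Ind_m\circ f)\geq\RS_{\mathrm{u}}(f^{\oplus c}_\ind)=\Omega(\R(f)\log m)$ with $c=\lfloor\log m\rfloor$.

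To prove the moreover, I would fix a hard distribution $\mu$ for $\R(f)$ and consider the following distribution $D$ on $(f^{\oplus c}_\ind)_\usab$: sample $x_1,\ldots,x_c\sim\mu^{\otimes c}$, compute $a^*=(f(x_1),\ldots,f(x_c))$, place $*$ at storage position $a^*$, and fill the remaining storage bits with independent uniform $\{0,1\}$ values. For any zero-error algorithm $A$ solving $(f^{\oplus c}_\ind)_\usab$ with expected query cost $T$ on $D$, decompose $T=T_a+T_s$ into address and storage queries. The argument proceeds by a case split on $T_a$: if $T_a\geq c\R(f)/704$ we are done since $T\geq T_a$. Otherwise, applying \thm{dir_prod} with $\gamma=1/4$ to the computation of $a^*=f^{\oplus c}(x_1,\ldots,x_c)$ bounds the maximum-a-posteriori estimate of $a^*$ given $A$'s address queries by $(3/4)^c$ in expectation, so each storage query of $A$ hits the $*$ with probability at most this quantity on average, forcing $T_s$ to be large since $A$ is zero-error.

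The main obstacle is to sharpen the weak bound $T_s=\Omega((4/3)^c)$ -- obtained from naive union-bounding against MAP -- up to the required $\Omega(c\R(f))$. The plan is to combine SDPT with the threshold direct sum (\thm{thresh_sum}): by \thm{thresh_sum}, $T_a$ address queries can certify the $f$-value of at most $T_a/\R_0(f)$ of the $c$ instances, so at least $c-T_a/\R_0(f)$ coordinates of $a^*$ remain information-theoretically unconstrained, and the posterior over $a^*$ has support of size at least $\Omega(2^{c-T_a/\R_0(f)})$. A MAP-first storage search over a near-uniform support of this size requires $\Omega(2^{c-T_a/\R_0(f)})$ expected queries to find the $*$. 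Combining $T\geq T_a+\Omega(2^{c-T_a/\R_0(f)})$ and minimizing over $T_a$, using $\R_0(f)\geq\R(f)$ and the assumption $c\geq 1.1\log\R(f)$ (so that $2^c\geq\R(f)^{1.1}$ dominates $c\R(f)$ asymptotically, since $\R(f)^{0.1}$ eventually exceeds $\log\R(f)$), yields the desired $T=\Omega(c\R(f))$.
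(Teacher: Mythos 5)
Your skeleton matches the paper's: reduce finding the single $*$ (placed in the array cell addressed by $a^*=f^{\oplus c}$ on the address blocks) to computing $f^{\oplus c}$, and invoke Drucker's strong direct product theorem (\thm{dir_prod}). But the step you yourself flag as ``the main obstacle'' is where the proposal breaks, and the proposed repair does not work. \thm{thresh_sum} bounds the expected number of instances whose value is \emph{certified}; it says nothing about how well the transcript \emph{predicts} the uncertified values. After $T_a\ll c\R_0(f)$ address queries the posterior on $a^*$ can be highly concentrated (e.g., each coordinate guessed correctly with constant bias, or with bias $2/3$ even under a hard distribution for bounded-error $\R$), so neither ``at least $c-T_a/\R_0(f)$ coordinates remain information-theoretically unconstrained'' nor ``near-uniform support of size $\Omega(2^{c-T_a/\R_0(f)})$'' follows, and a support-size bound alone does not lower bound the expected number of storage probes. (Storage probes also leak information about $a^*$, since reading a $0/1$ rules out that cell, so the clean $T_a/T_s$ accounting is itself leaky.) The actual fix in the paper requires no new ingredient: convert the $\RS$ algorithm into an $f^{\oplus c}$-solver by outputting the cell index of a uniformly random one of its first $2\RS$ queries (success probability $\geq \RS^{-1}/4$), and apply \thm{dir_prod} with a \emph{small} $\gamma$ (e.g.\ $\gamma=0.01$; note $\gamma=1/4$ is not even allowed, the range is open). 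Then the ``bad'' branch gives $\RS=\Omega\bigl((2/(1+2\gamma))^{c}\bigr)=\Omega(m^{0.97})$, and the hypothesis $m=\Omega(\R(f)^{1.1})$ (equivalently $c\geq 1.1\log\R(f)$) makes $m^{0.97}=\Omega(\R(f)\log m)$; the other branch gives $\Omega(c\R(f))$ directly. So the exponential branch you dismissed as too weak is in fact strong enough once $\gamma$ is tuned, which is precisely the role of the $\R(f)^{1.1}$ assumption.

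A secondary gap: your embedding of $(f^{\oplus c}_{\ind})_\usab$ into $(\Ind_m\circ f)_\usab$ requires $y^0\in f^{-1}(0)$ and $y^1\in f^{-1}(1)$ at Hamming distance $1$, which need not exist when $f$ is a partial function (the lemma is stated for partial $f$; e.g.\ promise problems whose $0$- and $1$-inputs are far apart have no such pair). One can instead fill the starred storage cell with an arbitrary sabotaged $f$-input and answer queries adaptively as in the proof of \thm{RS_compose}, but then the image no longer has a single $*$, so the route through $\RS_{\mathrm{u}}(\Ind_m\circ f)$ has to be abandoned (only $\RS(\Ind_m\circ f)$ is needed anyway). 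The paper sidesteps all of this by running one argument that simultaneously lower bounds $\RS(\Ind_m\circ f)$ and, because the hard sabotaged inputs use a single $*$ or $\dagger$ in the addressed cell, also $\RS_{\mathrm{u}}(f^{\oplus c}_{\ind})$.
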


Before proving \lem{index}, let us complete the proof of \thm{comp} assuming \lem{index}.

\begin{proof}[Proof of \protect{\thm{comp}}]
By \cor{R_compose}, we have
$\R(f\circ\Ind_m\circ g)\geq \R(f)\RS(\Ind_m\circ g)/10$.
Combining this with \lem{index} gives
$\R(f\circ\Ind_m\circ g)=\Omega(\R(f)\R(g)\log m)$,
as desired.
\end{proof}

We can now complete the argument by proving \lem{index}.

\begin{proof}[Proof of \protect{\lem{index}}]
To understand what the inputs to $(\Ind_m\circ f)_{\sab}$
look like, let us first analyze the function $\Ind_m$. 
We can split an input to $\Ind_m$ into a small index
section and a large array section. To sabotage
an input to $\Ind_m$, it suffices to sabotage
the array element that the index points to (using
only a single $*$ or $\dag$). It follows that to sabotage
an input to $\Ind_m\circ f$, it suffices to sabotage
the input to $f$ at the array element that the index
points to. In other words, we consider sabotaged inputs where 
the only stars in the input are in one array cell whose index is the output
of the first $c$ copies of $f$, where $c$ is the largest integer such that $c+2^c\leq m$. Note that
$c=\log m-\Theta(1)$. 

We now convert any $\RS(\Ind_m\circ f)$ algorithm
into a randomized algorithm for $f^{\oplus c}$.
First, using \lem{markov}, we get a
$2\RS(\Ind_m\circ f)$ query randomized
algorithm that finds a $*$ or $\dag$ with probability $1/2$
if the input is sabotaged. Next, consider running
this algorithm on a non-sabotaged input. It makes
$2\RS(\Ind_m\circ f)$ queries. With probability
$1/2$, one of these queries will be in the array cell
whose index is the true answer to $f^{\oplus c}$ evaluated
on the first $cn$ bits. We can then consider
a new algorithm $A$ that runs the
above algorithm for $2\RS(\Ind_m\circ f)$ queries,
then picks one of the $2\RS(\Ind_m\circ f)$
queries at random, and if that query is in an array
cell, it outputs the index of that cell.
Then $A$ uses $2\RS(\Ind_m\circ f)$ queries
and evaluates $f^{\oplus c}$ with probability
at least $\RS(\Ind_m\circ f)^{-1}/4$.

Next, \thm{dir_prod} implies that for any $\gamma\in(0,1/4)$,
either $A$'s success
probability is smaller than $(1/2+\gamma)^c$,
or else $A$ uses at least
$\gamma^3c\R(f)/11$ queries. This means either
\begin{equation}
\label{eq:dpt}
\RS(\Ind_m\circ f)^{-1}/4 \leq (1/2+\gamma)^{c} 
\qquad \textrm{or} \qquad
2\RS(\Ind_m\circ f)\geq \gamma^3c\R(f)/11.
\end{equation}
Now if we choose $\gamma=0.01$, it is clear that the second inequality in \eq{dpt} yields 
$\RS(\Ind_m\circ f)=\Omega(c \R(f))=\Omega(\R(f)\log m)$ no matter what $m$ (and hence $c$) is chosen to be.

To complete the argument, we show that the first inequality in \eq{dpt} also yields the same. 
Observe that the first inequality is equivalent to
\[
\RS(\Ind_m\circ f)
=\Omega\left(\Bigl(\frac{2}{1+2\gamma}\Bigr)^c\right)
=\Omega\left(\Bigl(\frac{2}{1+2\gamma}\Bigr)^{\log m-\Theta(1)}\right)
=\Omega(m^{\log_2(2/1.02)})
=\Omega(m^{0.97}).
\]
We now have $m^{0.97}=\Omega(m^{0.96}\log m)
=\Omega(\R(f)^{1.1\times 0.96}\log m)
=\Omega(\R(f)\log m)$, as desired.

The lower bound on $\RS_{\mathrm{u}}(f^{\oplus c}_{\ind})$
follows similarly once we makes two observations. First, this argument works
equally well for $f^{\oplus c}_{\ind}$ instead of $\Ind_m \circ f$.
Second, sabotaging the array cell indexed by the outputs 
to the $c$ copies of $f$ in $f^{\oplus c}_{\ind}$ introduces only 
one asterisk or obelisk, so the argument above
lower bounds $\RS_{\mathrm{u}}(f^{\oplus c}_{\ind})$ and not only $\RS(f^{\oplus c}_{\ind})$.
\end{proof}

\section{Relating lifting theorems}
\label{sec:lifting}

In this section we establish \thm{lifting}, which proves that a lifting theorem for zero-error randomized communication complexity implies one for bounded-error randomized communication complexity.

To begin, we introduce the two-party index gadget (also used in \cite{GPW15}).

\begin{definition}[Two-party index gadget]\label{def:commindex}
For any integer $b>0$, and finite set $\Y$, we define the index function $G_\Ind:\B^b \times \Y^{2^b} \to \Y$ as follows. Let  $(x,y)\in\B^b \times \Y^{2^b}$ be an input to $G_\Ind$. Then if we interpret $x$ as the binary representation of an integer between $0$ and $2^b-1$, the function $G_\Ind(x,y)$ evaluates to $y_x$, the $x^\textrm{th}$ letter of y. We also let $G_b$ be the index function with $\Y = \B$ and let $G'_b$ be the index function with $\Y = \Bao$.
\end{definition}

The index gadget is particularly useful in communication complexity because it is ``complete'' for functions with a given value of $\min\{|\X|,|\Y|\}$. 
More precisely, any problem $F:\X \times \Y \to \B$ can be reduced to $G_b$ for $b = \lceil \log \min\{|\X|,|\Y|\}\rceil$. 
To see this, say $|\X|\leq |\Y|$ and let $|\X|=2^b$. 
We now map every input $(x,y)\in\X \times \Y$ to an input $(x',y')$ for $G_b$. 
Since $\X$ has size $2^b$, we can view $x$ as a string in $\B^b$ and set $x' = x$. 
The string $y'=y'_0 y'_1 \cdots y'_{2^b-1} \in \B^{2^b}$ is defined as $y'_x = F(x,y)$.
Hence we can assume without loss of generality that a supposed lifting theorem for zero-error protocols is proved using the two-party index gadget of some size.

Our first step is to lower bound the bounded-error randomized communication complexity
of a function in terms of the zero-error randomized communication complexity of a related function.

\begin{lemma}\label{lem:communication_sab}
Let $f$ be an $n$-bit (partial) Boolean function
and let $G_b:\B^b\times\B^{2^b}\to\B$ be the index gadget with $b=O(\log n)$. Then
\begin{equation}\R^\mathrm{cc}(f\circ G_b)=\Omega\left(
    \frac{\R^\mathrm{cc}_0(f_\usab\circ G^\prime_b)}
    {\log n\log\log n}\right),\end{equation}
where $G^\prime_b$ is the index gadget mapping
$\B^b\times\{0,1,*,\dagger\}^{2^b}$ to $\{0,1,*,\dagger\}$.
\end{lemma}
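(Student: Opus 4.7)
The plan is to convert a bounded-error protocol $P$ for $f \circ G_b$ of cost $T = \R^{\mathrm{cc}}(f \circ G_b)$ into a zero-error protocol for $f_\usab \circ G^\prime_b$ of expected cost $O(T \log n \log \log n)$, from which the stated bound follows by rearranging. Alice holds $x = (x_1, \ldots, x_n) \in (\B^b)^n$ and Bob holds $y = (y_1, \ldots, y_n) \in (\{0,1,*,\dagger\}^{2^b})^n$, with the promise that exactly one $y_{i^*, x_{i^*}}$ lies in $\{*, \dagger\}$ and that the composed output is a sabotaging string for $f$. For each $a \in \B$, let $\tilde y^a$ denote $y$ with every $*$ and $\dagger$ symbol (indexed or not) replaced by $a$. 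Since $G_b$ only reads the indexed entry of each block, the composed strings $G_b(x, \tilde y^0)$ and $G_b(x, \tilde y^1)$ differ exactly at coordinate $i^*$, and by the sabotage promise they are respectively a $0$-input and a $1$-input of $f$.

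The main step is a binary search for $i^*$. Given a set $S \subseteq [n]$ known to contain $i^*$, split $S = S_L \sqcup S_R$ and form hybrids $\tilde y^{[a,b]}$ which replace the special symbols at indices in $S_L$ by $a$, those at indices in $S_R$ by $b$, and those outside $S$ by $0$. If $i^* \in S_L$, then $G_b(x, \tilde y^{[0,0]})$ and $G_b(x, \tilde y^{[1,0]})$ differ only at coordinate $i^*$ and hence must yield opposite values of $f \circ G_b$; if $i^* \in S_R$, the two composed inputs coincide. Thus two invocations of $P$ on these hybrids distinguish the two cases with constant advantage. Amplifying each test to error $O(1/\log n)$ via $O(\log \log n)$ majority votes costs $O(T \log \log n)$ bits per level, and $O(\log n)$ levels of binary search together with a union bound yield a bounded-error protocol that correctly outputs $i^*$ using $O(T \log n \log \log n)$ bits.

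To make the protocol zero-error I would use the communication analogue of \lem{repeat}. After producing a candidate $i^*$, Alice sends $x_{i^*}$ (using $b = O(\log n)$ bits) and Bob inspects $y_{i^*, x_{i^*}}$. If this entry is $*$ or $\dagger$, Bob outputs $0$ or $1$ accordingly, and by the uniqueness promise in the definition of $f_\usab$ this answer is necessarily correct. Otherwise Alice and Bob restart with fresh public randomness. Each attempt succeeds with constant probability, so the expected number of attempts is $O(1)$ and the expected total communication remains $O(T \log n \log \log n)$, yielding the desired bound on $\R^{\mathrm{cc}}_0(f_\usab \circ G^\prime_b)$.

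The main subtlety is that Bob's input may carry $*$ and $\dagger$ symbols at unindexed positions while Bob does not know which coordinate $x_i$ to read. This is precisely why the hybrids are defined by replacing every special symbol globally, not only the indexed one: because $G_b$ ignores unindexed coordinates, this global replacement does not disturb the composed $f \circ G_b$ values, and the binary search test remains valid. Beyond this the argument is standard bookkeeping; the $\log \log n$ factor comes solely from amplifying each level's test to have error $O(1/\log n)$ so that a union bound over the $\log n$ levels preserves constant overall error.
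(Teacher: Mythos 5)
Your proof is correct and follows essentially the same route as the paper: binary-search for the unique sabotaged gadget using the bounded-error protocol for $f\circ G_b$ (amplified to error $O(1/\log n)$ per level, giving the $\log n\log\log n$ overhead), then have Alice send the $b=O(\log n)$-bit pointer so Bob can verify the $*$/$\dagger$, and finally repeat on failure to obtain a zero-error protocol, exactly as in the paper's use of the repetition lemma. The only cosmetic difference is your per-level test (comparing the protocol's outputs on two hybrids, which sidesteps fixing the orientation of the special symbol) versus the paper's single call with the left half of the special symbols set to $0$ and the right half to $1$.
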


\begin{proof}
We will use a randomized protocol $A$ for $f\circ G_b$
to construct a zero-error protocol $B$ for
$f_\usab\circ G^\prime_b$. Note the given input to
$f_\usab\circ G^\prime_b$ must have a unique
copy of $G^\prime_b$ that evaluates to $*$ or $\dagger$,
with all other copies evaluating to $0$ or $1$.
The goal of $B$ is to find this copy and determine if
it evaluates to $*$ or $\dagger$. This will evaluate
$f_\usab\circ G^\prime_b$ with zero error.

Note that if we replace all $*$ and $\dagger$ symbols
in Bob's input with $0$ or $1$, we would get a valid input to
to $f\circ G_b$, which we can evaluate using $A$.
Moreover, there is a single special
$*$ or $\dagger$ in Bob's input
that governs the value of this input to $f\circ G_b$ 
no matter how we fix the rest of the $*$ and $\dagger$ symbols.
Without loss of generality, we assume that if the special
symbol is replaced by $0$, the function $f\circ G_b$
evaluates to $0$, and if it is replaced by $1$,
it evaluates to $1$.

We can now binary search to find this special symbol.
There are at most $n 2^b$ asterisks and obelisks in Bob's input.
We can set the left half to $0$ and the right half to $1$,
and evaluate the resulting input using $A$. If the answer
is $0$, the special symbol is on the left half; otherwise,
it is on the right half. We can proceed to binary search
in this way, until we have zoomed in on one gadget that must
contain the special symbol. This requires narrowing
down the search space from $n$ possible gadgets to $1$,
which requires $\log n$ rounds. Each round requires a call
to $A$, times a $O(\log\log n)$ factor for error reduction.
We can therefore find the right gadget with bounded
error, using $O(\R^\mathrm{cc}(f\circ G_b)\log n\log\log n)$
bits of communication.

Once we have found the right gadget, we can certify its validity
by having Alice send the right index to Bob, using $b$ bits
of communication, and Bob can check that it points to an asterisk or obelisk. 
Since we found a certificate with constant
probability, we can use \lem{repeat} to turn this into
a zero-error algorithm. Thus
\begin{equation}\R^\mathrm{cc}_0(f_\usab\circ G^\prime_b)=
O(b+\R^\mathrm{cc}(f\circ G_b)\log n\log\log n).\end{equation}
Since $b=O(\log n)$, we obtain the desired result
$\R^\mathrm{cc}_0(f_\usab\circ G^\prime_b)=
O(\R^\mathrm{cc}(f\circ G_b)\log n\log\log n)$.
\end{proof}

Equipped with this lemma we can prove the connection between lifting theorems (\thm{lifting}), stated more precisely as follows.

\begin{namedtheorem}{Theorem \ref*{thm:lifting}}{Precise version}
Suppose that for all partial Boolean functions $f$
on $n$ bits, we have
\begin{equation}\R_0^\mathrm{cc}(f\circ G_b)=\Omega(\R_0(f)/\polylog n)\end{equation}
with $b=O(\log n)$.
Then for all partial functions Boolean functions, we also have
\begin{equation}\R^\mathrm{cc}(f\circ G_{2b})=\Omega(\R(f)/\polylog n).\end{equation}
The $\polylog n$ loss in the $\R^\mathrm{cc}$ result
is only $\log n\log\log^2 n$ worse than the loss in the
$\R_0^\mathrm{cc}$ hypothesis.
\end{namedtheorem}

\begin{proof}
First we show that
for any function $f$ and positive integer $c$,
\begin{equation}\label{eq:find}
\R^\mathrm{cc}(f\circ G_{2b})=\Omega\left(
    \frac{\R^\mathrm{cc}(f^{\oplus c}_{\ind}\circ G_{2b})}
    {c\log c}\right).
\end{equation}
To see this, note that
we can solve $f^{\oplus c}_{\ind}\circ G_{2b}$
by solving the $c$ copies of $f\circ G_{2b}$
and then examining the appropriate cell of the array.
This uses $c\R^\mathrm{cc}(f\circ G_{2b})$ bits of communication,
times $O(\log c)$ since we must amplify the randomized
protocol to an error of $O(1/c)$.

Next, using \eq{find} and \lem{communication_sab} on
$\R^\mathrm{cc}(f^{\oplus c}_{\ind}\circ G_{2b})$, we get
\begin{equation}\label{eq:Rcc}
\R^\mathrm{cc}(f\circ G_{2b})=\Omega\left(
    \frac{\R^\mathrm{cc}(f^{\oplus c}_{\ind}\circ G_{2b})}
    {c\log c}\right)
=\Omega\left(
    \frac{\R_0^\mathrm{cc}((f^{\oplus c}_{\ind})_\usab
        \circ G^\prime_{2b})}
            {c\log c\log n\log\log n}\right).
\end{equation}
From here we want to use the assumed lifting theorem
for $\R_0$. However, there is a technicality: the gadget
$G^\prime_{2b}$ is not the standard index gadget, and
the function $(f^{\oplus c}_{\ind})_\usab$
does not have Boolean
alphabet. To remedy this, we use two bits to represent
each of the symbols $\{0,1,*,\dagger\}$.
Using this representation, we define a new function
$(f^{\oplus c}_{\ind})_\usab^{\bin}$ on twice as many bits.

We now compare
$(f^{\oplus c}_{\ind})_\usab^{\bin}\circ G_b$
to $(f^{\oplus c}_{\ind})_\usab\circ G^\prime_{2b}$.
Note that the former uses two pointers of size $b$ to index
two bits, while the latter uses one pointer of size $2b$
to index one symbol in $\{0,1,*,\dagger\}$
(which is equivalent to two bits).
It's not hard to see that the former function
is equivalent to the latter function restricted to a promise.
This means the communication complexity of the former
is smaller, and hence
\begin{equation}\label{eq:usab1}
    \R_0^\mathrm{cc}((f^{\oplus c}_{\ind})_\usab
        \circ G^\prime_{2b})
=\Omega({\R_0^\mathrm{cc}((f^{\oplus c}_{\ind})_\usab^{\bin}
        \circ G_b)}).
\end{equation}
We are now ready to use the assumed lifting theorem for $\R_0$.
To be more precise, let's suppose a lifting result
that states $\R_0^\mathrm{cc}(f\circ G_b)=\Omega(\R_0(f)/\ell(n))$
for some function $\ell(n)$. Thus
\begin{equation}\label{eq:usab2}
    {\R_0^\mathrm{cc}((f^{\oplus c}_{\ind})_\usab^{\bin}\circ G_b)}
    =\Omega({\R_0((f^{\oplus c}_{\ind})_\usab^{\bin})}/
            {\ell(n)}).
\end{equation}
We note that
\begin{equation}\label{eq:usab3}
\R_0((f^{\oplus c}_{\ind})_\usab^{\bin})
    =\Omega(\R_0((f^{\oplus c}_{\ind})_\usab))
    =\Omega(\RS_{\mathrm{u}}(f^{\oplus c}_{\ind})).\end{equation}
Setting $c=1.1\log\R(f)$, we have
$\RS_{\mathrm{u}}(f^{\oplus c}_{\ind})=\Omega(c\R(f))$ by \lem{index}. Combining this with \eq{usab1}, \eq{usab2}, and \eq{usab3}, we get
\begin{equation}
    \R_0^\mathrm{cc}((f^{\oplus c}_{\ind})_\usab
        \circ G^\prime_{2b})
=\Omega(c\R(f)/\ell(n)).
\end{equation}
Combining this with \eq{Rcc} yields
\begin{equation}
\R^\mathrm{cc}(f\circ G_{2b})=\Omega\left(
    \frac{c\R(f)}{\ell(n)c\log c\log n\log\log n}\right)
    =\Omega\left(
    \frac{\R(f)}{\ell(n)\log n\log\log^2 n}\right).
\end{equation}
This gives the desired lifting theorem for bounded-error randomized communication with
$\polylog n$ loss that is at most $\log n\log\log^2 n$ worse than the loss in the
assumed $\R_0^\mathrm{cc}$ lifting theorem.
\end{proof}

\section{Comparison with other lower bound methods}
\label{sec:comparison}

\setlength{\intextsep}{0pt}%
\setlength{\columnsep}{18pt}%
\begin{wrapfigure}{r}{0.24\textwidth}
\vspace{-0.25em}
\centering
   \begin{tikzpicture}[x=0.7cm,y=1.25cm]

     \node (R) at(2,2){$\R$};
     \node (RS) at(0,1){$\RS$};
     \node (prt) at(2,1){$\prt$};
     \node (RC) at(1,0){$\RC$};
     \node (Q) at(4,1){$\Q$};
     \node (adeg) at(3,0){$\adeg$};

     \path[-] (RS) edge (R);
     \path[-] (prt) edge  (R);
     \path[-] (Q) edge (R);
     \path[-] (RC) edge (RS);
     \path[-] (RC) edge (prt);
     \path[-] (adeg) edge (Q);
     \path[-] (adeg) edge (prt);
   \end{tikzpicture}
\vspace{-.5em}
    \caption{Lower bounds on $\R(f)$.\label{fig:lower}}
\end{wrapfigure}
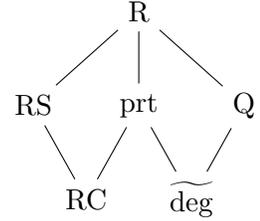

In this section we compare $\RS(f)$ with other lower bound techniques for bounded-error randomized query complexity. 
\fig{lower} shows the two most powerful lower bound techniques for $\R(f)$, the partition bound ($\prt(f)$) and quantum query complexity ($\Q(f)$), which subsume all other general lower bound techniques.
The partition bound and quantum query complexity are incomparable, since there are functions for which the partition bound is larger, e.g., the $\OR$ function, and functions for which quantum query complexity is larger \cite{AKK15}. 
Another common lower bound measure, approximate polynomial degree ($\adeg$) is smaller than both.

Randomized sabotage complexity ($\RS$) can be much larger than the partition bound and quantum query complexity as we now show.
We also show that randomized sabotage complexity is always as large as randomized certificate complexity ($\RC$), which itself is larger than block sensitivity, another common lower bound technique. 
Lastly, we also show that $\R_0(f) = O(\RS(f)^2 \log \RS(f))$, showing that $\RS$ is a quadratically tight lower bound, even for zero-error randomized query complexity.

\subsection{Partition bound and quantum query complexity}

We start by showing the superiority of randomized sabotage complexity against the two best lower bounds for $\R(f)$. 
Informally, what we show is that any separation between $\R(f)$ and a lower bound measure like $\Q(f)$, $\prt(f)$, or $\adeg(f)$ readily gives a similar separation between $\RS(f)$ and the same measure.

\begin{theorem}
\label{thm:comparison}
There exist total functions $f$ and $g$ such that $\RS(f) \geq \prt(f)^{2-o(1)}$ and $\RS(g) = \tOmega(\Q(g)^{2.5})$. There also exists a total function $h$ with $\RS(h) \geq \adeg(h)^{4-o(1)}$.
\end{theorem}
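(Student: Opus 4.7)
The plan is to leverage \lem{index} together with known separations between $\R$ and each of the three target measures. For each $M \in \{\prt, \Q, \adeg\}$ with target exponent $k \in \{2, 2.5, 4\}$, I would pick a function $g$ witnessing $\R(g) \geq M(g)^{k-o(1)}$ from prior work on cheat-sheet based separations, and form $F \coloneqq \Ind_m \circ g$ with $m = \Theta(\R(g)^{1.1})$. Then \lem{index} gives $\RS(F) = \Omega(\R(g)\log m)$ directly, converting the $\R$ lower bound on $g$ into an $\RS$ lower bound on $F$.

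The remaining task is to upper bound $M(F)$. Since $M(\Ind_m) \leq O(\log m)$ for all three measures and each of $\Q$, $\adeg$, $\prt$ is known to compose sub-multiplicatively with a Boolean inner function, one gets $M(F) \leq O(\polylog(\R(g)) \cdot M(g))$. Combining:
\[
\frac{\RS(F)}{M(F)^k} \;\geq\; \Omega\!\left(\frac{\R(g)\log m}{(\log m)^k M(g)^k}\right) \;\geq\; \frac{1}{\polylog(M(F))},
\]
where the last step uses $\R(g) \geq M(g)^{k-o(1)}$ and $M(F) = \Omega(M(g))$. The polylog factor is absorbed into $M(F)^{o(1)}$, yielding $\RS(F) \geq M(F)^{k-o(1)}$ (or $\tOmega(M(F)^{2.5})$ in the quantum case).

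The main obstacle is verifying the composition upper bound $\prt(\Ind_m \circ g) \leq O(\prt(\Ind_m)\prt(g))$. The quantum case is immediate from the perfect composition theorem $\Q(f \circ g) = \Theta(\Q(f)\Q(g))$, and the approximate-degree case follows from Sherstov's composition theorem for Boolean inner functions. For $\prt$, one option is to cite an existing sub-multiplicative composition result for the partition bound; alternatively, one can construct an explicit feasible solution to the partition LP for $\Ind_m \circ g$ by first handling the $\log m$ index copies of $g$ (at cost $O(\prt(g)\log m)$) and then evaluating the one selected array copy (at cost $O(\prt(g))$). Either way the $\polylog(\R(g))$ losses remain subpolynomial in $M(F)$ and fit comfortably within the $o(1)$ slack in the exponent.
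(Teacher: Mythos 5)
Your proposal is correct and follows essentially the same route as the paper: take the known $\R$-versus-$\prt$/$\Q$/$\adeg$ separations from the cheat-sheet constructions, compose the index gadget on top, and invoke \lem{index} to lift the $\R$ lower bound to an $\RS$ lower bound while the measures $\prt$, $\Q$, $\adeg$ grow only polylogarithmically. The paper simply asserts this last point without spelling out the composition upper bounds that you sketch, and additionally remarks that composing with $\Ind$ is not even necessary since the cheat-sheet functions themselves already satisfy $\RS(f_{\CS})=\tOmega(\R(f))$ by an argument akin to \lem{index}.
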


\begin{proof}
These separations were shown with $\R(f)$ in place of
$\RS(f)$ in \cite{ABK15} and \cite{AKK15}.
To get a lower bound on $\RS$, we can simply compose
$\Ind$ with these functions and apply \lem{index}.
This increases $\RS$ to be the same as $\R$ (up to logarithmic
factors), but it does not increase $\prt$, $\adeg$, or $\Q$
more than logarithmically, so the desired separations follow.
\end{proof}

As it turns out, we didn't even need to compose $\Ind$ with
these functions. It suffices to observe that they all use
the cheat sheet construction, and that an argument
similar to the proof of \lem{index}
implies that $\RS(f_{\CS})=\tOmega(\R(f))$ for all $f$
(where $f_{\CS}$ denotes the cheat sheet version of $f$,
as defined in \cite{ABK15}). In particular,
cheat sheets can never be used to separate $\RS$ from $\R$
(by more than logarithmic factors).

\subsection{Randomized certificate complexity}

Finally, we also show that randomized sabotage complexity upper bounds randomized certificate complexity. To show this, we first define randomized certificate complexity.

Given a string $x$, a block is a set of bits of $x$
(that is, a subset of $\{1,2,\ldots,n\}$).
If $B$ is a block and $x$ is a string, we denote by $x^B$
the string given by flipping the bits specified by $B$
in the string $x$. If $x$ and $x^B$ are both in the domain
of a (possibly partial) function $f:\B^n\to\B$
and $f(x) \neq f(x^B)$, we say that $B$
is a sensitive block for $x$ with respect to $f$.

For a string $x$ in the domain $f$, the maximum number
of disjoint sensitive blocks of $x$ is called the
block sensitivity of $x$, denoted by $\bs_x(f)$.
The maximum of $\bs_x(f)$ over all $x$ in the domain
of $f$ is the block sensitivity of $f$, denoted by $\bs(f)$.

A fractionally disjoint set of sensitive blocks of $x$
is an assignment of non-negative weights to the sensitive blocks
of $x$ such that for all $i\in\{1,2,\ldots,n\}$,
the sum of the weights of blocks containing
$i$ is at most $1$. The maximum total weight
of any fractionally disjoint set of sensitive blocks
is called the fractional block sensitivity of $x$.
This is also sometimes called the randomized certificate
complexity of $x$, and is denoted by $\RC_x(f)$ \cite{Aar08,Tal13,GSS16}.
The maximum of this over all $x$ in the domain of $f$
is $\RC(f)$ the randomized certificate complexity of $f$.

Aaronson \cite{Aar08} observed that $\bs_x(f)\leq\RC_x(f)\leq\C_x(f)$. 
We therefore have
\begin{equation}
    \bs(f)\leq\RC(f)\leq\C(f)\leq\R_0(f)\leq\D(f).
\end{equation}
The measure $\RC(f)$ is also a lower bound for $\R(f)$;
indeed, from arguments in \cite{Aar08} it follows that
$\R_\epsilon(f)\geq\RC(f)/(1-2\epsilon)$,
so $\R(f)\geq\RC(f)/3$.

\begin{theorem}\label{thm:RC}
Let $f:\B^n\to\B$ be a partial function. Then
$\RS(f)\geq\RC(f)/4$.
\end{theorem}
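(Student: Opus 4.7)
The plan is to convert a near-optimal zero-error algorithm $A$ for $f_\sab$ into a bounded-error algorithm $A'$ with a worst-case query budget, use a coupling to extract a fractional vertex cover of the sensitive blocks of a hard input $x$, and then dualize against the packing LP that defines $\RC_x(f)$.

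First I would fix an input $x\in\Dom(f)$ achieving $\RC_x(f)=\RC(f)$, and apply \lem{markov} with $\delta=1/2$ to an algorithm $A$ witnessing $\R_0(f_\sab)=\RS(f)$. This produces an algorithm $A'$ that makes at most $2\RS(f)$ queries on every input and, on every valid input in $Q_f$, terminates with a correct output with probability at least $1/2$. The key observation is that since $A$ is zero-error, whenever it emits an output it must have queried a $*$ or $\dagger$; otherwise its transcript would be simultaneously consistent with inputs from both $P_f$ and $P_f^\dagger$, contradicting zero error. Consequently, on every valid input $A'$ queries a $*/\dagger$-position with probability at least $1/2$.

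Next I would run a coupling argument. For every sensitive block $B$ of $x$, the partial assignment $p_B$ obtained from $x$ by replacing the bits of $B$ with $*$ lies in $P_f$, since $p_B$ is consistent with both $x$ and $x^B$, which have opposite $f$-values. Couple $A'$'s executions on $x$ and on $p_B$ using identical random coins; they act identically up to the first query into $B$, so the event ``$A'$ queries some position in $B$'' occurs under exactly the same coin flips in both runs. Since on $p_B$ this event has probability at least $1/2$ by the previous step, it has probability at least $1/2$ on $x$ as well. Setting $\pi_j:=\Pr[A'\text{ queries }j\text{ when run on }x]$, a union bound yields $\sum_{j\in B}\pi_j\geq 1/2$ for every sensitive block $B$ of $x$.

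Finally, by LP duality applied to the packing LP that defines $\RC_x(f)$, the quantity $\RC_x(f)$ equals the minimum of $\sum_j v_j$ over non-negative $v$ satisfying $\sum_{j\in B}v_j\geq 1$ for every sensitive block $B$ of $x$. The vector $(2\pi_j)_j$ is feasible for this dual by the previous step, so $\sum_j\pi_j\geq\RC_x(f)/2=\RC(f)/2$. On the other hand, $\sum_j\pi_j$ is at most the expected number of queries $A'$ makes on $x$, which is at most its worst-case bound $2\RS(f)$. Chaining the two bounds gives $\RS(f)\geq\RC(f)/4$. The delicate step, and the one I would be most careful about, is the coupling: one has to transfer a query guarantee from $p_B$ (a valid input on which $A'$ has a correctness promise) to the honest input $x$ (which is not a valid input to $f_\sab$, so $A'$ has no a priori promise there). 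The observation that makes this work is that only the \emph{first} query into $B$ matters, and by the coupling that query is performed on $x$ if and only if it is performed on $p_B$.
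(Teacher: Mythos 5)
Your proof is correct and is essentially the paper's argument in different clothing: Markov truncation to a $2\RS(f)$-query algorithm, transferring the star-finding guarantee from the sabotaged inputs $p_B$ to query probabilities on $x$ (your coupling is exactly the paper's observation that the execution on $x$ and on $y_B$ coincide until the first query into $B$), and then comparing against the fractional packing defining $\RC_x(f)$. Your explicit LP-duality step corresponds to the paper's direct weighted-sum computation with the packing weights $w_j$ (only weak duality is needed), and your marginals $\pi_j$ are just the unnormalized version of the paper's non-adaptive query distribution $q_i$.
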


\begin{proof}
Let $x$ be the input that maximizes $\RC_x(f)$.
Let $B_1,B_2,\dots B_m$ be all the (not necessarily disjoint)
sensitive blocks of $x$. For each $i \in \{1,2,\dots,m\}$,
let $y_i$ be the sabotaged input formed by replacing
block $B_i$ in $x$ with $*$ entries.
Finding a $*$ in an input chosen from
$Y=\{y_1,y_2,\dots,y_m\}$ is a special case of the sabotage
problem for $f$, so it can be done in $\RS(f)$ expected
queries.

We now use reasoning from \cite{Aar08} to turn this
into a non-adaptive algorithm.
By \lem{markov}, after $\lfloor 2\RS(f)\rfloor$ queries,
we find a $*$ with probability at least $1/2$.
For each $t$ between $1$ and $T=\lfloor2\RS(f)\rfloor$,
let $p_t$ be the probability that the
adaptive algorithm finds a $*$ on query $t$,
conditioned on the previous queries not finding a $*$.
Then we have
\begin{equation}p_1+p_2+\dots+p_T\geq \frac{1}{2}.\end{equation}
If we pick $t\in\{1,2,\dots,T\}$ uniformly and simulate
query $t$ of the adaptive algorithm (which
is possible since we know $x$ and are assuming
the previous $t-1$ queries did not find a $*$),
we must find
a $*$ with probability at least
$1/(2T) \geq 1/(4\RS(f))$.
This is a non-adaptive algorithm for finding a $*$,
so it is also a non-adaptive algorithm for finding
a difference from $x$.

Let the probability distribution over inputs bits obtained from this non-adaptive
algorithm be $(q_1,q_2,\dots,q_n)$, so that the algorithm
queries bit $i$ with probability $q_i$.
We have $\sum_{i=1}^n q_i=1$ and for each sensitive block $B_j$, we have
$\sum_{i\in B_j} q_i\geq 1/(4\RS(f))$.

For each sensitive block $B_j$, let $w_j$ be the weight of $B_j$ under the maximum fractional set of disjoint
blocks. Then $\sum_{j=1}^m w_j=\RC(f)$ and
for each bit $i$, we have $\sum_{j:i\in B_j} w_j\leq 1$.
We then have
\[\frac{\RC(f)}{4\RS(f)}
=\sum_{j=1}^m w_j\cdot\frac{1}{4\RS(f)}
\le \sum_{j=1}^m w_j\sum_{i\in B_j} q_i
=\sum_{i=1}^n q_i\sum_{j:i\in B_j} w_j
\leq \sum_{i=1}^n q_i\cdot 1
\leq 1.\]
Hence $\RS(f)\geq\RC(f)/4$.
\end{proof}

\subsection{Zero-error randomized query complexity}

\begin{theorem}\label{thm:rootR0}
Let $f:\B^n\to\B$ be a total function. Then $\R_0(f) = O(\RS(f)^2 \log \RS(f))$ or alternately, 
$\RS(f)=\Omega(\sqrt{\R_0(f)/\log\R_0(f)})$.
\end{theorem}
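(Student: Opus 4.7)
I will construct a zero-error randomized algorithm $B$ for $f$ of expected cost $O(\RS(f)^2 \log \RS(f))$, using the zero-error sabotage algorithm $A$ (of expected cost $\RS(f)$) as a subroutine. The algorithm $B$ maintains a partial assignment $p$ consistent with the unknown input $x$, initially $p = *^n$. In each round, while $p$ is ambiguous, $B$ runs (a boosted version of) $A$ on $p$; since $p$ is fully known to $B$, simulating $A$'s queries to $p$ costs \emph{no} real oracle queries. Once $A$ halts and reports a star position $j$, $B$ queries $x_j$ in the real oracle and updates $p_j := x_j$. When $p$ becomes unambiguous, $B$ outputs the unique consistent $f$-value.

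A key preliminary bound is $\C(f) = O(\RS(f)^2)$ for total $f$: by \thm{RC}, $\RC(f) \leq 4\RS(f)$; combined with the standard relations $\C(f) \leq s(f)\cdot \bs(f) \leq \bs(f)^2 \leq \RC(f)^2$ for total Boolean functions (Nisan's theorem plus $s(f) \leq \bs(f) \leq \RC(f)$), we obtain $\C(f) = O(\RS(f)^2)$. Thus any minimum certificate of $x$ has size at most $O(\RS(f)^2)$. Since $B$ halts once the queried bits $S_t$ form a certificate for $x$, and each round adds one real query, I need to show only $O(\RS(f)^2 \log \RS(f))$ rounds are needed; the extra $\log \RS(f)$ will come from amplifying the sabotage subroutine.

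To argue per-round progress, I would use a potential argument. Fix an optimal fractional disjoint weighting $\{w_B\}$ of the sensitive blocks of $x$ realizing $\RC(x) \leq 4\RS(f)$, and set $\Phi_t := \sum_{B : B \cap S_t = \emptyset} w_B$. Then $\Phi_0 \leq 4\RS(f)$, and when $\Phi_t = 0$, every weighted sensitive block has been hit, forcing $p_t$ to be unambiguous (up to slightly enlarging the weighting to cover every sensitive block of $x$). The per-round goal is $\mathbb{E}[\Phi_{t+1}] \leq \Phi_t(1 - 1/O(\RS(f)))$. To get this, for any still-active block $B$ I consider the hypothetical sabotaged input $p_B$ formed by sabotaging $B$ inside $x$: by the zero-error guarantee, $A$ on $p_B$ must return a star in $B$. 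Coupling $A$'s execution on the known $p$ with its execution on $p_B$ via shared randomness transfers the star-in-$B$ property to $p$ with constant probability, and amplifying $A$'s reliability to $1 - 1/\poly(\RS(f))$ (costing $O(\log \RS(f))$ in simulated queries but no real queries) makes this precise.

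The main obstacle is making the coupling rigorous: $A$'s trajectories on $p$ and $p_B$ diverge as soon as a query lands on an unqueried position outside $B$, where $p$ has a $*$ while $p_B$ has the true bit $x_i$. Either bounding this divergence carefully---or averaging over all active blocks to show that $A$ hits \emph{some} active block with good probability per round, leveraging the budget $\RC(x) = O(\RS(f))$ and the certificate size bound $\C(f) = O(\RS(f)^2)$---is the key technical step, and it is where the final $O(\RS(f)^2 \log \RS(f))$ accounting precisely emerges.
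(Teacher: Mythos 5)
There is a genuine gap, and it sits exactly where you flag it: the per-round progress claim for your potential $\Phi_t$. The coupling you propose compares $A$'s run on the current partial assignment $p$ (which has a $*$ at \emph{every} unqueried position) with its run on $p_B$ (which has $*$'s only inside the block $B$). These two inputs disagree at essentially all unqueried positions outside $B$, so the executions diverge at the very first query that lands outside $B$ --- which is typically the first query $A$ makes, since $A$'s only job is to find a $*$ and $p$ is full of them. Nothing is transferred ``with constant probability'': a perfectly legitimate zero-error algorithm for $f_\sab$, run on a heavily-starred $p$, may return a star that lies in no sensitive block of $x$ at all, so a round of your algorithm $B$ can make zero progress, and no amount of amplification of $A$ fixes this (amplification reduces the probability that $A$ fails to find \emph{some} star, not the probability that it finds a \emph{useful} one). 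A secondary accounting issue: to conclude that $p$ is unambiguous you need the queried set to hit \emph{every} sensitive block of $x$, not just those carrying weight in an optimal fractional packing; ``slightly enlarging the weighting to cover every sensitive block'' destroys the bound $\Phi_0\le 4\RS(f)$, because the number of minimal sensitive blocks can be as large as $\RC(f)^{\bs(f)}$ (this is the \cite{KT13} bound), which is exponential.

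The paper's proof sidesteps both problems by turning the coupling around: it runs the sabotage algorithm on the \emph{actual} input $x$ (no stars), paying $2\RS(f)$ real queries per run (\lem{markov}). For a fixed sensitive block $b$ of $x$, the input obtained by starring only $b$ agrees with $x$ everywhere outside $b$, so the two executions coincide until a query lands in $b$; since on the starred input $A$ finds a $*$ (necessarily inside $b$) with probability at least $1/2$ within $2\RS(f)$ queries, the run on $x$ queries a bit of $b$ with probability at least $1/2$. Repeating $k=O(\bs(f)\log\RC(f))$ times and union-bounding over the at most $\RC(f)^{\bs(f)}$ minimal sensitive blocks guarantees that with constant probability every minimal sensitive block is hit, i.e.\ a certificate has been queried; \lem{repeat} then gives a zero-error algorithm with expected cost $O(\RS(f)\bs(f)\log\RC(f))=O(\RS(f)^2\log\RS(f))$ using \thm{RC}. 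If you want to salvage your adaptive, one-real-query-per-round scheme, you would need a new argument showing that the star reported by $A$ on the current $p$ lands in a still-active sensitive block of $x$ with probability $\Omega(1/\RS(f))$; as stated this is not true for arbitrary (even optimal) sabotage algorithms, so some additional mechanism is required.
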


\begin{proof}
Let $A$ be the $\RS(f)$ algorithm. The idea
will be to run $A$ on an input to $x$ for long enough
that we can ensure it queries a bit in every sensitive
block of $x$; this will mean $A$ found a certificate
for $x$. That will allow us to turn the algorithm
into a zero-error algorithm for $f$.

Let $x$ be any input, and let $b$ be a sensitive block of $x$.
If we replace the bits of $x$ specified by $b$ with stars,
then we can find a $*$ with probability $1/2$
by running $A$ for $2\RS(f)$ queries
by \lem{markov}. This means that if we run $A$ on $x$
for $2\RS(f)$ queries,
it has at least $1/2$ probability of querying a bit
in any given sensitive block of $x$. If we repeat this $k$ times,
we get a $2k\RS(f)$ query algorithm that queries a bit
in any given sensitive block of $x$ with probability at least
$1-2^{-k}$.

Now, by \cite{KT13}, the number of minimal sensitive blocks in $x$
is at most $\RC(f)^{\bs(f)}$ for a total function $f$.
Our probability of querying a bit in all of these sensitive blocks
is at least $1-2^{-k}\RC(f)^{\bs(f)}$ by the union bound.
When $k\geq 1+\bs(f)\log_2\RC(f)$, this is at least $1/2$.
Since a bit from every sensitive block is a certificate,
by \lem{repeat}, we can turn this into a zero-error
randomized algorithm with expected query complexity
at most $4(1+\bs(f)\log_2\RC(f))\RS(f)$, which gives 
$\R_0(f) = O(\RS(f)\bs(f)\log \RC(f))$. 
Since $\bs(f)\leq\RC(f)=O(\RS(f))$ by \thm{RC}, we have
$\R_0(f)=O(\RS(f)^2\log \RS(f))$,
or
$\RS(f)=\Omega(\sqrt{\R_0(f)/\log\R_0(f)})$.
\end{proof}

\section{Deterministic sabotage complexity}
\label{sec:other}

Finally we look at the deterministic analogue of randomized sabotage complexity. 
It turns out that deterministic sabotage complexity (as defined in \defn{RS}) is exactly the same as deterministic query complexity for all (partial) functions. 
Since we already know perfect composition and direct sum results for deterministic query complexity, it is unclear if deterministic sabotage complexity has any applications.

\begin{theorem}\label{thm:DS}
Let $f:\B^n\to\B$ be a partial function. Then $\DS(f)=\D(f)$.
\end{theorem}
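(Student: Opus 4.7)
The plan is to prove $\DS(f) = \D(f)$ by two direct decision-tree transformations; no probabilistic machinery or composition results are needed.

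For the direction $\DS(f) \leq \D(f)$, I will take an optimal deterministic decision tree $T$ for $f$ of depth $\D(f)$ and augment each internal node with two extra branches: a $*$-response immediately outputs $0$ and a $\dagger$-response immediately outputs $1$, while $0/1$-responses follow the original branches of $T$. The resulting tree $T'$ has the same depth. Correctness on a sabotaged input $p \in P_f$ (so $f_\sab(p) = 0$) follows by contradiction: if $T'$ never encountered a $*$, it would trace a $0/1$-only path that is consistent with both a $0$-input $x$ and a $1$-input $y$ of $f$ (which exist by $p \in P_f$). Running $T$ itself on $x$ and on $y$ would then follow exactly this path to the same leaf, whose label would simultaneously have to equal $f(x) = 0$ and $f(y) = 1$. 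The case $p \in P_f^\dagger$ is symmetric.

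For the reverse direction $\D(f) \leq \DS(f)$, I will take an optimal decision tree $T$ for $f_\sab$ and form $T'$ by discarding all $*$- and $\dagger$-branches, keeping only the $0/1$ branches, so that $T'$ has depth at most $\DS(f)$. For each leaf $\ell$ of $T'$, let $S_\ell \subseteq \Dom(f)$ be the set of inputs consistent with the root-to-$\ell$ path. I claim $f$ is constant on $S_\ell$; if so, I label $\ell$ with this common value (or arbitrarily when $S_\ell = \emptyset$). Otherwise there exist $y, z \in S_\ell$ with $f(y) = 0 \neq 1 = f(z)$; since $y, z$ agree on every queried coordinate, they must disagree on some unqueried coordinate. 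I then form the partial assignment $p$ matching the queried values, writing $*$ at every coordinate where $y$ and $z$ disagree, and using the common value elsewhere. By construction $p \in P_f$ (witnessed by $y, z$), so $T$ on $p$ must output $0$; but replacing every $*$ by $\dagger$ produces $p' \in P_f^\dagger$ on which $T$ must output $1$. Both $p$ and $p'$ reach the very same leaf of $T$ (using only $0/1$-responses along the path to $\ell$), contradicting the determinism of $T$'s output.

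The only real obstacle is the bookkeeping in the second direction: verifying that the synthesized $p$ genuinely lies in $P_f$ — which requires producing an unqueried coordinate of disagreement (forced by $y \neq z$ together with consistency on queried bits) — and that $p, p'$ truly traverse the same path in $T$ as in $T'$ (true because no $*$ or $\dagger$ lies on the queried coordinates). The guiding insight is that a deterministic sabotage tree is doubly constrained: on any sabotaged input it must certify a $*$ or $\dagger$, yet its shape on $0/1$-responses alone must already distinguish $0$-inputs from $1$-inputs of $f$ at each leaf, so its depth cannot undercut that of an optimal tree for $f$.
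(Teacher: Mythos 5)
Your proof is correct and follows essentially the same route as the paper's: the first direction is the observation that any depth-$\D(f)$ tree for $f$ must hit a $*$ or $\dagger$ on a sabotaged input (else it would answer identically on a consistent $0$-input and $1$-input), and the second direction is the paper's adversary argument in constructive form, synthesizing an asterisk/obelisk pair of sabotaged inputs that traverse the same $0/1$-only path of the sabotage tree. The only cosmetic difference is that you prune the sabotage tree and relabel its leaves (placing $*$ only at the disagreement coordinates) rather than phrasing it contrapositively as the paper does, which fills all unqueried positions with $*$ or $\dagger$; both are the same argument.
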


\begin{proof}
For any function $\DS(f) \leq D(f)$ since a deterministic algorithm that correctly computes $f$ must find a $*$ or $\dag$ when run on a sabotaged input, otherwise its output is independent of how the sabotaged bits are filled in. 

To show the other direction, let $\D(f) = k$. This means for every $k-1$ query algorithm, there are two inputs $x$ and $y$ with $f(x) \neq f(y)$, such that they have the same answers to the queries made by the algorithm. If this is not the case then this algorithm computes $f(x)$, contradicting the fact that $\D(f)=k$. Thus if there is a deterministic algorithm for $f_\sab$ that makes $k-1$ queries, there exist two inputs $x$ and $y$ with $f(x) \neq f(y)$ that have the same answers to the queries made by the algorithm. If we fill in the rest of the inputs bits with either asterisks or obelisks, it is clear that this is a sabotaged input (since it can be completed to either $x$ or $y$), but the purported algorithm for $f_\sab$ cannot distinguish them. Hence $\D(f_\sab) \geq k$, which means $\DS(f) \geq \D(f)$.
\end{proof}

\section*{Acknowledgements}
We thank Mika G\"o\"os for finding an error in an earlier proof of \thm{thresh_sum}. We also thank the anonymous referees of Theory of Computing for their comments.

This work is partially supported by ARO grant number W911NF-12-1-0486. This preprint is MIT-CTP \#4806.

\clearpage

\appendix
\section{Properties of randomized algorithms}
\label{app:properties}

We now provide proofs of the properties described in \sec{prelim}, which we restate for convenience.

\markov*

\begin{proof}
If $A$ does not terminate within $\lfloor T/\delta\rfloor$
queries, it must use at least $\lfloor T/\delta\rfloor+1$
queries. Let's say this happens with probability $p$.
Then the expected number of queries used by $A$
is at least $p(\lfloor T/\delta\rfloor+1)$
(using the fact that the number of queries used is always
non-negative). We then get
$T\geq p(\lfloor T/\delta\rfloor+1)>p T/\delta$,
or $p<\delta$. Thus $A$ terminates within $T/\delta$
queries with probability greater than $1-\delta$.
\end{proof}

\amplification*

\begin{proof}
Let's repeat $A$ an odd number of times, say $2k+1$.
The error probability of $A^\prime$, the algorithm
that takes the majority vote of these runs, is
\begin{equation}\sum_{i=0}^k\binom{2k+1}{i}\epsilon^{2k+1-i}(1-\epsilon)^i
\leq \epsilon^{2k+1-k}(1-\epsilon)^k
    \sum_{i=0}^k\binom{2k+1}{i},\end{equation}
which is at most
\begin{equation}\epsilon^{k+1}(1-\epsilon)^k(2^{2k+1}/2)
=\epsilon^{k+1}(1-\epsilon)^k4^k
=\epsilon(4\epsilon(1-\epsilon))^k
=\epsilon(1-(1-2\epsilon)^2)^k.\end{equation}
It suffices to choose $k$ large enough so that
$\epsilon(1-(1-2\epsilon)^2)^k\leq\epsilon^\prime$, or
$\ln\epsilon+k\ln(1-(1-2\epsilon)^2)\leq\ln\epsilon^\prime$.
Using the inequality $\ln(1-x)\leq -x$, it suffices to choose
$k$ so that
$k(1-2\epsilon)^2\geq \ln(1/\epsilon^\prime)-\ln(1/\epsilon)$,
or
\begin{equation}k\geq \frac{\ln(1/\epsilon^\prime)-\ln(1/\epsilon)}
    {(1-2\epsilon)^2}.\end{equation}
In particular, we can choose
\begin{equation}k=\left\lceil\frac{\ln(1/\epsilon^\prime)-\ln(1/\epsilon)}
    {(1-2\epsilon)^2}\right\rceil\leq
    \frac{\ln(1/\epsilon^\prime)}
    {(1-2\epsilon)^2}+1-\frac{\ln(1/\epsilon)}
    {(1-2\epsilon)^2}.\end{equation}
It is not hard to check that
$3(1-2\epsilon)^2\leq 2\ln(1/\epsilon)$ for all
$\epsilon\in(0,1/2)$,
so we can choose $k$ to be at most
$\frac{\ln(1/\epsilon^\prime)}{(1-2\epsilon)^2}-1/2$.
This means $2k+1$ is at most
$\frac{2\ln(1/\epsilon^\prime)}{(1-2\epsilon)^2}$,
as desired.
\end{proof}

\Rexpprime*

\begin{proof}
Let $A$ be the $\bar{\R}_{\epsilon}(f)$ algorithm.
Let $B$ be the algorithm that runs $A$
for $\lfloor \bar{\R}_{\epsilon}(f)/\alpha\rfloor$
queries, and if $A$ doesn't terminate, outputs $0$ with
probability $1/2$ and $1$ with probability $1/2$. Then
by \lem{markov}, the error probability of $B$ is at most
$\alpha/2+(1-\alpha)\epsilon$. If we let
$\alpha = 2\delta/(1-2\epsilon)$, then the error probability of $B$ is at most
\begin{equation}\frac{\delta}{1-2\epsilon}
+\frac{(1-2\epsilon-2\delta)\epsilon}{1-2\epsilon}
=\epsilon+\delta,\end{equation}
as desired. The number of queries made by $B$ is at most $\lfloor \bar{\R}_{\epsilon}(f)/\alpha\rfloor \leq  \frac{1-2\epsilon}{2\delta}\bar{\R}_{\epsilon}(f) \leq \frac{1}{2\delta}\bar{\R}_{\epsilon}(f)$.
\end{proof}

\Rexp*

\begin{proof}
Repeating an algorithm with error $1/3$ three times
decreases its error to $7/27$, so in particular
$\bar{\R}_{7/27}(f)\leq 3\bar{\R}(f)$. Then using
\lem{Rexp_prime} with $\epsilon+\delta=1/3$ and
$\epsilon=7/27$, we get
\begin{equation}\R(f)\leq \frac{1-14/27}{2(1/3-7/27)}
    \bar{\R}_{7/27}(f)
    =\frac{13}{4}\bar{\R}_{7/27}(f)
    \leq\frac{39}{4}\bar{\R}(f)
    \leq 10\bar{\R}(f).\end{equation}

To deal with arbitrary $\epsilon$, we need to use
\lem{amplification}. It gives us
$\R_{\epsilon^\prime}(f)\leq
    \frac{2\ln(1/\epsilon^\prime)}{(1-2\epsilon)^2}
    \R_\epsilon(f)$.
When combined with \lem{Rexp_prime}, this gives
\begin{equation}\R_{\epsilon^\prime}(f)\leq
    \frac{1-2\epsilon^\prime}{\epsilon-\epsilon^\prime}
    \frac{\ln(1/\epsilon^\prime)}{(1-2\epsilon)^2}
    \bar{\R}_{\epsilon^\prime}(f).\end{equation}
Setting $\epsilon=(1+4\epsilon^\prime)/6$
(which is greater than $\epsilon^\prime$
if $\epsilon^\prime < 1/2$) gives
\begin{equation}\R_{\epsilon^\prime}(f)\leq
    \frac{27\ln(1/\epsilon^\prime)}{2(1-2\epsilon^\prime)^2}
    \bar{\R}_{\epsilon^\prime}(f)
\leq 14\frac{\ln(1/\epsilon^\prime)}{(1-2\epsilon^\prime)^2}
    \bar{\R}_{\epsilon^\prime}(f),\end{equation}
which gives the desired result (after exchanging $\epsilon$
and $\epsilon^\prime$).
\end{proof}

\repeat*

\begin{proof}
Let $A^\prime$ be the algorithm that runs $A$, checks if
it found a certificate, and repeats if it didn't.
Let $N_1$ be the random variable for the number of queries
used by $A^\prime$. We know that the maximum number of queries
$A^\prime$ ever uses is the input size; it follows
that $\mathbb{E}(N_1)$ converges and is at most the input size.

Let $M_1$ be the random variable for the number
of queries used by $A$ in the first iteration. Let $S_1$ be the
Bernoulli random variable for the event that $A$ fails to find
a certificate. Then $\mathbb{E}(M_1)=T$ and
$\mathbb{E}(S_1)=\epsilon$.
Let $N_2$ be the random variable for the number
of queries used by $A^\prime$ starting from the second
iteration (conditional on the first iteration failing).
Then
\begin{equation}N_1=M_1+S_1N_2,\end{equation}
so by linearity of expectation and independence,
\begin{equation}\mathbb{E}(N_1)=\mathbb{E}(M_1)+\mathbb{E}(S_1)\mathbb{E}(N_2)
                =T+\epsilon\mathbb{E}(N_2)
                \leq T+\epsilon\mathbb{E}(N_1).\end{equation}
This implies
\begin{equation}\mathbb{E}(N_1)\leq T/(1-\epsilon),\end{equation}
as desired.
\end{proof}

\block*

\begin{proof}
Let $p$ be the probability that $A$ queries an entry
on which $x$ differs from $y$ when it is run on $x$.
Let $q$ be the probability that
$A$ outputs an invalid output for $x$ given that it doesn't
query a difference from $y$. Let $r$ be the probability
that $A$ outputs an invalid output for $y$ given that it
doesn't query such a difference. Since one of these
events always happens,
we have $q+r\geq 1$. Note that $A$ errs with
probability at least $(1-p)q$ when run on $x$ and at least
$(1-p)r$ when run on $y$. This means that
$(1-p)q\leq \epsilon$ and $(1-p)r\leq \epsilon$.
Summing these gives $1-p\leq (1-p)(q+r)\leq 2\epsilon$,
so $p\geq 1-2\epsilon$, as desired.
\end{proof}

\section{Minimax theorem for bounded-error algorithms}
\label{app:yao}

We need the following version of Yao's minimax theorem for
$\bar{\R}_\epsilon(f)$. The proof is similar to other minimax theorems in the literature,
but we include it for completeness.

\begin{theorem}\label{thm:ReYao}
Let $f$ be a partial function and $\epsilon\geq 0$.
Then there exists a distribution $\mu$ over $\Dom(f)$ such that any
randomized algorithm $A$ that computes $f$ with error at most
$\epsilon$ on all $x\in\Dom(f)$ satisfies
$\mathbb{E}_{x\sim \mu} A(x)\geq \bar{\R}_\epsilon(f)$, 
where $A(x)$ is the expected number of queries made by $A$ on $x$.
\end{theorem}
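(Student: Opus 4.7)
The plan is a standard application of von Neumann's minimax theorem to a bilinear cost function, but with the error constraint built into the feasible set rather than into the objective.

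First, I would restrict attention to deterministic algorithms of depth at most $n$ (the input length), since any decision tree for an $n$-bit partial function can be truncated to have depth at most $n$ without changing its answers. There are then only finitely many deterministic algorithms in play, call this set $\mathcal{D}$. Let $\Delta(\mathcal{D})$ be the simplex of distributions over $\mathcal{D}$ (a randomized algorithm is a point $\lambda \in \Delta(\mathcal{D})$), and let $\Delta(\Dom(f))$ be the simplex of distributions over inputs. Define the feasible set
\begin{equation}
\mathcal{L}_\epsilon = \bigl\{\lambda \in \Delta(\mathcal{D}) : \Pr_{D\sim\lambda}[D(x) \neq f(x)] \le \epsilon \text{ for all } x \in \Dom(f)\bigr\},
\end{equation}
which is the intersection of $\Delta(\mathcal{D})$ with finitely many closed half-spaces, hence convex and compact. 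For each deterministic $D \in \mathcal{D}$ and input $x$, let $t_D(x)$ be the number of queries $D$ makes on $x$, and define the bilinear cost
\begin{equation}
F(\lambda, \mu) \;=\; \mathop{\mathbb{E}}_{x\sim\mu}\mathop{\mathbb{E}}_{D\sim\lambda}\bigl[t_D(x)\bigr].
\end{equation}

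Next, I would invoke von Neumann's (or Sion's) minimax theorem, which applies because $\mathcal{L}_\epsilon$ and $\Delta(\Dom(f))$ are both nonempty, convex, and compact, and $F$ is bilinear (hence continuous and concave-convex). This gives
\begin{equation}
\min_{\lambda \in \mathcal{L}_\epsilon} \; \max_{\mu \in \Delta(\Dom(f))} F(\lambda,\mu) \;=\; \max_{\mu \in \Delta(\Dom(f))} \; \min_{\lambda \in \mathcal{L}_\epsilon} F(\lambda,\mu).
\end{equation}
The left-hand side equals $\bar{\R}_\epsilon(f)$: the inner max over $\mu$ is attained at a point distribution, so it equals $\max_{x \in \Dom(f)} \mathbb{E}_{D\sim\lambda}[t_D(x)]$, and minimizing this over $\lambda \in \mathcal{L}_\epsilon$ is exactly the definition of $\bar{\R}_\epsilon(f)$. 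Hence the right-hand side also equals $\bar{\R}_\epsilon(f)$, and by compactness the outer max is attained at some $\mu^\star$, which is the distribution claimed in the theorem: every $\lambda \in \mathcal{L}_\epsilon$ satisfies $F(\lambda, \mu^\star) \geq \bar{\R}_\epsilon(f)$.

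The only subtle step is verifying the hypotheses of the minimax theorem, in particular that restricting to depth-$n$ decision trees is without loss of generality (so that $\mathcal{D}$ is finite and $\mathcal{L}_\epsilon$ compact). I would address this by noting that any randomized algorithm can be truncated at $n$ queries without loss, since after reading every bit the answer is determined; thus $\bar{\R}_\epsilon(f)$ is unchanged by the restriction. Assuming the statement $\bar{\R}_\epsilon(f) < \infty$ is meaningful at all, $\mathcal{L}_\epsilon$ is nonempty (containing, e.g., any deterministic algorithm that reads the entire input and outputs $f(x)$). Everything else is routine.
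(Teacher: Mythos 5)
Your proposal is correct and follows essentially the same route as the paper: both encode the worst-case error constraint into a compact convex subset of the simplex of randomized algorithms, take the bilinear expected-query cost against a distribution over $\Dom(f)$, and apply Sion's (von Neumann's) minimax theorem, identifying one side as $\bar{\R}_\epsilon(f)$ and extracting the hard distribution from the other. Your explicit truncation to depth-$n$ trees and the nonemptiness check are just slightly more careful statements of points the paper treats implicitly.
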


We note that \thm{ReYao} talks only about algorithms that successfully
compute $f$ (with error $\epsilon$) on all inputs, not just
those sampled from $\mu$. An alternative minimax theorem
where the error is with respect to the distribution $\mu$
can be found in \cite{Ver98}, although it loses constant
factors.

\begin{proof}
We think of a randomized algorithm as a probability vector
over deterministic algorithms; thus randomized algorithms lie
in $\mathbb{R}^N$, where $N$ is the number of deterministic
decision trees. In fact, the set $S$ of randomized algorithms
forms a simplex, which is a closed and bounded set.

Let $\mathrm{err}_{f,x}(A) \coloneqq \Pr[A(x)\neq f(x)]$ be
the probability of error of the randomized $A$ when run on $x$.
Then it is not hard to see that
$\mathrm{err}_{f,x}(A)$ is a continuous function of $A$.
Define $\mathrm{err}_f(A)\coloneqq \max_x \mathrm{err}_{f,x}(A)$.
Then $\mathrm{err}_f(A)$ is also the maximum of a finite number
of continuous functions, so it is continuous.

Next consider the set of algorithms
$S_\epsilon\coloneqq\{A\in S:\mathrm{err}_f(A)\leq \epsilon\}$.
Since $\mathrm{err}_f(A)$ is a continuous function and $S$ is closed
and bounded, it follows that $S_\epsilon$ is closed and bounded,
and hence compact. It is also easy to check that $S_\epsilon$
is convex. Let $P$ be the set of probability distributions
over $\Dom(f)$. Then $P$ is also compact and convex.
Finally, consider the function
$\alpha(A,\mu)\coloneqq\mathbb{E}_{x\sim\mu}\mathbb{E}_{D\sim A} D(x)$
that accepts a randomized algorithm and a distribution as input, and returns
the expected number of queries the algorithm makes on that
distribution. It is not hard to see that $\alpha$ is a continuous
function in both variables. In fact, $\alpha$ is linear in both variables 
by the linearity of expectation.

Since $S_\epsilon$ and $P$ are compact and convex subsets of the finite-dimensional spaces $\mathbb{R}^N$ and $\mathbb{R}^{\Dom(f)}$ respectively, and the objective function $\alpha(\cdot,\cdot)$ is linear, we can apply Sion's minimax
theorem (see \cite{Sio58} or \cite[Theorem 1.12]{Wat16}) to get
\begin{equation}
\max_{\mu\in P}\min_{A\in S_\epsilon} \alpha(A,\mu)
=\min_{A\in S_\epsilon}\max_{\mu\in P} \alpha(A,\mu).
\end{equation}
The right hand side is simply the worst-case expected query complexity
of any algorithm computing $f$ with error at most $\epsilon$,
which is $\bar{\R}_\epsilon(f)$ by definition. The
left hand side gives us a distribution $\mu$ such that for
any algorithm $A$ that makes error at most $\epsilon$ on all
$x\in\Dom(f)$, the expected number of queries $A$ makes on $\mu$
is at least $\bar{\R}_\epsilon(f)$.
\end{proof}

\bibliographystyle{alphaurl}
\phantomsection\addcontentsline{toc}{section}{References} 
\newcommand{\eprint}[1]{\small \upshape \tt \href{http://arxiv.org/abs/#1}{#1}}
\let\oldpath\path
\renewcommand{\path}[1]{\small\oldpath{#1}}
\bibliography{sabotage}

\end{document}